\let\eps=\varepsilon
\newcommand{\caB}{{\mathcal B}}
\newcommand{\caC}{{\mathcal C}}
\newcommand{\caE}{{\mathcal E}}
\newcommand{\caF}{{\mathcal F}}
\newcommand{\caO}{{\mathcal O}}
\newcommand{\caP}{{\mathcal P}}
\newcommand{\caS}{{\mathcal S}}
\newcommand{\bbE}{{\mathbb E}}
\newcommand{\bbN}{{\mathbb N}}
\newcommand{\bbP}{{\mathbb P}}
\newcommand{\bbR}{{\mathbb R}}
\newcommand{\bbT}{{\mathbb T}}
\newcommand{\bbZ}{{\mathbb Z}}
\newcommand{\dd}{\mathrm{d}}
\newcommand{\beq}{\begin{equation}}
\newcommand{\eeq}{\end{equation}}
\newcommand{\beqn}{\begin{equation*}}
\newcommand{\eeqn}{\end{equation*}}
\newcommand{\baq}{ \begin{eqnarray} }
\newcommand{\eaq}{ \end{eqnarray} }
\newcommand{\und}[1]{\underline{#1}}
\newcommand{\ampl}{\mathrm{Amp}}
\newcommand{\oln}{\overline{n}}
\newcommand{\tthree}{\bbT^3}
\newcommand{\ttwo}{\bbT^2}
\newcommand{\upm}{^{\mathrm{main}}}
\newcommand{\upo}{^{(1)}}
\newcommand{\upt}{^{(2)}}
\newcommand{\vari}[1]{\mathrm{Var}\left(#1\right)}
\newcommand{\bra}{\left\langle}
\newcommand{\ket}{\right\rangle}
\newcommand{\covari}[1]{\mathrm{Cov}\left(#1\right)}
\newcommand{\bigpi}{\Pi^{conn}_{n_1,n_2}}
\newtheorem{thm}{Theorem}
\newtheorem{cor}[thm]{Corollary}
\newtheorem{lma}[thm]{Lemma}
\newtheorem{defin}{Definition}
\newcommand{\dist}{\mathrm{dist}}
\begin{document}

\title[Self-averaging]{Dynamical self-averaging for a lattice Schr\"{o}dinger equation with weak random potential}

\author[M. Butz]{Maximilian Butz}
\address{Fakult\"at f\"ur Mathematik \\ Technische Universit\"at M\"unchen \\
Boltzmannstr.~3 \\ 
85748 Garching, Germany}
\email{maximilianjbutz@gmail.com}

\date{\today }

\begin{abstract}
We study the kinetic, weak coupling limit of the dynamics governed by a discrete random Schr\"{o}dinger operator on $\bbZ^3$. For sequences of $\ell^2\left(\bbZ^3\right)$-bounded initial states and convergent initial Wigner transform, we prove that the scaled Wigner transform converges to the solution of a linear Boltzmann equation in $L^r\left(\bbP\right)$for all $r>0$, thus considerably strengthening a previous result by Chen. The key ingredients for the proof are a finer classification of graphs in the expansion of the perturbed dynamics as well as a novel resolvent estimate for the unperturbed Schr\"{o}dinger operator. Under some additional assumption on the sequence of initial states we even prove almost sure convergence.
\end{abstract}

\maketitle

\section{Introduction}
\noindent{}To understand the motion of independent electrons in a background lattice with random impurities, 
Anderson \cite{anderson} proposed as simplified model a single particle Sch\"{o}dinger equation 
with random Hamiltonian
\begin{equation}
\label{hamiltonian}
H_\omega=-\frac{1}{2}\Delta+\lambda V_\omega.
\end{equation}
$H_\omega$ acts on $\ell^2\left(\bbZ^d\right)$, $\Delta$ denotes the nearest-neighbor lattice Laplacian
\begin{equation}
(\Delta\psi)(x)=-2d\psi(x)+\sum_{y:|y-x|=1}\psi\left(y\right),
\end{equation}
and $V$ is a random potential given by 
\begin{equation}
V_\omega(x)=\omega_x
\end{equation}
with $\left(\omega_x\right)_{x\in\bbZ^d}$ independent centered Gaussian random variables with $\bbE\left[\omega_x^2\right]=1$. The strength of the potential is controlled by a coupling constant $\lambda>0$. The goal is to understand the properties of the time evolution governed by
\begin{equation}
\label{schrodinger_equation_1}
i\frac{\dd}{\dd t}\phi_t=H_\omega\phi_t
\end{equation}
with prescribed initial conditions $\phi_0\in\ell^2\left(\bbZ^d\right)$, $\Vert\phi_0\Vert_{\ell^2}=1$. 

In our contribution we study the dynamics (\ref{schrodinger_equation_1}), in the kinetic limit, i.e. with a weak coupling constant $\lambda\ll1$ and on kinetic space and time scales. A very useful tool in this analysis is the Wigner transform $W[\phi_t](x,v)$ of the wave function $\phi_t$, (to be defined in Section \ref{main_result}) which takes values in phase space ($x\in\bbZ^d/2$, $v\in[0,1)^d=\bbT^d$). As the wave function scatters off the random potential $V_\omega$, $\phi_t$ and thus $W$ will be random as well. We expect a finite number of scattering events to occur on space and time scales of order $\lambda^{-2}$, thus define $X=\lambda^2 x$, $V=v$, $T=\lambda^2t$, and test the Wigner function against a function $J$ living on the kinetic phase space $\bbR^d\times\bbT^d$,
\begin{equation}
\label{intro_wtest}
\sum_{X\in(\lambda^2\bbZ/2)^d}\int_{\bbT^d}\dd V \overline{J(X,V)}W\left[\phi_{\lambda^{-2}T}\right]\left(\lambda^{-2}X,V\right),
\end{equation}
which is still random. For $d=3$, the expectation of this quantity, i.e. the \emph{disorder-averaged} Wigner transform, can be shown to converge to
\begin{equation}
\label{intro_mutest}
\int_{\bbR^3\times\bbT^3}\overline{J(X,V)}\mu_T(\dd X, \dd V)
\end{equation}
in the $\lambda\rightarrow0$ limit, $\mu_T$ the solution of a linear Boltzmann equation \cite{chen_rs}, by the graph expansion method that was originally developed to handle the continuous analogon \cite{erdyau}. Naturally, it would be very desirable to not only understand how the expectation of the Wigner transform behaves in the kinetic limit, but to also have a control on the typical deviations from this average. Ideally, one would hope for a law of large numbers-type, almost sure convergence statement. In this paper, we prove that in the kinetic scaling for the dynamics governed by (\ref{schrodinger_equation_1}), the Wigner transform in fact converges in higher mean, i.e.
\begin{equation}
\label{intro_highermeanconv}
\begin{split}
\lim_{\lambda\rightarrow0}&\bbE\left[\left|\sum_{X\in(\lambda^2\bbZ/2)^3}\int_{\bbT^3}\dd V \overline{J(X,V)}W\left[\phi_{\lambda^{-2}T}\right]\left(\lambda^{-2}X,V\right)\right.\right.\\
&\hspace{3.5cm}\left.\left.-\vphantom{\sum_{X\in(\lambda^2\bbZ/2)^3}\int_{\bbT^3}\dd V \overline{J(X,V)}W\left[\phi_{\lambda^{-2}T}\right]\left(\lambda^{-2}X,V\right)}\int_{\bbR^3\times\bbT^3}\overline{J(X,V)}\mu_T(\dd X, \dd V)\right|^{\displaystyle r}\right]=0
\end{split}
\end{equation}
for all $r>0$, implying convergence of (\ref{intro_wtest}) to (\ref{intro_mutest}) in probability. For WKB initial states, we even prove that 
\begin{equation}
\label{deviations}
\left|\sum_{X\in(\lambda^2\bbZ/2)^3}\int_{\bbT^3}\dd V \overline{J(X,V)}W\left[\phi_{\lambda^{-2}T}\right]\left(\lambda^{-2}X,V\right)-\int_{\bbR^3\times\bbT^3}\overline{J(X,V)}\mu_T(\dd X, \dd V)\right|
\end{equation}
vanishes almost surely as $\lambda\rightarrow0$ and uniformly for macroscopic times $T$ from compact intervals. In this sense, the time-evolved scaled Wigner transform converges to a deterministic limit under kinetic scaling, and the linear Boltzmann equation is a valid approximation for the rescaled Schr\"{o}dinger dynamics for almost every realization of the potential $V_\omega$. This phenomenon is usually referred to as \emph{self-averaging}. 

The convergence (\ref{intro_highermeanconv}) has previously been shown in \cite{chen_lp}, but under concentration of singularity assumptions on the initial data $\phi_0$ which are technical and generally hard to check for a particular initial condition. For related results on the classical Lorentz gas \cite{spohn_revmodphys,bbs}, self-averaging requires some randomness in the initial condition. Quantum mechanically, any wave function carries some intrinsic randomness. Differing from what was expected previously, our result shows that actually any given $\ell^2$-bounded initial state carries enough randomness to self-average on kinetic time and space scales, so (\ref{intro_highermeanconv}) holds in full generality.

In our proof, we follow the ideas of \cite{chen_lp} and use the technique of graph expansions of \cite{erdyau}. The occuring graphs are classified by certain structures, and for each class there are estimates that show that their contributions vanish in the kinetic limit, thus complementing the bounds that have so far been found for special cases in \cite{chen_lp}. To obtain those estimates, we have to derive a novel bound for certain resolvent integrals, which is complicated by the more intricate geometry of the level surfaces resulting from the dispersion relation of the lattice model.

\vspace{5mm}
A natural next step would be to derive a central limit theorem, i.e. to rescale the fluctuations in (\ref{deviations}) such that they stay $\caO(1)$ in the $\lambda\rightarrow0$ limit and to study their behavior on kinetic time scales. For the case of a random potential decorrelating in time \cite{clt}, these fluctuations have been shown to scale like $\lambda$, and, after rescaling, to converge to a solution of the linear Boltzmann equation that also governs the main term, however with random initial conditions.

\section{Main result}
\label{main_result}
From now on, we set $d=3$, but our results extend to $d>3$. For a fixed $\eta=\lambda^2$ let $\phi^{(\eta)}_t$, $t\geq0$, denote the solution of the Schr\"{o}dinger equation
\begin{equation}
\label{schrodinger_equation}
i\frac{\dd}{\dd t}\phi=H_\omega\phi
\end{equation}
started with a $\phi^{(\eta)}_0$ from a sequence $\left(\phi^{(\eta)}_0\right)_{\eta>0}$ of initial states in $\ell^2\left(\bbZ^3\right)$. As a physically minimal condition we impose
\begin{equation}
\label{boundedprob}
\sup_{\eta>0}\left\Vert\phi^{(\eta)}_0\right\Vert^2_{\ell^2\left(\bbZ^3\right)}\leq 1.
\end{equation}
For positive times, the quantum particle has interacted with the random potential, so $\phi^{(\eta)}_t$ is a random wave function depending on $\omega$, a dependence which will be notationally suppressed throughout this paper. 

For the Fourier transform $\ell^2\left(\bbZ^3\right)\rightarrow L^2\left(\tthree\right)$, with $\bbT^3=[0,1)^3$ denoting the three-dimensional unit torus, we will use the unitary extension of
\begin{equation}
\widehat{\psi}(k)=\sum_{x\in\bbZ^3}\psi(x)e^{-2\pi i k\cdot x}.
\end{equation}
One can rewrite the unperturbed Schr\"{o}dinger operator $H_0=-\frac{1}{2}\Delta$ in momentum space as
\begin{equation}
\widehat{H_0\psi}(k)=e(k)\widehat{\psi}(k)
\end{equation}
with the kinetic energy now a multiplication operator,
\begin{equation}
e(k)=3-\sum_{j=1}^3\cos{2\pi k_j}
\end{equation}
for $k=\left(k_1,k_2,k_3\right)\in\tthree$.

Accordingly, the Fourier transform $L^2\left(\bbR^3\right)\rightarrow L^2\left(\bbR^3\right)$ is the unitary extension of
\begin{equation}
\widehat{f}(k)=\int_{\bbR^3}\dd x f(x)e^{-2\pi i k\cdot x}.
\end{equation}
For Schwartz functions $J\in\caS\left(\bbR^3\times\bbT^3\right)$, $\widehat{J}$ will denote the Fourier transform in the first variable, and we use the shorthand $\widehat{J_\eta}(\xi,v)=\eta^{-3}\widehat{J}(\xi/\eta,v)$.

To study the kinetic limit of (\ref{schrodinger_equation}), we introduce the kinetic space, velocity and time scaling 
\begin{equation}
\label{scaling}
(X,V,T)=(\eta x, v, \eta t)
\end{equation}
and the $\eta$-scaled Wigner transform of $\phi^{(\eta)}_t$
\begin{equation}
W^{\eta}\left[\phi^{(\eta)}_t\right](X,V)=\sum_{\substack{y,z\in\bbZ^3\\y+z=2X/\eta}}\overline{\phi^{(\eta)}_t(y)}\phi^{(\eta)}_t(z)e^{2\pi i V\cdot(y-z)}
\end{equation}
which takes arguments $X\in(\eta\bbZ/2)^3$ and $V\in\tthree$.

The Wigner transform acts as a distribution on Schwartz functions $J\in\caS\left(\bbR^3\times\bbT^3\right)$ by
\begin{equation}
\label{defjw}
\begin{split}
\bra J,W^{\eta}\left[\phi^{(\eta)}_t\right]\ket&=\sum_{X\in(\eta\bbZ/2)^3}\int_{\tthree}\dd V \overline{J(X,V)}W^{\eta}\left[\phi^{(\eta)}_t\right](X,V)\\
&=\int_{\bbR^3\times\tthree}\dd\xi\dd v\overline{\widehat{J_\eta}(\xi,v)}\overline{\widehat{\phi^{(\eta)}_t}(v-\xi/2)}\widehat{\phi^{(\eta)}_t}(v+\xi/2).
\end{split}
\end{equation}
Note that $\widehat{\phi^{(\eta)}_t}$ is defined on $\tthree$ and the last line of (\ref{defjw}) only makes sense if we extend $\widehat{\phi^{(\eta)}_t}$ periodically to $\bbR^3$. While the Wigner transform is quadratic in its argument $\widehat{\phi^{(\eta)}_t}$, it is sometimes useful to have a bilinear Wigner transform of two different states, which is defined in Appendix \ref{app_basic}, Definition \ref{bilinear}. An introduction to lattice Wigner transforms consistent with our definition is given in \cite{luspo}. Whenever (\ref{boundedprob}) holds, we can apply a compactness argument (Alaoglu-Bourbaki theorem together with Lemma \ref{est_bilin}) and the Bochner-Schwartz theorem (as explained in \cite{luspo}) to see that along a subsequence of $\eta\rightarrow0$ (which will not be relabled), the scaled time zero Wigner functions weakly converge to a bounded positive Borel measure $\mu_0$ on $\bbR^3\times\bbT^3$,
\begin{equation}
\label{initialwignerconv}
\lim_{\eta\rightarrow0}\bra J,W^{\eta}\left[\phi^{(\eta)}_0\right]\ket=\int_{\bbR^3\times\bbT^3}\overline{J(X,V)}\mu_0(\dd X, \dd V)=\bra J,\mu_0\ket
\end{equation}
for all $J\in\caS\left(\bbR^3\times\bbT^3\right)$.

Surprisingly, already under the minimal assumption (\ref{boundedprob}), there is a convergence result for the expectation (i.e. the average over all realizations of the potential) of the scaled Wigner function for any positive macroscopic time $T=\eta t$.
\begin{thm}
\label{theorem_convergence_exp}
Let $\left(\phi^{(\eta)}_0\right)_{\eta>0}$ be a sequence of initial states such that (\ref{boundedprob}) and (\ref{initialwignerconv}) hold. Then for all macroscopic times $T>0$, and all test functions $J\in\caS\left(\bbR^3\times\bbT^3\right)$,
\begin{equation}
\lim_{\eta\rightarrow0}\bbE\left[\bra J,W^{\eta}\left[\phi^{(\eta)}_{T/\eta}\right]\ket\right]=\bra J,\mu_T\ket,
\end{equation}
where the measure $\mu_T$ is given as the weak solution of the linear Boltzmann equation
\begin{equation}
\label{boltzmann_eq}
\frac{\partial}{\partial T}\mu_T(X,V)=-\sin{(2\pi V)}\cdot\nabla_X\mu(X,V)+\int_{\tthree}\dd U\sigma(U,V)\left(\mu_T(X,U)-\mu_T(X,V)\right).
\end{equation}
In (\ref{boltzmann_eq}), $\sin(2\pi V)$ is a vector with components $\sin(2\pi V_j)$, $j=1,2,3$ and the collision kernel $\sigma$ equals
\begin{equation}
\sigma(U,V)=2\pi\delta\left(e(U)-e(V)\right).
\end{equation}
\end{thm}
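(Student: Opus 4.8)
The plan is to adapt the Feynman graph expansion of Erd\H{o}s and Yau \cite{erdyau}, in the form used by Chen \cite{chen_rs}, to the lattice dispersion $e(k)=3-\sum_j\cos 2\pi k_j$. First I would insert a truncated Duhamel expansion of the evolution: fixing a cutoff $N=N(\eta)\to\infty$ growing slowly as $\eta\to0$, write
\begin{equation*}
\phi^{(\eta)}_{T/\eta}=\sum_{n=0}^{N-1}\phi_n+\Psi_N,\qquad
\phi_n=(-i\lambda)^n\!\int e^{-is_0H_0}Ve^{-is_1H_0}V\cdots Ve^{-is_nH_0}\phi^{(\eta)}_0\,\dd s,
\end{equation*}
the integral running over the simplex $\{s_0,\dots,s_n\geq0:\sum_js_j=T/\eta\}$ and $\Psi_N$ the fully interacting remainder. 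Substituting into the second line of (\ref{defjw}) expresses $\bbE[\bra J,W^{\eta}[\phi^{(\eta)}_{T/\eta}]\ket]$ as a double sum over $0\leq n,n'<N$ of deterministic oscillatory momentum integrals against $\bbE$ of a product of $n+n'$ potential values, plus cross terms involving $\Psi_N$. Since the $\omega_x$ are i.i.d.\ centered Gaussian, Wick's theorem turns each such expectation into a sum over pairings of the $n+n'$ momentum-transfer vertices, and every pairing is recorded as a Feynman graph whose type --- ladder, crossing, nesting, immediate recollision, or a vertex of higher degree from a self-pairing --- organizes the estimate.

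Second, I would isolate the leading term. Only the \emph{ladder} graphs with $n=n'$, in which the $j$-th vertex on the two sides is paired in parallel, survive as $\eta\to0$. For these the Wick contractions identify the momentum transfers on the two sides; writing the time integrals as resolvents (through $\int_0^\infty e^{-is(e(k)-\alpha-i\eta)}\dd s=i(e(k)-\alpha-i\eta)^{-1}$ together with a contour integral in $\alpha$), each rung contributes a factor whose $\eta\to0$ limit is governed by $\eta/((e(k)-e(v))^2+\eta^2)\to\pi\delta(e(k)-e(v))$, producing the collision kernel $\sigma(U,V)=2\pi\delta(e(U)-e(V))$; meanwhile the free phases $e^{is[e(v+\xi/2)-e(v-\xi/2)]}$ with $\xi=\eta\zeta$ and $s\sim T/\eta$ tend to $e^{iT\zeta\cdot\nabla e(v)}$ with $\nabla e(v)=2\pi\sin(2\pi v)$, producing the transport term. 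Thus the $n$-th ladder term converges to the $n$-th Duhamel iterate of (\ref{boltzmann_eq}) with datum $\mu_0$, tested against $J$. Because the total scattering rate $\int_{\tthree}\sigma(U,V)\,\dd U$ is bounded uniformly in $V$ (finiteness of the density of states in $d=3$), the collision operator is bounded, (\ref{boltzmann_eq}) is globally well posed, and its Duhamel series converges; summing over $n$ and letting $N\to\infty$ gives $\bra J,\mu_T\ket$.

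Third, I would show that everything else vanishes. A power count shows that integrating a resolvent pair over a free momentum costs only $O(|\log\eta|)$, whereas each crossing, nesting or recollision present in a graph forces an additional near-degenerate momentum constraint and hence a genuine power $\eta^{\kappa}$ of smallness; combined with the couplings $\lambda^{2n}=\eta^{n}$ and the $1/(n!)$ from the time ordering, this beats the factorially many pairings of $2n$ vertices once $N(\eta)$ grows slowly enough, so that the sum of all non-ladder graphs is $O(\eta^{\kappa}|\log\eta|^{CN})\to0$. The remainder $\Psi_N$ and its cross terms are handled by an $L^2$ estimate: a second, ``stopped'' Duhamel expansion of $\Psi_N$ over short time slices, together with unitarity of $e^{-itH_\omega}$ and Cauchy--Schwarz against the $\phi_n$, bounds their contribution by $(CT|\log\eta|/N)^{cN}\to0$. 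I expect the real obstacle to be exactly this error analysis --- obtaining the crossing and recollision gains \emph{uniformly} over the energy surface, which for the lattice model reduces to the sharper resolvent integral estimate announced in the introduction, since the level sets of $e$ degenerate (flat pieces and self-intersections near the van Hove energies) so that the naive stationary-phase bound on $\int_{\tthree}|e(k)-\alpha-i\eta|^{-2}\dd k$ and its weighted analogues, uniformly in $\alpha$, is insufficient.
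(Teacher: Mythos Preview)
Your outline is a correct sketch of the Erd\H{o}s--Yau/Chen graph-expansion method, and that is indeed the machinery underlying the result. But the paper's own proof is far more economical: it simply invokes Chen's Theorem~2.1 in \cite{chen_rs}, which already carries out the full Duhamel/graph expansion, ladder identification, crossing/nesting suppression, and remainder control for the lattice model. The paper then notes only two minor adaptations: (i) Chen's WKB assumption on the initial data is replaced by the weaker hypotheses (\ref{boundedprob}) and (\ref{initialwignerconv}), which enter only at the point where one passes from the initial Wigner transform to $\mu_0$; and (ii) the unboundedness of the Gaussian potential is handled by a cutoff $V_\omega^L$ as in Section~\ref{graph_exp}. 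So you are re-deriving from scratch what is already packaged in \cite{chen_rs}; the paper treats this theorem as essentially a citation.

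One correction worth flagging: the ``sharper resolvent integral estimate announced in the introduction'' that you anticipate needing --- the two-resolvent bound of Lemma~\ref{lemma_2res} --- is \emph{not} used for Theorem~\ref{theorem_convergence_exp} at all. For the disorder-averaged Wigner transform, Chen's three-resolvent crossing estimate $\eps^{-4/5}|\log\eps|^3$ from \cite{chen_rs} already suffices. The new two-resolvent estimate is introduced solely to handle the \emph{variance} (Theorem~\ref{theorem_main}), specifically the parallel/anti-parallel transfer graphs of Lemma~\ref{lem_para_anti}, where no three ``isolated'' resolvents are available. So the obstacle you identify at the end is real, but it belongs to the next theorem, not this one.
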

\begin{proof}
The corresponding Theorem 2.1 in \cite{chen_rs} is stated only for WKB initial data of the form
\begin{equation}
\label{chen_init}
\phi^{(\eta)}_0(x)=\eta^{3/2}h(\eta x)e^{iS(\eta x)/\eta}
\end{equation}
with $h, S\in\caS(\bbR^3)$ and a bounded random potential $V_\omega$. Our more general choice of initial conditions does not affect the proof at all, one just has to plug in (\ref{initialwignerconv}) whenever in \cite{chen_rs} the specific convergence behavior of (\ref{chen_init}) is used. The issue of the (very mildly) unbounded Gaussian $V_\omega$ can be either addressed as in \cite{chen_lp} by restricting the model to a box $\lbrace-L, -L+1,..., L\rbrace^3$ for very large $L$, or, as outlined in Section \ref{graph_exp} below, by introducing a cutoff version $V^L_\omega$ of the potential as an intermediate step to justify the Duhamel expansion.
\end{proof}
Having understood the kinetic limit for the disorder-averaged Wigner transform, we now focus on the issue of the deviations from this average. Chen \cite{chen_lp} proves that the $r$-th moment of the random variable $\bra J,W^{\eta}\left[\phi^{(\eta)}_{T/\eta}\right]\ket-\bra J,\mu_T\ket$ converges to $0$ for all $r>0$, in particular the variance of $\bra J,W^{\eta}\left[\phi^{(\eta)}_{T/\eta}\right]\ket$, corresponding to $r=2$. In this sense, the scaled Wigner transform becomes deterministic in the kinetic limit. However, due to reasons that will be explained in Section \ref{graph_estimates}, additional \emph{concentration of singularity} assumptions regarding the initial condition $\phi^{(\eta)}_0$ had to be made (see equations (29) - (31) of \cite{chen_lp}). Therefore, convergence of $r$-th moments could only be established for $\phi^{(\eta)}_0$ allowing for a decomposition
\begin{equation}
\label{chens_assumptions1}
\widehat{\phi^{(\eta)}_0}(k)=f^{(\eta)}_{\infty}(k)+f^{(\eta)}_{sing}(k)
\end{equation}
such that
\begin{equation}
\label{chens_assumptions2}
\left\Vert f^{(\eta)}_{\infty}\right\Vert_{L^\infty(\tthree)}\leq c
\end{equation}
and
\begin{equation}
\label{chens_assumptions3}
\left\Vert\left|f^{(\eta)}_{sing}\right|\ast\left|f^{(\eta)}_{sing}\right|\right\Vert_{L^2(\tthree)}\leq c'\eta^{4/5}
\end{equation}
for $c,c'$ constants independent of $\eta$. Our main result states that such extra assumptions are superfluous.

\begin{thm}
\label{theorem_main}
Let $\left(\phi^{(\eta)}_0\right)_{\eta>0}$ be a sequence of initial states such that (\ref{boundedprob}) holds. Then there is a constant $c>0$ such that for all macroscopic times $T>0$, all test functions $J\in\caS\left(\bbR^3\times\bbT^3\right)$, and $r\geq1$
\begin{equation}
\label{fluct_tozero}
\left(\bbE\left[\left|\bra J,W^{\eta}\left[\phi^{(\eta)}_{T/\eta}\right]\ket-\bbE\bra J,W^{\eta}\left[\phi^{(\eta)}_{T/\eta}\right]\ket\right|^r\right]\right)\leq C(J,T)^r \lambda^{c}
\end{equation}
for sufficiently small $\lambda>0$ and $C(J,T)<\infty$ only depending on $J,T$. Together with (\ref{initialwignerconv}) and Theorem \ref{theorem_convergence_exp} this yields 
\begin{equation}
\lim_{\lambda\rightarrow0}\bbE\left[\left|\bra J,W^{\eta}\left[\phi^{(\eta)}_{T/\eta}\right]\ket-\bra J,\mu_T\ket\right|^r\right]=0
\end{equation}
for all $r,T>0$, and consequently
\begin{equation}
\bra J,W^{\eta}\left[\phi^{(\eta)}_{T/\eta}\right]\ket\stackrel{\eta\rightarrow0}{\longrightarrow}\bra J,\mu_T\ket
\end{equation}
in probability.
\end{thm}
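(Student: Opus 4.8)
We outline the proof; the details occupy Sections~\ref{graph_exp} and \ref{graph_estimates}. The plan is to run the graph-expansion machinery of \cite{erdyau} in the form used by Chen \cite{chen_lp}, but to push the graph estimates far enough that the concentration-of-singularity hypotheses \eqref{chens_assumptions1}--\eqref{chens_assumptions3} become unnecessary. First, some reductions: it suffices to prove \eqref{fluct_tozero} for $r=2m$ an even integer, since $\|\,\cdot\,\|_{L^r(\bbP)}\leq\|\,\cdot\,\|_{L^{2m}(\bbP)}$ for $r\leq 2m$; and, as in the proof of Theorem~\ref{theorem_convergence_exp}, we may replace the Gaussian $V_\omega$ by a cut-off $V^L_\omega$ with $L=L(\lambda)$ growing slowly, the error being negligible, so that the potential is henceforth bounded. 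Next, expand $\phi^{(\eta)}_{T/\eta}=e^{-iTH_\omega/\eta}\phi^{(\eta)}_0$ in powers of $\lambda V^L_\omega$, truncated at order $N\sim|\log\lambda|$: writing $\widehat{\phi^{(\eta)}_{T/\eta}}=\sum_{n=0}^{N-1}\Psi_n+\Psi_{\geq N}$, each $\Psi_n$ is, in momentum space, an $n$-fold convolution of free propagators $e^{-is_je(\cdot)/\eta}$ with potential factors $\lambda\widehat{V^L_\omega}$ and the datum $\widehat{\phi^{(\eta)}_0}$, integrated over the time simplex $\{s_0+\dots+s_n=T/\eta\}$. The fully iterated tail $\Psi_{\geq N}$ is controlled in $L^2(\bbP\times\bbZ^3)$ by the standard a priori bound and enters \eqref{fluct_tozero} only through Cauchy--Schwarz; this dictates the size of $N$.

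The heart of the matter is the expansion of the centered moment $\bbE\big[\,\big|\langle J,W^\eta[\phi^{(\eta)}_{T/\eta}]\rangle-\bbE\langle J,W^\eta[\phi^{(\eta)}_{T/\eta}]\rangle\big|^{2m}\,\big]$. Using the bilinear Wigner transform of Definition~\ref{bilinear} and inserting the Duhamel expansion, this is a $2m$-fold product ($m$ factors conjugated), hence a sum of terms built from $4m$ propagator chains; the Gaussian average pairs the potential factors by Wick's theorem, giving a sum over pairing (Feynman) graphs with a momentum $\delta$-function on each edge. The structural point I would exploit is that subtracting the mean kills exactly those pairings in which some copy of $\langle J,W^\eta\rangle$ is paired only within itself: the graphs that survive are precisely those whose block graph (vertices $=$ the $2m$ copies, edges $=$ inter-copy pairings) has no isolated vertex. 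In particular the pure internal ladder pairings --- the ones that reconstruct the Boltzmann term $\langle J,\mu_T\rangle$ --- all cancel, and every surviving graph is forced to contain at least $m$ pairings linking distinct copies.

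It then remains to show that each surviving graph is small. Following and refining the graph classification of \cite{erdyau,chen_lp}, I would sort the graphs by their internal structure (ladders, nests, crossings, immediate recollisions, and so on) and by the configuration of their inter-copy edges, and bound each class by $C(J,T)^{2m}\lambda^{c}$ after summing over the $\sim N!$ graphs. Schematically, the $\xi$-integral against $\widehat{J_\eta}$ confines $|\xi|\lesssim\eta$, tying the momenta $v\pm\xi/2$ together and absorbing the factor $\eta^{-3}$; the time simplex is converted into a contour integral producing resolvents $(e(k)-\alpha\mp i\eta)^{-1}$; and each chain is estimated in momentum space by iterated $L^\infty$--$L^1$ bounds on products of such resolvents, every departure from the pure ladder structure (each crossing, recollision, or inter-copy edge) costing a power $\eta^{\delta}$. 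What goes wrong here, relative to \cite{chen_lp}, is that a general sequence obeying only \eqref{boundedprob} admits no $L^\infty_k$ bound on $\widehat{\phi^{(\eta)}_0}$ --- only $\|\widehat{\phi^{(\eta)}_0}\|_{L^2(\tthree)}\leq1$ --- so one endpoint of each chain must be absorbed in $L^2$ rather than $L^\infty$, and then the naive power counting fails unless the underlying resolvent integrals are estimated more sharply than before.

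Supplying that sharpness is, I expect, the main obstacle. It amounts to a bound of the type
\[
\sup_{\alpha,\beta\in\bbR,\ p\in\tthree}\ \int_{\tthree}\frac{\dd k}{\big|e(k)-\alpha-i\eta\big|\,\big|e(k+p)-\beta-i\eta\big|}\ \lesssim\ \big|\log\eta\big|^{C},
\]
together with its analogues carrying three resolvents or an extra shifted momentum, all uniform in the shifts $p$ and the energies $\alpha,\beta$. For the continuum dispersion $|k|^2$ the corresponding estimate is classical and rests on convexity of the energy shells; but for $e(k)=3-\sum_j\cos 2\pi k_j$ the level sets $\{e=\mathrm{const}\}$ are non-convex and develop flat directions and points of vanishing Gaussian curvature (the van Hove energies), so the co-area and surface-measure computations have to be performed by partitioning $\tthree$ according to the size of $|\nabla e|$ and of the principal curvatures and treating each region separately. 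Granting this estimate, the per-class bounds and the summation over graphs go through essentially as in \cite{erdyau,chen_lp}, and assembling all the classes yields \eqref{fluct_tozero} with some explicit, non-optimal $c>0$; the stated limits then follow from \eqref{initialwignerconv} and Theorem~\ref{theorem_convergence_exp}.
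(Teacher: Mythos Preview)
Your overall architecture---Duhamel expansion, Wick pairings, graph classification, remainder via the standard a priori bound---matches the paper, but there is a genuine gap at the technical core. The two-resolvent bound you hope for,
\[
\sup_{\alpha,\beta\in\bbR,\ p\in\tthree}\ \int_{\tthree}\frac{\dd k}{|e(k)-\alpha-i\eta|\,|e(k+p)-\beta-i\eta|}\ \lesssim\ |\log\eta|^{C},
\]
is false. At $p=0$ and $\alpha=\beta$ real the integrand is $|e(k)-\alpha-i\eta|^{-2}$, and the co-area formula gives $\sim\eta^{-1}$, not a logarithm. The correct statement (Lemma~\ref{lemma_2res}) is
\[
\int_{\tthree}\frac{\dd u}{|e(u+p)-\alpha-i\eps|\,|e(u)-\beta-i\eps|}\ \leq\ C\Bigl(\eps^{-4/5}+\frac{\eps^{-2/5}}{\dist(p,\caE)}\Bigr)|\log\eps|^{2},
\]
which blows up as $p$ approaches the exceptional set $\caE=\{0,(\tfrac12,\tfrac12,\tfrac12)\}$. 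This non-uniformity is precisely why Chen needed the concentration-of-singularity hypotheses, and removing them requires more than a sharper uniform bound: the paper's device (Lemma~\ref{lem_para_anti}) is to split the amplitude integral according to whether the relevant momentum shift lands in an $\eps^{2/5}$-neighbourhood of $\caE$ or not. On the good region the two-resolvent bound supplies the $\eps^{1/5}$ gain; on the bad region one does \emph{not} Cauchy--Schwarz the four $\widehat{\phi_0}$ factors into squares but keeps them paired, so that one transfer momentum becomes a genuine free variable and the small volume $\sim\eps^{6/5}$ of the bad set provides the gain instead. Without this splitting your power counting fails on the parallel and anti-parallel transfer graphs, however carefully you partition $\tthree$ by curvature.

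Separately, you are working harder than necessary by expanding the full $2m$-th centered moment. Since $|\langle J,W^\eta[\phi_t]\rangle|\leq C_J\|\phi_t\|_{\ell^2}^2\leq C_J$ deterministically (Lemma~\ref{est_bilin} and unitarity), one has $\bbE[|X-\bbE X|^r]\leq(2C_J)^{r-1}\bbE[|X-\bbE X|]$, and the first moment reduces via Cauchy--Schwarz to the \emph{variance} of the main-term Wigner transform plus the $L^2$ remainder. Thus only the case of two one-particle lines ever enters the graph analysis, which considerably simplifies the combinatorics you sketched with block graphs.
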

The next three sections provide a proof of the main theorem. Relying on notation already introduced in \cite{erdyau,chen_rs}, in Section \ref{graph_exp} we present the perturbative graph expansion of the time evolution. Section \ref{graph_estimates}  contains the main result: While in \cite{chen_lp} only the contribution of a certain subclass of graphs is controlled, we now provide a complete classification of graphs and sufficiently sharp estimates for all cases, no longer requiring the assumptions (\ref{chens_assumptions1}-\ref{chens_assumptions3}). A key ingredient for the improved bounds is a novel resolvent estimate, which is derived in Appendix \ref{app_2res}. Finally, we combine all the estimates in Section \ref{proof_main} to complete the proof of Theorem \ref{theorem_main}.

A short discussion of Theorem \ref{theorem_main} and the underlying assumptions are in order. We choose the sequence of initial states $\phi^{(\eta)}_0$ as general as possible, but only analyze the case of i.i.d. Gaussian random variables, $\omega_x$. This simplifies matters as we will be able to restrict our graph expansion in Section \ref{graph_exp} to pairings only.

For the general case, for example $\omega_x$ i.i.d. with $\omega_x$ bounded, from the cumulant expansion one has higher order graphs, which have to be shown to vanish in the kinetic limit. For the average Wigner transform this has been accomplished for the discrete Schr\"{o}dinger equation \cite{chen_rs} and for the random wave equation \cite{luspo}.
We expect such a technique to carry over to the study of the higher moments without any complications. On the other hand, one might be interested in the continuum model, for which the analogue of Theorem \ref{theorem_convergence_exp} was shown in \cite{erdyau}. For a result corresponding to Theorem \ref{theorem_main} in the continuum case, one would have to implement the new graph classification literally. An estimate of the kind of Lemma \ref{lemma_2res} is easy to derive (at least in dimensions $d\geq3$), since the geometry of the energy level sets (spheres of constant $k^2$) is much simpler. However, we stick with the lattice model and the Gaussian random potential, as considered in \cite{chen_lp}, to facilitate a direct comparison.

With only the assumptions from Theorem \ref{theorem_convergence_exp} on the initial state one could lose mass at infinity, which is no longer seen by the kinetic equation, of course. To avoid such unphysical situations, one would have to impose some tightness condition on the large space scale, as
\begin{equation}
\label{tightspace}
\lim_{R\rightarrow\infty}\overline{\lim_{\eta\rightarrow0}}\sum_{|x|>R/\eta}\left|\phi^{(\eta)}_0(x)\right|^2=0.
\end{equation}
Then, on a subsequence of $\eta$ chosen such that $\mu_0$ exists,
\begin{equation}
\mu_0\left(\bbR^3\times\tthree\right)=\lim_{\eta\rightarrow0}\left\Vert\phi^{(\eta)}_0\right\Vert^2_{\ell^2\left(\bbZ^3\right)}.
\end{equation}


As a consequence of Theorem \ref{theorem_main}, one can also establish an almost sure convergence result for $W^\eta\left[\mathrm{e}^{-iH_\omega\tau/\eta}\phi_0^{(\eta)}\right]$, for example for WKB initial states $\phi_0^{(\eta)}$.

\begin{thm}
\label{thm_almostsure}
Fix a sequence of initial states $\left(\phi_0^{\eta}\right)_{\eta>0}$ of the form (\ref{chen_init}), with $h,S\in\caS\left(\bbR^3\right)$. Then almost surely, i.e. for almost all realizations of the random potential $V$,
\begin{equation}
\lim_{\eta\rightarrow0}\sup_{\tau\in[0,T]}\left|\bra J,W^\eta\left[\mathrm{e}^{-iH_\omega\tau/\eta}\phi_0^{(\eta)}\right]\ket-\bra J,\mu_\tau\ket\right|=0
\end{equation}
for all $T\geq0$ and $J\in\caS\left(\bbR^3\times\bbT^3\right)$.
\end{thm}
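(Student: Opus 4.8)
The plan is to deduce the pathwise statement from the quantitative moment bound \eqref{fluct_tozero}, once that bound has been supplemented by a quantitative form of the bias estimate of Theorem~\ref{theorem_convergence_exp} and promoted to a bound on the supremum over $\tau\in[0,T]$. Write
\[
Y^\eta_\tau:=\bra J,W^\eta[\phi^{(\eta)}_{\tau/\eta}]\ket-\bbE\bra J,W^\eta[\phi^{(\eta)}_{\tau/\eta}]\ket ,\qquad
b^\eta_\tau:=\bbE\bra J,W^\eta[\phi^{(\eta)}_{\tau/\eta}]\ket-\bra J,\mu_\tau\ket .
\]
For WKB data of the form \eqref{chen_init} the graph expansion underlying Theorem~\ref{theorem_convergence_exp} (the expansions of \cite{chen_rs,erdyau}) is quantitative: truncating the Duhamel series and estimating the remainder as in Section~\ref{graph_exp} yields $\sup_{\tau\in[0,T]}|b^\eta_\tau|\le C(J,T)\,\lambda^{c_1}$ for some $c_1>0$, the uniformity in $\tau$ being automatic since $\tau\mapsto\bra J,\mu_\tau\ket$ is the $C^1$ solution of \eqref{boltzmann_eq} and the truncation error is controlled uniformly on $[0,T]$. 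It therefore remains to show $\sup_{\tau\in[0,T]}|Y^\eta_\tau|\to0$ almost surely.

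For the supremum I would use Kolmogorov's continuity criterion. The process $\tau\mapsto\bra J,W^\eta[\phi^{(\eta)}_{\tau/\eta}]\ket$ is $C^1$ in $\tau$ for almost every realization (the WKB state lies in $D(H_\omega)$ a.s.\ since $h$ decays rapidly), so no modification is needed and one only has to bound the increments of the \emph{centred} process. The gain in $|\tau-\tau'|$ that this requires is invisible for the uncentred bracket, which genuinely moves at order one over macroscopic times, and must be produced by revisiting the classification of Section~\ref{graph_estimates}: in the expansion of $Y^\eta_\tau-Y^\eta_{\tau'}$ only graphs whose last Duhamel vertex lies in the window $[\tau'/\eta,\tau/\eta]$ distinguish the two times, and since the number of collisions in a macroscopic window of length $|\tau-\tau'|$ is of order $|\tau-\tau'|$, resumming over the number of such vertices extracts a factor $O(|\tau-\tau'|)$ in front of the estimates already obtained for $Y^\eta_\tau$; this should yield, for all $\tau,\tau'\in[0,T]$, $r\ge2$ and some $c_2>0$,
\[
\bbE\bigl[\,|Y^\eta_\tau-Y^\eta_{\tau'}|^{\,r}\,\bigr]\le C(J,T)^r\,\lambda^{c_2}\,|\tau-\tau'|^{\,r}.
\]
As $r>1$, Kolmogorov's criterion then gives $\bbE\bigl[\sup_{\tau\in[0,T]}|Y^\eta_\tau|^r\bigr]\le C(J,T)^r\lambda^{c_2}$, using also \eqref{fluct_tozero} at $\tau=0$. (One can avoid the refined increment bound at the price of a larger threshold on the exponent: on the event $\{\|H_\omega\phi^{(\eta)}_0\|\le 7\}$, whose complement has probability $\le\lambda^{N}$ for every $N$ by a $\chi^2$-concentration bound exploiting the rapid decay of $h$, the map $\tau\mapsto\bra J,W^\eta[\phi^{(\eta)}_{\tau/\eta}]\ket$ is Lipschitz with constant $C(J)\lambda^{-2}$ by Lemma~\ref{est_bilin}, so a net of $O(\lambda^{-2})$ times reduces the supremum to a finite maximum; this route needs the exponent $c$ in \eqref{fluct_tozero} to exceed $2$.)

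With $\bbE\bigl[\sup_\tau|Y^\eta_\tau|^r\bigr]\le C(J,T)^r\lambda^{c_2}$ and $\eta=\lambda^2$, Markov's inequality gives $\bbP(\sup_\tau|Y^\eta_\tau|>\varepsilon)\le\varepsilon^{-r}C(J,T)^r\lambda^{c_2}$, which is summable along $\eta_n=2^{-n}$; Borel--Cantelli and countable intersections over $\varepsilon\in\{1/m\}$ and $T\in\bbN$ give $\sup_{\tau\in[0,T]}|Y^{\eta_n}_\tau|\to0$ almost surely for every $T$, and together with the bias bound this establishes the claim along the subsequence $\eta_n$. To pass to the full limit $\eta\to0$ one runs the analogous equicontinuity estimate in the $\eta$ variable — on the good event $\bra J,W^\eta[\phi^{(\eta)}_{\tau/\eta}]\ket$ is Lipschitz in $\eta$ with an explicit, polynomially large constant coming from $\partial_\eta(\tau/\eta)$ and from $\partial_\eta\widehat{J_\eta}$ — and covers each dyadic block $\eta\in[2^{-(n+1)},2^{-n}]$ by a net in $(\eta,\tau)$, the per-point probabilities remaining summable over the block and over $n$.

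The step I expect to be the main obstacle is the increment estimate of the second paragraph. Transferring the smallness of the fluctuation from a single macroscopic time to the entire trajectory forces one to redo the graph classification of Section~\ref{graph_estimates} with the final Duhamel vertex localised in the short window $[\tau'/\eta,\tau/\eta]$, so that the order-$|\tau-\tau'|$ collision count in that window supplies the $|\tau-\tau'|$-gain; checking that the resolvent and combinatorial bounds of that section survive this localisation, uniformly in the position of the window, is the real work. The handling of the unbounded Gaussian potential through a cut-off on the effective support of the WKB profile and a concentration estimate is routine but cannot be skipped.
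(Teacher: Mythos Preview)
Your proposal takes a different route from the paper, and the step you yourself flag as the main obstacle is a genuine gap. For Kolmogorov's criterion you need the exponent of $|\tau-\tau'|$ in the $r$-th moment to exceed $1$; your heuristic (``only graphs whose last Duhamel vertex lies in the window distinguish the two times'') is not correct as stated --- in the representation \eqref{defphin} the dependence on $t$ sits in the simplex constraint $\sum s_j=t$, not in a distinguished vertex position, so the difference $\phi_{n,t}-\phi_{n,t'}$ does not localise a vertex in a macroscopic window, and extracting a factor $|\tau-\tau'|$ (let alone $|\tau-\tau'|^r$) while retaining the connectedness gain of Section~\ref{graph_estimates} is an entirely new estimate that nothing in the paper supplies. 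Your fallback via a Lipschitz net fails for the reason you state: the exponent in \eqref{fluct_tozero} is $c=1/90\ll2$ and does \emph{not} improve with $r$, since Section~\ref{proof_main} obtains the general-$r$ bound as one factor of $\lambda^{c}$ times $r-1$ copies of the deterministic bound $|Y^\eta_\tau|\le 2C_J$. So neither branch of the proposal closes with the available estimates.

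The paper bypasses both issues by a tool you do not invoke: \emph{Gaussian hypercontractivity}. It does not rely on \eqref{fluct_tozero} but returns to the individual graph contributions, each of which lies in the Wiener chaos $\overline{\caP_{l_1+l_2}}$; Janson's inequality then gives $\|X\|_q\le(q-1)^{(l_1+l_2)/2}\|X\|_2$ for every $q\ge2$, so the variance bound of Section~\ref{graph_estimates} upgrades to $\bbP(|X|>\eta_n^\gamma)\le C^q(q-1)^{Nq}\eta_n^{(c/2-\gamma)q}$ with $q$ free, i.e.\ decay faster than any fixed power of $\eta_n$ (the remainder $R_{N,t}$ is handled the same way). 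This is what makes a polynomial-size net in $(\tau,\lambda)$ affordable: the paper takes $\eta_n=n^{-\beta}$ with $\beta\in(0,1)$ and $m_n\sim n^\beta$ macroscopic time points, removes the continuous $\lambda$-dependence via the exact homogeneity $D^{\lambda}_{\omega,l}=(\lambda/\sqrt{\eta_n})^{l}D^{\sqrt{\eta_n}}_{\omega,l}$, and applies Borel--Cantelli. The interpolation between grid times uses no increment estimate on $Y^\eta$; instead a \emph{propagation bound} --- obtained by applying the convergence statement itself to a spatial cut-off test function $I$ --- confines the evolved state to a ball of radius $O(\eta_n^{-1})$ except on a summable event, on which $\max|V_\omega|\le C_\rho B(\omega)|\log\eta_n|$; over the resulting $O(1)$ microscopic windows the full and free evolutions then differ by $o(1)$, and under the free evolution $\bra J,W^{\eta_n}[\,\cdot\,]\ket$ is Lipschitz in macroscopic time with an $O(1)$ constant.
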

The proof of this theorem will be given in Section \ref{almostsure}. It relies on a large deviation principle for the Gaussian random potential; however, a comparable result should also hold for other distributions of the one-site potential $V(x)$ as long one has a good control on the tails.

The assumption that the initial states $\left(\phi_0^{\eta}\right)_{\eta>0}$ form a WKB sequence is quite stringent compared to the conditions (\ref{boundedprob}) and (\ref{initialwignerconv}) of the previous theorems. While any other sequence of initial states that fulfills an estimate of the form (\ref{phi_cts}) would do equally well instead of WKB states, (\ref{phi_cts}) or similar estimates are necessary as one has to control the $\phi^{\eta}_0$-dependence of the set of ``bad" realizations of the potential $V$, i.e. the set of $\omega$ which lead to large values of 
\begin{equation}
\left|\bra J,W^\eta\left[\mathrm{e}^{-iH_\omega\tau/\eta}\phi_0^{(\eta)}\right]\ket-\bra J,\mu_\tau\ket\right|.
\end{equation}
Therefore, the only way to guarantee almost sure convergence is to essentially show it on a subsequence $\eta_n\rightarrow0$ by a Borel-Cantelli-argument and to ``tie" $\phi_0^\eta$ to $\phi_0^{\eta_n}$ by (\ref{phi_cts}). 

\section{Graph expansion}
\label{graph_exp}
From now on we only consider $\widehat{\phi^{(\eta)}_0}\in C^{\infty}\left(\tthree\right)$. As $e(k)$ is a smooth function of $k\in\tthree$, the unperturbed time evolution ${e}^{-itH_0}$ leaves this space invariant. As $V_\omega$ is an unbounded potential almost surely, we first have to make sure that the Duhamel expansion is well-defined. To achieve this, use the (very coarse) estimate
\begin{equation}
\left|V_\omega(x)\right|\leq Y(\omega)\left(1+|x|\right)
\end{equation}
with a random variable $Y$ with all moments bounded, and note that that $\phi^{(\eta)}_0$ with $\widehat{\phi^{(\eta)}_0}\in C^{\infty}\left(\tthree\right)$ decay faster than any negative power of $|x|$ in position space. Therefore, expressions like
\begin{equation}
\label{defphin}
\phi^{(\eta)}_{n,t}=(-i\lambda)^n\int_{\bbR_+^{n+1}}\dd s_0...\dd s_n\delta\left(\sum_{j=0}^ns_j-t\right)\prod_{j=0}^{n-1}\left(\mathrm{e}^{-is_jH_0}V_\omega\right)\mathrm{e}^{-is_nH_0}\phi^{(\eta)}_0
\end{equation}
are defined for all $n\in\bbN_0$. From now on suppressing the $\eta$ dependence, we Duhamel expand the perturbed time evolution
\begin{equation}
\phi_t=\phi\upm_t+R_{N,t}
\end{equation}
with
\begin{equation}
\label{defphimain}
\phi\upm_{t}=\sum_{n=0}^N\phi_{n,t}
\end{equation}
and a remainder $R_{N,t}$, with a large $N$ that will be optimized in section \ref{proof_main}. We first focus on the main part and need to understand the variance of the Wigner transform of $\phi\upm$, tested against a $J\in\caS\left(\bbR^3\times\tthree\right)$
\begin{equation}
\label{sum_covari}
\begin{split}
&\vari{\bra J,W^{\eta}\left[\phi\upm_{t}\right]\ket}\\
&\hspace{1cm}=\sum_{n_1...n_4=0}^N\covari{\bra J,W^{\eta}\left[\phi_{n_1,t},\phi_{n_2,t}\right]\ket,\bra J,W^{\eta}\left[\phi_{n_3,t},\phi_{n_4,t}\right]\ket}\\
&\hspace{1cm}\leq(N+1)^2\sum_{n_1,n_2=0}^N\vari{\bra J,W^{\eta}\left[\phi_{n_1,t},\phi_{n_2,t}\right]\ket}.
\end{split}
\end{equation}
Written out in momentum space, a single summand reads
\begin{equation}
\label{sing_sum_var}
\begin{split}
&\vari{\bra J,W^{\eta}\left[\phi_{n_1,t},\phi_{n_2,t}\right]\ket}\\
&\hspace{1cm}=\bbE\left[\int_{\bbR^{3}\times\tthree}\dd\xi\upo \dd v\upo\widehat{J_\eta}\left(\xi\upo,v\upo\right)\hat\phi_{n_1,t}\left(v\upo-\frac{\xi\upo}{2}\right)\overline{\hat\phi_{n_2,t}\left(v\upo+\frac{\xi\upo}{2}\right)}\right.\\
&\hspace{2cm}\left.\times\int_{\bbR^{3}\times\tthree}\dd\xi\upt \dd v\upt\overline{\widehat{J_\eta}\left(\xi\upt,v\upt\right)}\overline{\hat\phi_{n_1,t}\left(v\upt-\frac{\xi\upt}{2}\right)}\hat\phi_{n_2,t}\left(v\upt+\frac{\xi\upt}{2}\right)\right]\\
&\hspace{1.2cm}-\bbE\left[\int_{\bbR^{3}\times\tthree}\dd\xi\upo \dd v\upo\widehat{J_\eta}\left(\xi\upo,v\upo\right)\hat\phi_{n_1,t}\left(v\upo-\frac{\xi\upo}{2}\right)\overline{\hat\phi_{n_2,t}\left(v\upo+\frac{\xi\upo}{2}\right)}\right]\\
&\hspace{2cm}\times\bbE\left[\int_{\bbR^{3}\times\tthree}\dd\xi\upt \dd v\upt\overline{\widehat{J_\eta}\left(\xi\upt,v\upt\right)}\overline{\hat\phi_{n_1,t}\left(v\upt-\frac{\xi\upt}{2}\right)}\hat\phi_{n_2,t}\left(v\upt+\frac{\xi\upt}{2}\right)\right]
\end{split}
\end{equation}
The key part of the proof is to appropriately estimate (\ref{sing_sum_var}). To this end, one has to understand how $V_{\omega}$ act on Fourier space. We do not directly Fourier transform $V_\omega$, which is not summable, but the cut-off version
\begin{equation}
\label{vcutoff}
V_\omega^{L}(x)=V_\omega(x)1_{\lbrace|x|\leq L\rbrace}
\end{equation}
for large but finite $L$. Now on the one hand, it is straightforward to see that if we define ${\phi^{(\eta)}_{L,n,t}}$ by (\ref{defphin}) with $V_\omega$ replaced by $V_\omega^L$, we have
\begin{equation}
\label{conv_e_l}
\bbE\left[\left(\mathrm{dist}_{C^\infty}\left(\widehat{\phi^{(\eta)}_{L,n,t}},\widehat{\phi^{(\eta)}_{n,t}}\right)\right)^\gamma\right]\rightarrow0\hspace{5mm}(L\rightarrow\infty)
\end{equation}
with any $\gamma,t,\eta>0$ and $n\in\bbN_0$ fixed. On the other hand, for finite $L$, $\widehat{V_\omega^L}$ is well-defined, and expectations of the type $\bbE\left[\prod_{l=1}^{2s}\widehat{V^L}(k_l)\right]$ ($k_l\in\tthree$) are equivalent to sums over pairings, with each pairing contributing the product of $s$ factors of the type
\begin{equation}
\label{vpairl}
\bbE\left[\widehat{V^L}(k)\widehat{V^L}(p)\right]=\delta_L(k+p).
\end{equation}
As $L\rightarrow\infty$, (\ref{vpairl}) converges to $\delta(k+p)$ if tested against smooth functions of $k,p$.

As in \cite{chen_lp} one can represent $\widehat{\phi^{(\eta)}_{L,n,t}}$ by the resolvent formula (for a compact derivation, see \cite{luspo})
\begin{equation}
\label{res_expansion}
\begin{split}
\widehat{\phi^{(\eta)}_{L,n,t}}\left(k_0\right)=\frac{(-\lambda)^ne^{\eps t}}{2\pi i}&\int_{I}\dd \alpha e^{-i\alpha t}\\&\int_{\left(\tthree\right)^n}\dd k_1 ...\dd k_n\prod_{j=0}^n\frac{1}{e\left(k_j\right)-\alpha-i\eps}\prod_{j=1}^n \widehat{V^L}\left(k_{j-1}-k_j\right)\widehat{\phi^{(\eta)}_{0}}\left(k_n\right).
\end{split}
\end{equation}
for any $0<\eps<1$, with the integral $\int_{I}\dd \alpha$ performed clockwise along the edges of the rectangle with the vertices $\lbrace-1,7,7-i,-1-i\rbrace$ 
encircling $\mathrm{spec}\left(H_0-i\eps\right)=[0,6]-i\eps$. The pointwise complex conjugate of $I$ is denoted $\overline{I}$, and integration along $\overline{I}$ will always be performed counterclockwise. Now we adopt notation from \cite{chen_lp} which however will considerably simplify as, for the time being, we only consider the variance instead of higher momenta. For $j=1,2$ we set $\eps_j=(-1)^j\eps$ and $\overline{n}=n_1+n_2$ to define
\begin{equation}
\begin{split}
k^{(j)}&=\left(k_0^{(j)},...,k_{\overline{n}+1}^{(j)}\right)\\
\dd k^{(j)}&=\prod_{l=0}^{\overline{n}+1}\dd k_{l}^{(j)}\\
K^{(j)}\left(k^{(j)},\alpha_j,\beta_j,\eps\right)&=\prod_{l=0}^{n_1}\frac{1}{e(k^{(j)}_l)-\alpha_j+i\eps_j}\prod_{l^{'}=n_1+1}^{\oln+1}\frac{1}{e(k^{(j)}_{l^{'}})-\beta_j-i\eps_j}\\
U_L\left[k^{(j)}\right]&=\prod_{l=1}^{n_1}\widehat{V^L_\omega}\left(k^{(j)}_l-k^{(j)}_{l-1}\right)\prod_{l^{'}=n_1+1}^{\oln+1}\widehat{V^L_\omega}\left(k^{(j)}_{l^{'}}-k^{(j)}_{l^{'}-1}\right).
\end{split}
\end{equation}
Thus, in view of (\ref{conv_e_l}) and (\ref{res_expansion}) we arrive at
\begin{equation}
\label{var_sum_amp}
\begin{split}
(\ref{sing_sum_var})=&\lim_{L\rightarrow\infty}\vari{\bra J,W^{\eta}\left[\phi_{L,n_1,t},\phi_{L,n_2,t}\right]\ket}\\=&\lim_{L\rightarrow\infty}\frac{e^{4\eps t}\lambda^{2\oln}}{(2\pi)^4}\int_{I}\dd\alpha_1\int_{\overline{I}}\dd\beta_1\int_{\overline{I}}\dd\alpha_2\int_{I}\dd\beta_2e^{-it(\alpha_1-\beta_1-\alpha_2+\beta_2)}\\
&\int_{\left(\tthree\right)^{2\oln+4}}\dd k\upo\dd k\upt\left(\bbE\left[\overline{U_L\left[k^{(1)}\right]}U_L\left[k^{(2)}\right]\right]-\bbE\left[\overline{U_L\left[k^{(1)}\right]}\right]\bbE\left[U_L\left[k^{(2)}\right]\right]\right)\\
&\int_{\bbR^{3}}\dd\xi^{(1)}\int_{\bbR^{3}}\dd\xi^{(2)}\delta\left(k\upo_{n_1+1}-k\upo_{n_1}-\xi\upo\right)\delta\left(k\upt_{n_1+1}-k\upt_{n_1}-\xi\upt\right)\\
&\hspace{1cm}{\widehat{J_\eta}\left(\xi\upo,\left(k_{n_1}\upo+k_{n_1+1}\upo\right)/2\right)}\overline{\widehat{J_\eta}\left(\xi\upt,\left(k_{n_1}\upt+k_{n_1+1}\upt\right)/2\right)}\\
&\hspace{1cm}K^{(1)}\left(k^{(1)},\alpha_1,\beta_1,\eps\right)K^{(2)}\left(k^{(2)},\alpha_2,\beta_2,\eps\right)\\
&\hspace{1cm}\widehat{\phi_0}\left(k\upo_0\right)\overline{\widehat{\phi_0}\left(k\upo_{\oln+1}\right)}\overline{\widehat{\phi_0}\left(k\upt_0\right)}\widehat{\phi_0}\left(k\upt_{\oln+1}\right)\\
=&\sum_{\pi\in\Pi^{conn}_{n_1,n_2}}\ampl(\pi)
\end{split}
\end{equation}
with the \emph{amplitude} of each pairing $\pi$ given as
\begin{equation}
\label{def_amp}
\begin{split}
\ampl(\pi)=&\frac{e^{4\eps t}\lambda^{2\oln}}{(2\pi)^4}\int_{I}\dd\alpha_1\int_{\overline{I}}\dd\beta_1\int_{\overline{I}}\dd\alpha_2\int_{I}\dd\beta_2e^{-it(\alpha_1-\beta_1-\alpha_2+\beta_2)}\\
&\int_{\left(\tthree\right)^{2\oln+4}}\dd k\upo\dd k\upt\delta_\pi\left(k\upo{},k\upt\right)\\
&\int_{\bbR^{3}}\dd\xi^{(1)}\int_{\bbR^{3}}\dd\xi^{(2)}\delta\left(k\upo_{n_1+1}-k\upo_{n_1}-\xi\upo\right)\delta\left(k\upt_{n_1+1}-k\upt_{n_1}-\xi\upt\right)\\
&\hspace{1cm}{\widehat{J_\eta}\left(\xi\upo,\left(k_{n_1}\upo+k_{n_1+1}\upo\right)/2\right)}\overline{\widehat{J_\eta}\left(\xi\upt,\left(k_{n_1}\upt+k_{n_1+1}\upt\right)/2\right)}\\
&\hspace{1cm}K^{(1)}\left(k^{(1)},\alpha_1,\beta_1,\eps\right)K^{(2)}\left(k^{(2)},\alpha_2,\beta_2,\eps\right)\\
&\hspace{1cm}\widehat{\phi_0}\left(k\upo_0\right)\overline{\widehat{\phi_0}\left(k\upo_{\oln+1}\right)}\overline{\widehat{\phi_0}\left(k\upt_0\right)}\widehat{\phi_0}\left(k\upt_{\oln+1}\right).\\
\end{split}
\end{equation}

In (\ref{var_sum_amp}) and (\ref{def_amp}) the $\Pi^{conn}_{n_1,n_2}$ and $\delta_\pi$ notations are explained as follows. In keeping with \cite{chen_lp} we visualize the scattering process leading to (\ref{var_sum_amp}) by writing each resolvent as a solid \emph{propagator line} and each interaction with a potential as a black bullet to arrive at a graph like Figure \ref{fig_nopairing}. The two parallel components of Figure \ref{fig_nopairing} are called the first and second \emph{one-particle line}, respectively. Taking the expectation $\bbE\left[\overline{U_L\left[k^{(1)}\right]}U_L\left[k^{(2)}\right]\right]$ will yield a sum over all pairings of the $2\oln$ dark bullets, with each pairing $\pi$ contributing a product $\delta_\pi$ of $\oln$ delta distributions in the $L\rightarrow\infty$ limit, each delta either of the form
\begin{equation}
\label{def_internal}
\delta\left(k^{(j)}_{l+1}-k^{(j)}_{l}+k^{(j)}_{l^{'}+1}-k^{(j)}_{l^{'}}\right),\hspace{5mm}|l-l^{'}|\geq1,\hspace{5mm}j\in{1,2}
\end{equation}
called an \emph{internal delta}, or  
\begin{equation}
\label{def_transfer}
\delta\left(k^{(1)}_{l+1}-k^{(1)}_{l}-k^{(2)}_{l^{'}+1}+k^{(2)}_{l^{'}}\right).
\end{equation}
a \emph{transfer delta}. Note the different signs in (\ref{def_internal}) and (\ref{def_transfer}), which are due to $\widehat{V^L_\omega}(k)=\overline{\widehat{V^L_\omega}(-k)}$. The product $\bbE\left[\overline{U_L\left[k^{(1)}\right]}\right]\bbE\left[U_L\left[k^{(2)}\right]\right]$, on the other hand, will only produce internal deltas in the $L\rightarrow\infty$ limit, so taking the difference will leave us with the $\delta_\pi$ produced by $\pi\in\Pi^{conn}_{n_1,n_2}$, the set of all pairings that comprise at least one transfer delta, i.e. one contraction between the first and second one-particle line. We will analyze the contribution of $\ampl(\pi)$ for each $\pi\in\bigpi$ in section \ref{graph_estimates}, where we will also give examples of such pairings will be given in Figures \ref{fig_non_simple} to \ref{fig_anti}.

For the remainder term, on the other hand, Lemma 3.14 in \cite{chen_rs} uses the method of partial time integration introduced in \cite{erdyau} to derive
\begin{equation}
\label{est_remainder}
\begin{split}
\bbE&\left[\left\Vert R_{N,t}\right\Vert_{\ell^2}^2\right]\\&\leq\Vert\phi_0\Vert^2_{\ell^2}\left[\frac{N^2\kappa^2(C\lambda^2\eps^{-1})^{4N}}{(N!)^{1/2}}\right.\\&+N^2\kappa^2(C\lambda^2\eps^{-1}|\log\eps|)^{4N}|\log\eps|^3\left(\eps^{1/5}(4N)!+\eps^2(4N)^{20N}\right)\\&+\eps^{-2}(C\lambda^2\eps^{-1}|\log\eps|)^{4N}|\log\eps|^3\left(\kappa^{-N}(4N)!+\kappa^{-N+5}\eps(4N)!(4N)^4\right.\\&\hspace{5.5cm}\left.\left.+\kappa^{-N+9}\eps^2(4N)!(4N)^8+\eps^3(4N)^{20N}\right)\vphantom{\frac{N^2\kappa^2(C\lambda^2\eps^{-1})^{4N}}{(N!)^{1/2}}}\right],
\end{split}
\end{equation}
for all $N,\kappa\in\bbN$ and $\eps\leq t^{-1}$, with a constant $C$ that does not depend on any parameter of the problem. The graph expansion method used in the proof of (\ref{est_remainder}) carries over to Gaussian instead of bounded $V_\omega$ without complications by the above cut-off argument. $\kappa, \eps$ will as well be fixed in section \ref{proof_main}. At this point it is in order to remark that the estimate of the remainder term is a very important idea that was crucial for \cite{erdyau,chen_rs,chen_lp}, and also for the paper at hand. The new ideas needed to verify our main theorem, however, are only improved estimates for the main term, so we just quote (\ref{est_remainder}) and concentrate on $\phi\upm$ from now on.

\begin{figure}[htb] 
\centering 
\def\svgwidth{400pt} 
\input{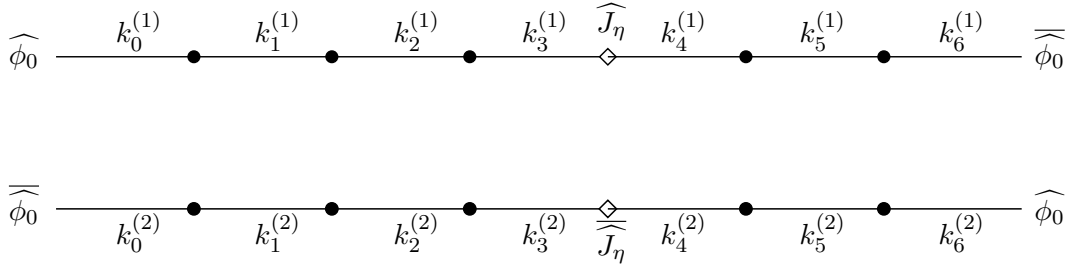} 
\caption{A graph with $n_1=3$, $n_2=2$, $\oln=5$.}
\label{fig_nopairing} 
\end{figure} 
\section{Graph estimates}
To prove that the contribution of connected graphs and thus the variance vanishes in the $\eps\rightarrow0$ limit, the key ingredient in \cite{chen_lp} is the three-resolvent integral, 
\begin{equation}
\label{three_res_int}
\sup_{\gamma_i\in I}\sup_{k\in\tthree}\int_{\left(\tthree\right)^2}\frac{\dd p\dd q}{\left|e(p)-\gamma_1-i\eps\right|\left|e(q)-\gamma_2-i\eps\right|\left|e(p\pm q+k)-\gamma_3-i\eps\right|}\leq\eps^{-4/5}\left|\log\eps\right|^3
\end{equation}
which originally was proven in \cite{chen_rs}, Lemma 3.11 to control the contribution of ``crossing pairings". To be able to apply (\ref{three_res_int}) to the present problem one has to make sure that the three resolvents in question are ``isolated" from the wave functions $\phi_0^{(\eta)}$ in the sense that the two integration variables of (\ref{three_res_int}) are independent of the arguments of the $\phi_0^{(\eta)}$. This is verified in \cite{chen_lp} for a subclass of graphs that (roughly) corresponds to those $\pi\in\bigpi$ that have a one-particle line with a ``generalized crossing" acording to the Definition below. If one still wants to make use of (\ref{three_res_int}) for $\pi$ not in this class, one will have to introduce ad hoc assumptions like (\ref{chens_assumptions1})-(\ref{chens_assumptions3}), controlling the overlap of singularities of the wave function and of the resolvents, thereby reducing the admissible set of initial states. However, noticing that it is possible to ``isolate" not necessarily three, but at least two resolvents in this situation, we can establish a two-resolvent-estimate, Lemma \ref{lemma_2res}, that will suffice to estimate the remaining cases. After reformulating the ideas of \cite{chen_lp} in sections \ref{basic_estimate} and \ref{gen_crossing}, we show how to control the graphs not analyzed so far in sections \ref{crossing_transfer} and \ref{para_anti}.
We start with a classification of pairings.  
\begin{defin}
\label{graph_class}
Each pairing $\pi\in\Pi^{conn}_{n_1,n_2}$ belongs exactly one of the following three cases.
\begin{itemize}
\item We say that a pairing has a \emph{generalized crossing} on the $j$-th one-particle line, if the $j$-th one-particle line contains a crossing of internal contractions, i.e. there are two deltas $\delta\left(k^{(j)}_{l_2+1}-k^{(j)}_{l_2}+k^{(j)}_{l_1+1}-k^{(j)}_{l_1}\right)$ and $\delta\left(k^{(j)}_{i_2+1}-k^{(j)}_{i_2}+k^{(j)}_{i_1+1}-k^{(j)}_{i_1}\right)$ with $l_1<i_1<l_2<i_2$ or there is a transfer delta $\delta\left(k^{(j)}_{i_1+1}-k^{(j)}_{i_1}-k^{(j^{'})}_{i^{'}_1+1}+k^{(j^{'})}_{i^{'}_1}\right)$ such that its end on the $j$-th one-particle line lies between the endpoints of an internal delta $\delta\left(k^{(j)}_{l_2+1}-k^{(j)}_{l_2}+k^{(j)}_{l_1+1}-k^{(j)}_{l_1}\right)$ with $l_1<i_1<l_2$. Examples are shown in Figures \ref{fig_non_simple}, \ref{fig_prot_gate} and \ref{fig_prot_rung}.
\item We write that a pairing with no generalized crossing on either of the one-particle lines has \emph{parallel (or anti-parallel)} transfer contractions if there are a total of $m$ transfer deltas, $\delta\left(k^{(1)}_{l_1+1}-k^{(1)}_{l_1}-k^{(2)}_{l_1^{'}+1}+k^{(2)}_{l_1^{'}}\right)$ through\\ $\delta\left(k^{(1)}_{l_m+1}-k^{(1)}_{l_m}-k^{(2)}_{l_m^{'}+1}+k^{(2)}_{l_m^{'}}\right)$, and the sequences $l$, $l^{'}$ are both increasing (or $l$ is increasing, while $l^{'}$ is decreasing). Examples are shown in Figures \ref{fig_parallel} and \ref{fig_anti}. A graph with only one transfer contraction (which does not lead to a generalized crossing), is consequently classified as both parallel and anti-parallel.
\item A pairing with no generalized crossing on its one-particle lines has \emph{crossing transfer contractions} if the transfer contractions are neither parallel nor anti-parallel, as shown in Figure \ref{fig_crossing}.
\end{itemize}
\end{defin}
\noindent{}\emph{Nested pairings} on a one-particle line, as defined in Definition 2.5 of \cite{erdyau} play no special role in our analysis, we always can make use of other structures and only mention them for completeness.

\label{graph_estimates}
\subsection{Basic estimate}
We start with a basic estimate that is not sharp enough to directly verify our convergence result, but instructive as the improved results below can be understood as slight modifications to the proof of the estimates in this section. As in \cite{chen_lp}, we introduce new integration variables, the \emph{transfer momenta} $u_1,...,u_m\in\tthree$ for all $m$ transfer contractions, thus rewriting every transfer delta
\begin{equation}
\delta\left(k^{(1)}_{l_r+1}-k^{(1)}_{l_r}-k^{(2)}_{l_r^{'}+1}+k^{(2)}_{l_r^{'}}\right)\rightarrow\delta\left(k^{(1)}_{l_r+1}-k^{(1)}_{l_r}-u_r\right)\delta\left(k^{(2)}_{l_r^{'}+1}+k^{(2)}_{l_r^{'}}-u_r\right),
\end{equation}
$r=1,...,m$. For $j=1,2$, and $u=(u_1,...,u_m)$ we can now define the internal deltas on the $j$-th one-particle line
\begin{equation}
\delta_{int}^{(j)}\left(k^{(j)}\right)=\prod_{\mbox{internal deltas}}\delta\left(k^{(j)}_{l+1}-k^{(j)}_{l}+k^{(j)}_{l^{'}+1}-k^{(j)}_{l^{'}}\right)
\end{equation}
and
\begin{equation}
\delta^{(j)}\left(u,k^{(j)}\right)=\delta_{int}^{(j)}\left(k^{(j)}\right)\prod_{r=1}^m\delta\left(k^{(j)}_{l_r+1}-k^{(j)}_{l_r}-u_r\right)
\end{equation}
to arrive at what is Definition 5.3 and Lemma 5.3 in \cite{chen_lp}.
\begin{lma}
\label{lem_factorization}
For a $\pi\in\Pi^{conn}_{n_1,n_2}$ with $m$ transfer deltas we let $\pi_j$ define the the graph defined by $\delta_{int}^{(j)}$ on the $j$-th one-particle line, and have
\begin{equation}
\label{factorization}
\ampl(\pi)=\int_{(\tthree)^m}\dd u_1...\dd u_m \ampl_1\left(u,\pi_1\right)\ampl_2\left(u,\pi_2\right)
\end{equation}
with
\begin{equation}
\label{def_amp1}
\begin{split}
\ampl_1\left(u,\pi_1\right)=&\frac{e^{2\eps t}\lambda^{\oln}}{(2\pi)^2}\int_{I}\dd\alpha_1\int_{\overline{I}}\dd\beta_1e^{-it(\alpha_1-\beta_1)}\\
&\int_{\left(\tthree\right)^{\oln+2}}\dd k\upo\delta^{(1)}\left(u,k^{(1)}\right)\int_{\bbR^{3}}\dd\xi^{(1)}\delta\left(k\upo_{n_1+1}-k\upo_{n_1}-\xi\upo\right)\\
&\hspace{1cm}{\widehat{J_\eta}\left(\xi\upo,\left(k_{n_1}\upo+k_{n_1+1}\upo\right)/2\right)}K^{(1)}\left(k^{(1)},\alpha_1,\beta_1,\eps\right)\\
&\hspace{1cm}\widehat{\phi_0}\left(k\upo_0\right)\overline{\widehat{\phi_0}\left(k\upo_{\oln+1}\right)}\\
\end{split}
\end{equation}
and
\begin{equation}
\label{def_amp2}
\begin{split}
\ampl_2\left(u,\pi_2\right)=&\frac{e^{2\eps t}\lambda^{\oln}}{(2\pi)^2}\int_{\overline{I}}\dd\alpha_2\int_{I}\dd\beta_2e^{it(\alpha_2-\beta_2)}\\
&\int_{\left(\tthree\right)^{\oln+2}}\dd k\upt\delta^{(2)}\left(u,k^{(2)}\right)\int_{\bbR^{3}}\dd\xi^{(2)}\delta\left(k\upt_{n_1+1}-k\upt_{n_1}-\xi\upt\right)\\
&\hspace{1cm}\overline{{\widehat{J_\eta}\left(\xi\upt,\left(k_{n_1}\upt+k_{n_1+1}\upt\right)/2\right)}}K^{(2)}\left(k^{(2)},\alpha_2,\beta_2,\eps\right)\\
&\hspace{1cm}\overline{\widehat{\phi_0}\left(k\upt_0\right)}{\widehat{\phi_0}\left(k\upt_{\oln+1}\right)}.\\
\end{split}
\end{equation}
\end{lma}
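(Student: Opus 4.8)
The plan is a direct, essentially bookkeeping, computation: starting from the expression (\ref{def_amp}) for $\ampl(\pi)$, I introduce the transfer momenta $u_1,\dots,u_m$ and check that the resulting integrand splits as a product of a factor depending only on the first one-particle line and a factor depending only on the second, so that the remaining $u$-integration reproduces (\ref{factorization}).

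First I record the structure of a pairing $\pi\in\Pi^{conn}_{n_1,n_2}$ with $m$ transfer deltas: its $\oln$ contractions are sorted into $\oln-m$ internal deltas --- those collected by $\pi_1$ on the first one-particle line and those collected by $\pi_2$ on the second --- together with $m$ transfer deltas. Consequently the only factor in the integrand of (\ref{def_amp}) that couples $k^{(1)}$ to $k^{(2)}$ is $\delta_\pi$, and within $\delta_\pi$ the coupling sits entirely in the $m$ transfer deltas, since $\delta_\pi=\delta_{int}^{(1)}(k^{(1)})\,\delta_{int}^{(2)}(k^{(2)})$ times the product of transfer deltas. Everything else is already ``line-diagonal'': the contour pair $(\alpha_1,\beta_1)$ occurs only in $K^{(1)}$ and in the phase $e^{-it(\alpha_1-\beta_1)}$, the pair $(\alpha_2,\beta_2)$ only in $K^{(2)}$ and $e^{it(\alpha_2-\beta_2)}$; the prefactor factors as $e^{4\eps t}\lambda^{2\oln}(2\pi)^{-4}=\bigl(e^{2\eps t}\lambda^{\oln}(2\pi)^{-2}\bigr)^2$; and the variable $\xi^{(1)}$, the $\widehat{J_\eta}$ with line-$1$ arguments, and the pair $\widehat{\phi_0}(k^{(1)}_0)\overline{\widehat{\phi_0}(k^{(1)}_{\oln+1})}$ belong to line $1$, symmetrically for line $2$.

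Next, for each transfer delta I substitute its representation as an integral over a transfer momentum $u_r\in\tthree$ as displayed just before the statement of the lemma (the precise sign combinations being those dictated by $\widehat{V^L_\omega}(k)=\overline{\widehat{V^L_\omega}(-k)}$), and pull the $m$ integrations $\int_{(\tthree)^m}\dd u$ to the outermost level. After this step $\delta_\pi$ is replaced by $\delta^{(1)}(u,k^{(1)})\,\delta^{(2)}(u,k^{(2)})$, and for fixed $u$ the integrations over $k^{(1)}$, over the contour $(\alpha_1,\beta_1)$, and over $\xi^{(1)}$ involve only the line-$1$ factors; grouping these reproduces precisely $\ampl_1(u,\pi_1)$ of (\ref{def_amp1}), while the complementary grouping gives $\ampl_2(u,\pi_2)$ of (\ref{def_amp2}). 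This yields (\ref{factorization}).

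The only point needing a word of justification --- and it is not a serious one --- is the legitimacy of interchanging the $u$-integration with the remaining integrations and of pairing the distributions $\delta^{(j)}(u,k^{(j)})$ against the other factors. Here $\eps>0$ is fixed, so each $K^{(j)}$ is a bounded smooth function of its momenta; $\widehat{\phi_0}\in C^\infty(\tthree)$ and $\widehat{J_\eta}\in\caS(\bbR^3\times\tthree)$; and the linear constraints imposed by the $\oln$ delta factors of $\pi$ are independent on $(\tthree)^{2\oln+4}$, so each $\ampl_j(u,\pi_j)$ is a finite, continuous function of $u$ and Fubini applies to the outer $u$-integral. This is really the only obstacle, and it is handled exactly as in the corresponding step of \cite{chen_lp}.
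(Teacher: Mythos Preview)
Your proof is correct and is exactly the natural bookkeeping argument; the paper itself does not give a proof but simply states the lemma as ``Definition~5.3 and Lemma~5.3 in \cite{chen_lp}'', where the same splitting via auxiliary transfer momenta is carried out. Your additional remark on the legitimacy of Fubini (using that $\eps>0$ makes $K^{(j)}$ bounded and that $\widehat{\phi_0}\in C^\infty$) is a welcome clarification that the paper omits.
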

For each of the factors $\ampl_1$ and $\ampl_2$ in (\ref{factorization}), there is the following estimate
\begin{lma}
\label{lem_factor}
With $j=1,2$, $m$ the number of transfer contractions,$0\leq a \leq m$, and the $m$ $\tthree$ components of $u$ partitioned into $v\in\left(\tthree\right)^a$ and $w\in\left(\tthree\right)^{m-a}$, we have the estimate
\begin{equation}
\sup_w\int_{\left(\tthree\right)^a}\dd v\left|\ampl_j\left(u,\pi_j\right)\right|\leq e^{2\eps t}\lambda^{\oln}\eps^{a-\frac{\oln+m}{2}}\left|c\log\eps\right|^{a+\frac{\oln-m}{2}+2}\left\Vert\phi_0\right\Vert_{\ell^2}^2.
\end{equation}
for all $0<\eps\leq1/3$, with a constant $c$ depending only on the test function $J$.
\end{lma}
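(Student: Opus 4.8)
The plan is to run the propagator-by-propagator integration scheme of \cite{erdyau,chen_rs,chen_lp}, keeping careful track of the transfer momenta. First I would bound $\ampl_j(u,\pi_j)$ by pulling absolute values inside every integral: on $I$ and $\overline{I}$, whose points have imaginary part of modulus at most $1$, the oscillatory factor $e^{-it(\alpha_j-\beta_j)}$ has modulus $\leq1$, while $e^{2\eps t}$ is kept. Next I would use the delta $\delta(k_{n_1+1}^{(j)}-k_{n_1}^{(j)}-\xi^{(j)})$ to eliminate $\xi^{(j)}$, so that $\widehat{J_\eta}$ is evaluated at the difference $k_{n_1+1}^{(j)}-k_{n_1}^{(j)}$ and at the mean $(k_{n_1}^{(j)}+k_{n_1+1}^{(j)})/2$, and resolve all delta factors in $\delta^{(j)}(u,k^{(j)})$: each internal delta and each transfer delta expresses one momentum of the $j$-th one-particle line in terms of the adjacent one plus, for a transfer delta, the transfer momentum $u_r$. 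This rewrites $\ampl_j$ as an iterated integral over a family of ``free'' momenta together with $u_1,\dots,u_m$, in which the $\oln+2$ resolvents of $K^{(j)}$ become functions of these variables and of $\alpha_j,\beta_j$.

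The analytic ingredients are the standard resolvent integrals for the lattice dispersion,
\begin{equation*}
\sup_{\gamma}\int_{\tthree}\frac{\dd k}{\left|e(k)-\gamma\mp i\eps\right|}\leq c\left|\log\eps\right|,\qquad\sup_{\gamma}\int_{\tthree}\frac{\dd k}{\left|e(k)-\gamma\mp i\eps\right|^{2}}\leq c\eps^{-1},\qquad\sup_{k\in\tthree}\int_{I}\frac{\left|\dd\alpha\right|}{\left|e(k)-\alpha+i\eps\right|}\leq c\left|\log\eps\right|,
\end{equation*}
together with their translation invariance in $k$; these follow from the mild nondegeneracy of the level surfaces of $e(k)=3-\sum_j\cos 2\pi k_j$ exactly as in \cite{erdyau,chen_rs}, the van Hove points of the lattice dispersion being harmless at this order since the relevant densities of states stay integrable. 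Since $\widehat{\phi_0}\in L^{2}(\tthree)$ only, the two endpoint factors $\widehat{\phi_0}(k_0^{(j)})$ and $\widehat{\phi_0}(k_{\oln+1}^{(j)})$ (up to complex conjugation) are estimated by Cauchy--Schwarz against the squared-resolvent bound, which produces the factor $\|\phi_0\|_{\ell^2}^2$; and the $\widehat{J_\eta}$-factor together with the two resolvents meeting it at the $J$-vertex is disposed of by a Cauchy--Schwarz in the mean variable (squared-resolvent bound) and the uniform estimate $\sup_v\|\widehat J(\cdot,v)\|_{L^1(\bbR^3)}<\infty$ in the difference variable, contributing a constant depending only on $J$ times one factor $\eps^{-1}$.

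The combinatorial core is the book-keeping of the resolvents. Because a one-particle line is linearly ordered and each delta merely identifies momenta along it, one can always choose an order of integration in which the variable currently being integrated still occurs in an as-yet-unused resolvent, so that every free momentum consumes a resolvent and yields a factor $c|\log\eps|$, while each remaining (``pinned'') resolvent depends only on $\alpha_j$ or $\beta_j$ and is bounded by its supremum $\eps^{-1}$ --- except for one pinned resolvent in each group, which is absorbed by the $\alpha_j$- (resp.\ $\beta_j$-) integration and returns a further $c|\log\eps|$. Tallying the sources of $\eps^{-1}$: the $(\oln-m)/2$ internal deltas pin $(\oln-m)/2$ momenta; each transfer momentum $u_r$ over which the supremum is taken leaves a resolvent $\left|e(\,\cdot\,+u_r)-\gamma\mp i\eps\right|^{-1}$ contributing $\eps^{-1}$; the $\widehat{\phi_0}$-endpoints and the $J$-vertex contribute a bounded number more; and subtracting the two saved by the contour integrations one arrives, after the accounting of Lemma~5.3 of \cite{chen_lp}, at the case $a=0$,
\begin{equation*}
\sup_u\left|\ampl_j(u,\pi_j)\right|\leq e^{2\eps t}\lambda^{\oln}\eps^{-\frac{\oln+m}{2}}\left|c\log\eps\right|^{\frac{\oln-m}{2}+2}\left\|\phi_0\right\|_{\ell^2}^{2}.
\end{equation*}
For the general case, each transfer momentum $u_r$ that we integrate instead of supremizing occurs in a resolvent whose supremum over $u_r\in\tthree$ is $\eps^{-1}$ but whose $\tthree$-integral is only $c|\log\eps|$; replacing $a$ of these suprema by integrations multiplies the bound by $(c\,\eps\,|\log\eps|)^{a}$, which is exactly the improvement claimed. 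Fed into Lemma~\ref{lem_factorization} (with $a=0$ in one factor and $a=m$ in the other, so that the $m$-dependence cancels) this will in turn yield the announced bound on $\ampl(\pi)$.

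The part I expect to be most delicate is precisely this uniform combinatorial accounting: one must check that for \emph{every} pairing $\pi_j$ there is a valid integration order, that no free momentum is ever orphaned, and that no more resolvents remain pinned than the stated exponents allow --- in particular that the constraint linking the $J$-vertex momenta does not cost an extra resolvent --- and that all the bounded factors ($J$, $\phi_0$, contour lengths) really are controlled by a constant depending only on $J$. Since the estimate is deliberately not sharp, one need not exploit any crossing structure of $\pi_j$ here; the additional smallness required for Theorem~\ref{theorem_main} will come from the refined treatment of generalized crossings, of crossing and (anti-)parallel transfer contractions, and from the two-resolvent estimate Lemma~\ref{lemma_2res} in the later subsections.
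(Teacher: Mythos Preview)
Your overall plan (iterated $L^1/L^\infty$ resolvent estimates along an ordered one-particle line, as in \cite{chen_lp}) is the right one and coincides with the paper's strategy. But two specific pieces of your bookkeeping differ from the paper and, as written, do not close.

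\textbf{The endpoints.} The paper does \emph{not} Cauchy--Schwarz each of $\widehat{\phi_0}(k_0^{(j)})$, $\widehat{\phi_0}(k_{\oln+1}^{(j)})$ against a squared resolvent. It first applies the elementary AM--GM bound
\[
\left|\widehat{\phi_0}(k_0^{(j)})\right|\left|\widehat{\phi_0}(k_{\oln+1}^{(j)})\right|\leq\tfrac12\left|\widehat{\phi_0}(k_0^{(j)})\right|^2+\tfrac12\left|\widehat{\phi_0}(k_{\oln+1}^{(j)})\right|^2
\]
and treats the two summands separately. In the $|\widehat{\phi_0}(k_0^{(j)})|^2$ summand the \emph{other} endpoint $k_{\oln+1}^{(j)}$ is simply one of the $\oln$ classified (free/dependent) momenta; the $k_0^{(j)}$-resolvent is absorbed by the $\alpha_j$-contour integral for a single $|\log\eps|$, and only then does $\int|\widehat{\phi_0}(k_0^{(j)})|^2=\Vert\phi_0\Vert_{\ell^2}^2$ appear. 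Your ``Cauchy--Schwarz at both endpoints'' cannot be carried out independently: once the pinned resolvents are estimated in $L^\infty$, $k_0^{(j)}$ and $k_{\oln+1}^{(j)}$ are still linked through the chain of free momenta, so the integrand does not factor into two one-endpoint pieces.

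\textbf{The $J$-vertex.} The paper does \emph{not} spend an $\eps^{-1}$ here. It fixes $k_{n_1+1}^{(j)}=k_{n_1}^{(j)}+\xi^{(j)}$ via the delta, absorbs the $k_{n_1+1}^{(j)}$-resolvent by the $\beta_j$-contour integral (one $|\log\eps|$), leaves the $k_{n_1}^{(j)}$-resolvent inside the $\oln$ classified momenta, and integrates $\xi^{(j)}$ last against $\sup_v|\widehat J(\cdot,v)|$ for a $J$-dependent constant. Your proposed Cauchy--Schwarz on the ``two resolvents meeting at $J$'' both double-counts $k_{n_1}^{(j)}$ (it is already among the classified momenta) and wastes a factor $\eps^{-1}$; indeed your tally implicitly uses $(\oln-m)/2+m+2+2=\oln+4$ resolvents while only $\oln+2$ are present.

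With the AM--GM step and the paper's handling of the $J$-vertex in place, the classification is exactly: $\oln$ momenta split into $(\oln-m)/2+a$ free ($L^1$-bound, $|\log\eps|$ each) and $(\oln+m)/2-a$ dependent ($L^\infty$-bound, $\eps^{-1}$ each), plus two $|\log\eps|$ from the $\alpha_j,\beta_j$ integrals, plus $\Vert\phi_0\Vert_{\ell^2}^2$ and a $J$-constant; this gives the stated exponent directly. Your final paragraph (moving from $a=0$ to general $a$ by trading $\sup_{u_r}$ for $\int du_r$) is correct and is exactly how the paper argues.
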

\begin{proof}
This is essentially just an extension of an argument used for the basic estimates in the proof of Theorem \ref{theorem_convergence_exp}, one only has to pay attention to the new contributions coming from the transfer deltas. A detailed proof is given in \cite{chen_lp}, we give a short sketch of the idea. Without loss of generality, we consider $j=1$, estimate
\begin{equation}
\label{estimate_opl}
\begin{split}
\left|\ampl_1\left(u,\pi_1\right)\right|\leq&\frac{e^{2\eps t}\lambda^{\oln}}{(2\pi)^2}\int_{I}|\dd\alpha_1|\int_{\overline{I}}|\dd\beta_1|\\
&\int_{\left(\tthree\right)^{\oln+2}}\dd k\upo\delta^{(1)}\left(u,k^{(1)}\right)\int_{\bbR^{3}}\dd\xi^{(1)}\delta\left(k\upo_{n_1+1}-k\upo_{n_1}-\xi\upo\right)\\
&\hspace{1cm}\left|{\widehat{J_\eta}\left(\xi\upo,\left(k_{n_1}\upo+k_{n_1+1}\upo\right)/2\right)}\right|\left|K^{(1)}\left(k^{(1)},\alpha_1,\beta_1,\eps\right)\right|\\
&\hspace{1cm}\left(\frac{1}{2}\left|\widehat{\phi_0}\left(k\upo_0\right)\right|^2+\frac{1}{2}\left|{\widehat{\phi_0}\left(k\upo_{\oln+1}\right)}\right|^2\right)\\
\end{split}
\end{equation}
and concentrate on the summand containing $\left|\widehat{\phi_0}\left(k\upo_0\right)\right|^2$. Here, and in all proofs of the lemmas below, it is important to note that not all (only roughly half) of the $k^{(j)}_l$ variables are free integration variables, as the others are are then fixed by the delta functions contained in $\delta_\pi$. Before evaluating integrals like (\ref{estimate_opl}), we will therefore have to choose our "free" and "dependent" variables consistently with the $\delta_\pi$ at hand. For the present proof, all momenta $k\upo_l$ except for $k\upo_0$ and $k\upo_{n_1+2}$ are classified either as \emph{free} if there is an internal delta $\delta\left(k\upo_l-k\upo_{l-1}+k\upo_i-k\upo_{i-1}\right)$ with $l<i$ or there is a transfer delta $\delta\left(k\upo_l-k\upo_{l-1}-u_i\right)$ with $u_i$ contained in $v$ (there are $(\oln-m)/2+a$ such indices) or \emph{dependent} if $l$ belongs to the other $(\oln+m)/2-a$ indices. Now we take the trivial $L^\infty$ estimate $\frac{1}{\eps}$ of all resolvents belonging to dependent momenta, and then integrate out all free momenta from $k\upo_{\oln+1}$ to $k\upo_{n_1+2}$ with the $L^1$ estimate (\ref{intest3d}). Next, we perform the $\beta_1$ integral, yielding another $\left|\log\eps\right|$ from the $k\upo_{n_1+1}$ resolvent. $k\upo_{n_1+1}$ is fixed by the delta $\delta\left(k\upo_{n_1+1}-k\upo_{n_1}-\xi\upo\right)$. Then we continue with $L^1$ estimates for all free momenta from $k\upo_{n_1}$ to $k\upo_{1}$, integrate over $\alpha_1$ for another $\left|\log\eps\right|$ from the $k\upo_{0}$ resolvent and get a factor $\left\Vert\phi_0\right\Vert_{\ell^2}^2$ from the $k\upo_0$ integral. Finally, the $\xi\upo$ integral contributes a $\eta$-indpendent factor $\int\dd\xi\sup_{p}\left|\widehat{J}(\xi,p)\right|$. For the summand with $\left|\widehat{\phi_0}\left(k\upo_{\oln+1}\right)\right|^2$, the order of integration and the definition of free and dependent would just be reversed, with left and right interchanged. Collecting all factors proves the lemma.
\end{proof}
\begin{cor}
\label{cor_basic}
For all $\pi\in\Pi^{conn}_{n_1,n_2}$,
\begin{equation}
\left|\ampl(\pi)\right|\leq e^{4\eps t}\lambda^{2\oln}\eps^{-\oln}\left|c\log\eps\right|^{\oln+4}\left\Vert\phi_0\right\Vert_{\ell^2}^4,
\end{equation}
for all $0<\eps\leq1/3$, with a constant $c$ depending only on $J$.
\end{cor}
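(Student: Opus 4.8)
The plan is to combine the factorization of Lemma~\ref{lem_factorization} with the one-particle-line estimate of Lemma~\ref{lem_factor}, distributing the $m$ transfer momenta asymmetrically between the two factors. Since $\pi\in\Pi^{conn}_{n_1,n_2}$ we have $m\ge1$, so the auxiliary parameter $a$ in Lemma~\ref{lem_factor} may be taken to be either $0$ or $m$. Starting from (\ref{factorization}) and moving absolute values inside the $u$-integral,
\[
\left|\ampl(\pi)\right|\le\int_{(\tthree)^m}\dd u_1\cdots\dd u_m\,\left|\ampl_1(u,\pi_1)\right|\left|\ampl_2(u,\pi_2)\right|\le\Bigl(\sup_{u\in(\tthree)^m}\left|\ampl_2(u,\pi_2)\right|\Bigr)\int_{(\tthree)^m}\dd u\,\left|\ampl_1(u,\pi_1)\right|.
\]
First I would bound the $L^1$-integral over all transfer momenta by Lemma~\ref{lem_factor} with $j=1$ and $a=m$ (so that the $w$-tuple is empty and the supremum over $w$ is vacuous), obtaining $e^{2\eps t}\lambda^{\oln}\eps^{(m-\oln)/2}\left|c\log\eps\right|^{(m+\oln)/2+2}\left\Vert\phi_0\right\Vert_{\ell^2}^2$. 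Then I would bound the supremum over $u$ of $|\ampl_2(u,\pi_2)|$ by the same lemma with $j=2$ and $a=0$ (so that the $v$-tuple is empty and the full supremum over $w=u$ appears), obtaining $e^{2\eps t}\lambda^{\oln}\eps^{-(\oln+m)/2}\left|c\log\eps\right|^{(\oln-m)/2+2}\left\Vert\phi_0\right\Vert_{\ell^2}^2$.

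Multiplying the two estimates, the $m$-dependence cancels: the exponents of $\eps$ add to $(m-\oln)/2-(\oln+m)/2=-\oln$, the exponents of $\left|c\log\eps\right|$ add to $(m+\oln)/2+(\oln-m)/2+4=\oln+4$, the prefactors multiply to $e^{4\eps t}\lambda^{2\oln}$, and the $\ell^2$-norms to $\left\Vert\phi_0\right\Vert_{\ell^2}^4$, which is exactly the asserted bound (with the same constant $c$, depending only on $J$, as in Lemma~\ref{lem_factor}). The restriction $0<\eps\le1/3$ is inherited verbatim.

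There is no genuine obstacle here: the corollary is a bookkeeping consequence of the two preceding lemmas. The one point that needs a moment's care is the asymmetric allocation of the transfer variables — one integrates in $L^1$ against the first factor and estimates the second in $L^\infty$ (or vice versa by symmetry); integrating both factors in $L^1$ against $u$ would over-count, while estimating both in $L^\infty$ would forfeit the $\left\Vert\phi_0\right\Vert_{\ell^2}^4$ normalization, and in either case the $\eps$-powers would fail to combine to the clean $\eps^{-\oln}$. It is worth emphasizing that this estimate is recorded only as a uniform a priori bound: the factor $\eps^{-\oln}$ is too large to be summed against the $\lambda^{2\oln}$ generated by the Duhamel expansion, which is precisely why the refined arguments in the later subsections isolate two or three resolvents to recover the missing powers of $\eps$.
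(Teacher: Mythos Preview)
Your proof is correct and follows exactly the same approach as the paper: factorize via Lemma~\ref{lem_factorization}, then apply Lemma~\ref{lem_factor} with $a=m$ to the integrated factor and $a=0$ to the sup factor, and multiply. The paper's proof is just the two-line version of what you wrote, without the explicit arithmetic or the commentary.
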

\begin{proof}
With $j=1$, $j^{'}=2$ (or vice versa) use Lemmas \ref{lem_factorization} and \ref{lem_factor} to obtain
\begin{equation}
\begin{split}
\left|\ampl(\pi)\right|&\leq\int_{\left(\tthree\right)^m}\dd u \left|\ampl_j\left(u,\pi_j\right)\right|\sup_{u^{'}\in\left(\tthree\right)^m}\left|\ampl_{j^{'}}\left(u^{'},\pi_{j^{'}}\right)\right|\\
&\leq e^{4\eps t}\lambda^{2\oln}\eps^{-\oln}\left|c\log\eps\right|^{\oln+4}\left\Vert\phi_0\right\Vert_{\ell^2}^4.
\end{split}
\end{equation}
\end{proof}
\label{basic_estimate}
\subsection{Generalized crossings on the one-particle line}
\begin{lma}
\label{lem_gen_crossing}
Let $\pi\in\bigpi$ have a generalized crossing on the $j$-th one-particle line. Then, for all $0<\eps\leq1/3$, the estimate for $\ampl(\pi)$ of Corollary \ref{cor_basic} can be improved to
\begin{equation}
\left|\ampl(\pi)\right|\leq e^{4\eps t}\lambda^{2\oln}\eps^{\frac{1}{5}-\oln}\left|c\log\eps\right|^{\oln+5}\left\Vert\phi_0\right\Vert_{\ell^2}^4.
\end{equation}
\end{lma}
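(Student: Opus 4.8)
The plan is to combine the factorization of Lemma~\ref{lem_factorization} with the three-resolvent bound (\ref{three_res_int}), sharpening the bookkeeping of Lemma~\ref{lem_factor} only at the location of the generalized crossing. Assume, without loss of generality, that the generalized crossing lies on the first one-particle line, so that it is a feature of $\pi_1$. By Lemma~\ref{lem_factorization},
\[
\left|\ampl(\pi)\right|\le\int_{\left(\tthree\right)^m}\dd u\,\left|\ampl_1\left(u,\pi_1\right)\right|\;\sup_{u'\in\left(\tthree\right)^m}\left|\ampl_2\left(u',\pi_2\right)\right|,
\]
and for the second factor I would simply invoke Lemma~\ref{lem_factor} with $a=0$, which gives $\sup_{u'}\left|\ampl_2(u',\pi_2)\right|\le e^{2\eps t}\lambda^{\oln}\eps^{-(\oln+m)/2}\left|c\log\eps\right|^{(\oln-m)/2+2}\left\Vert\phi_0\right\Vert_{\ell^2}^2$. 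It then remains to improve the $a=m$ instance of Lemma~\ref{lem_factor} to
\[
\int_{\left(\tthree\right)^m}\dd u\,\left|\ampl_1\left(u,\pi_1\right)\right|\le e^{2\eps t}\lambda^{\oln}\eps^{\frac{m-\oln}{2}+\frac15}\left|c\log\eps\right|^{\frac{m+\oln}{2}+3}\left\Vert\phi_0\right\Vert_{\ell^2}^2;
\]
multiplying the two bounds, the $\eps$-exponents add up to exactly $\tfrac15-\oln$ and the $\log$-exponents to exactly $\oln+5$, which yields the stated estimate.

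To obtain this improved bound for $\int\dd u\,|\ampl_1|$ I would rerun the argument sketched after (\ref{estimate_opl}): split off $\tfrac12|\widehat{\phi_0}(k^{(1)}_0)|^2+\tfrac12|\widehat{\phi_0}(k^{(1)}_{\oln+1})|^2$, fix a consistent classification of the $k^{(1)}_l$ into \emph{free} and \emph{dependent} momenta, bound every resolvent attached to a dependent momentum by the trivial $L^\infty$-estimate $\eps^{-1}$, and integrate the free momenta one at a time from the outside in via the $L^1$-estimate (\ref{intest3d}), interleaving the $\alpha_1$- and $\beta_1$-contour integrals exactly as in the proof of Lemma~\ref{lem_factor}. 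The single change occurs at the generalized crossing: rather than treating the three resolvents involved there separately --- two $L^1$-integrations of free momenta, costing $|\log\eps|^2$, together with one $L^\infty$-bound $\eps^{-1}$ for the third --- one isolates these three resolvents and estimates the resulting double momentum integral in one stroke by (\ref{three_res_int}), at a cost $\eps^{-4/5}|\log\eps|^3$. Thus $\eps^{-1}|\log\eps|^2$ is replaced by $\eps^{-4/5}|\log\eps|^3$ while nothing else changes (the $\xi^{(1)}$-integral and $\widehat{J_\eta}$ again contribute only a $J$-dependent constant), which is precisely a gain of $\eps^{1/5}$ against a loss of one $|\log\eps|$ and gives the displayed bound; the two summands of the $\widehat{\phi_0}$-split are symmetric, with the order of integration and the roles of ``left'' and ``right'' interchanged.

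The substantive point, and the step I expect to be the main obstacle, is to check that the three resolvents sitting at a generalized crossing can always be brought into the form required by (\ref{three_res_int}): two of them carry \emph{free} integration variables $p,q$ (each momentum, as always, appearing in exactly one resolvent), while the third carries a momentum which the crossing contractions force to equal $\pm p\pm q+k$ for some $k\in\tthree$ built from momenta already integrated or fixed --- so that the $\sup_{k\in\tthree}$ in (\ref{three_res_int}) makes its precise value irrelevant --- and $\gamma_1,\gamma_2,\gamma_3\in I$, the $\beta_1$-type denominators taking this form after complex conjugation. This is a case analysis along Definition~\ref{graph_class}: if the generalized crossing is a crossing of two internal contractions on the first line one is in exactly the configuration treated in \cite{chen_rs}, Lemma~3.11; if instead an internal contraction straddles the endpoint of a transfer delta on the first line, the same works with one of $p,q$ played by the transfer momentum $u_r$, which is admissible here because we integrate over all of $u$. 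In both situations one can choose the order of the remaining integrations so that $p$ and $q$ are free and do not occur in the arguments of $\widehat{\phi_0}$ --- the ``isolation'' of the three resolvents from the wave functions. This isolation is automatic once a generalized crossing is present, and it is exactly what fails for the pairings that forced the extra hypotheses (\ref{chens_assumptions1})--(\ref{chens_assumptions3}) in \cite{chen_lp}; those remaining pairings will be treated in the following subsections by means of the two-resolvent estimate of Lemma~\ref{lemma_2res}.
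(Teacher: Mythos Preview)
Your proposal is correct and follows essentially the same approach as the paper: factor via Lemma~\ref{lem_factorization}, bound $\ampl_2$ by Lemma~\ref{lem_factor} with $a=0$, and improve the $a=m$ estimate for $\int\dd u\,|\ampl_1|$ by replacing the contributions of three resolvents at the crossing by the three-resolvent bound (\ref{three_res_int}), gaining exactly $\eps^{1/5}|\log\eps|$. The one device the paper makes explicit that you leave implicit is how to guarantee the ``isolation'': the paper selects a \emph{minimal} generalized crossing (one whose crossing interval $\{i_1,\dots,l_2\}$ contains no smaller one), which forces that interval to have no internal or transfer contractions reaching outside it, so that $k^{(1)}_{l_2+1}=k^{(1)}_{i_1+1}-k^{(1)}_{l_1+1}+k^{(1)}_{l_1}+\tilde\xi^{(1)}$ with $\tilde\xi^{(1)}\in\{0,\xi^{(1)}\}$, and the free variables $k^{(1)}_{l_1+1},k^{(1)}_{i_1+1}$ appear nowhere else; this is the concrete verification of the step you flag as ``the main obstacle''.
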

\begin{proof}
Without loss of generality, we can assume a generalized crossing on the first one-particle line and expand $\ampl_1(\pi)$ as in (\ref{estimate_opl}). We take $a=m$ (all transfer deltas contribute their free variable to the first one-particle line) and focus on the summand containing $\left|\widehat{\phi_0}\left(k\upo_0\right)\right|^2$. For deltas labeled as in Definition \ref{graph_class}, we follow \cite{luspo} and call $\left\lbrace i_1,...,l_2\right\rbrace$ the \emph{crossing interval}, and the generalized crossing is called \emph{minimal} if its crossing interval does not contain the crossing interval of another generalized crossing as a proper subset.  By a simple induction argument along the lines of \cite{luspo} we see that there is always a minimal generalized crossing, and label its indices $l_1, i_1, l_2$ again. We adopt the notions of free and dependent from above, first take the $L^\infty$ estimate of all dependent resolvents except for $k\upo_{l_2+1}$, and then integrate out all free $k\upo_{l}$ with $l\in\left\lbrace l_2+2,..., \oln+1\right\rbrace$, using the $L^1$ estimate for each of their resolvents. If $l_2<n_1$, we also take the $\beta_1$ integral and the $k\upo_{n_1+1}$ integrals with the same estimates as before. If not, the $\beta_1$ and $k\upo_{n_1+1}$ will be evaluated at some later point, whenever all $k\upo_l$ with $l>n_1+1$ have been taken care of. Now deviating from the order of integration in the proof of Lemma \ref{lem_factor}, we write
\begin{equation}
k\upo_{l_2+1}=k\upo_{l_2}+k\upo_{l_1}-k\upo_{l_1+1},
\end{equation}
and note that by construction the crossing interval is not connected to the outside by internal or transfer deltas at all, so
\begin{equation}
\tilde{\xi}\upo=k\upo_{l_2}-k\upo_{i_1+1}=\begin{cases}\xi\upo &\mbox{if } l_1<n_1<l_2\\ 0 &\mbox{otherwise,}\end{cases},
\end{equation}
and
\begin{equation}
k\upo_{l_2+1}=k\upo_{i_1+1}-k\upo_{l_1+1}+k\upo_{l_1}+\tilde{\xi}\upo.
\end{equation}
Before touching any of the other remaining variables, we evaluate the three-resolvent integral over the $k\upo_{l_1+1}$, $k\upo_{i_1+1}$ and $k\upo_{l_2+1}$ resolvents,
\begin{equation}
\label{gc_threeres}
\begin{split}
\sup_{\gamma_i\in I}&\int_{\left(\tthree\right)^2}\dd k\upo_{l_1+1} \dd k\upo_{i_1+1} \frac{1}{\left|e\left(k\upo_{l_1+1}\right)-\gamma_1-i\eps\right|\left|e\left(k\upo_{i_1+1}\right)-\gamma_2-i\eps\right|}\\&\hspace{1cm}\times\frac{1}{\left|e\left( k\upo_{i_1+1}-k\upo_{l_1+1}+k\upo_{l_1}+\tilde{\xi}\upo\right)-\gamma_3-i\eps\right|}\\&\hspace{0.5cm}\leq\eps^{-4/5}\left|\log\eps\right|^3
\end{split}
\end{equation}
with all $\gamma_i$ being equal to $\alpha_1$ or $\beta_1$. Then we take $L^1$ estimates for all remaining free momenta, and finally treat the integrals over $\alpha_1$, $k\upo_0$, and $\xi\upo$ as in the proof of Lemma \ref{lem_factor}. All together, we have collected the same factors as before, with three exceptions. Two of the momenta $k\upo_{l_1+1}$, $k\upo_{i_1+1}$ and $k\upo_{l_2+1}$ are free, one is dependent, so before they would have contributed a factor $\eps^{-1}|\log\eps|^2$, now we use (\ref{gc_threeres}) instead. Thus, after applying the same idea to the summand with the factor $\left|\widehat{\phi_0}\left(k\upo_{\oln+1}\right)\right|^2$ (with the necessary changes made to the definitions of "crossing interval" and integration order), the estimate for
\begin{equation}
\int_{(\tthree)^m}\dd u\left|\ampl_1\left(u,\pi_1\right)\right|
\end{equation}
improves by a factor $\eps^{1/5}|\log\eps|$. By plugging this improved bound into
\begin{equation}
\int_{(\tthree)^m}\dd u\left|\ampl_1\left(u,\pi_1\right)\right|\sup_{u^{'}}\left|\ampl_2\left(u^{'},\pi_2\right)\right|,
\end{equation}
this factor carries over to the estimate from Corollary \ref{cor_basic}.
\end{proof}

\begin{figure}[htb] 
\centering 
\def\svgwidth{400pt} 
\input{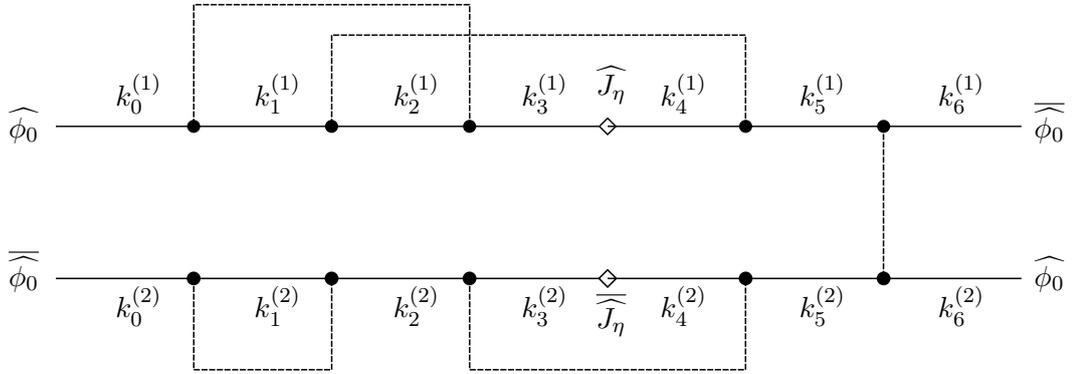} 
\caption{A graph with a vanishing contribution because of a crossing in the first one-particle line.}
\label{fig_non_simple} 
\end{figure} 

\begin{figure}[htb] 
\centering 
\def\svgwidth{400pt} 
\input{prot_gate_adjusted.pdf_tex} 
\caption{A generalized crossing on the second one-particle line, induced by the transfer contraction forcing $\delta\left(k\upo_3-k\upo_2-k\upt_2+k\upt_1\right)$.}
\label{fig_prot_gate} 
\end{figure} 

\begin{figure}[htb] 
\centering 
\def\svgwidth{400pt} 
\input{prot_rung_adjusted.pdf_tex} 
\caption{A generalized crossing on the second one-particle line, induced by transfer contraction forcing $\delta\left(k\upo_3-k\upo_2-k\upt_5+k\upt_4\right)$.}
\label{fig_prot_rung} 
\end{figure} 
\label{gen_crossing}
\subsection{Crossing transfer contractions}
For crossing transfer contractions, we essentially have the same estimates as for the case of generalized crossings on a one-particle line.
\begin{lma}
\label{lem_crossing_transfer}
Let $\pi\in\bigpi$ have no generalized crossings on either one-particle line, but at least one pair of crossing transfer contractions. Then the estimate for $\ampl(\pi)$ of Corollary \ref{cor_basic} can as well be improved to
\begin{equation}
\left|\ampl(\pi)\right|\leq e^{4\eps t}\lambda^{2\oln}\eps^{\frac{1}{5}-\oln}\left|c\log\eps\right|^{\oln+5}\left\Vert\phi_0\right\Vert_{\ell^2}^4
\end{equation}
for all $0<\eps\leq1/3$.
\end{lma}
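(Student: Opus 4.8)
\emph{Proof proposal.} The plan is to follow the pattern of Lemma~\ref{lem_gen_crossing}: isolate a block of three resolvents whose momenta are of the form $e(p)$, $e(q)$ and $e(p\pm q+k)$, estimate that block by the three-resolvent integral (\ref{three_res_int}) --- which yields $\eps^{-4/5}|\log\eps|^3$ in place of the $\eps^{-1}|\log\eps|^2$ that two free and one dependent resolvent would otherwise contribute --- and bound everything else exactly as in Lemmas~\ref{lem_factor} and \ref{lem_gen_crossing}, thereby improving the bound of Corollary~\ref{cor_basic} by the net factor $\eps^{1/5}|\log\eps|$. The one structural difference from Lemma~\ref{lem_gen_crossing} is that here the crossing links the two one-particle lines, so the gain cannot be confined to a single factor $\ampl_j$; I would therefore work with the coupled bound
\begin{equation*}
|\ampl(\pi)|\leq\int_{(\tthree)^m}\dd u_1\cdots\dd u_m\,\bigl|\ampl_1(u,\pi_1)\bigr|\,\bigl|\ampl_2(u,\pi_2)\bigr|
\end{equation*}
furnished by Lemma~\ref{lem_factorization}, with $\ampl_1,\ampl_2$ expanded as in (\ref{estimate_opl}).

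First I would locate the crossing. A pairing with crossing transfer contractions has $m\geq3$ transfer deltas which, once the index sequence $l$ is put in increasing order, leave $l'$ neither increasing nor decreasing (cf. Figure~\ref{fig_crossing}); hence there is a pair of transfer contractions, carrying momenta $u_p,u_q$, whose first-line endpoints satisfy $l_p<l_q$ while their second-line endpoints are inverted, $l'_q<l'_p$. As in Lemma~\ref{lem_gen_crossing} I would take such a crossed pair to be minimal, so the portions of the two one-particle lines enclosed between the respective endpoints contain no smaller crossed pair; together with the hypothesis that neither line carries a generalized crossing --- so the internal contractions on each line are nested --- minimality should force these two enclosed segments to be disconnected from the rest of the graph by any further internal or transfer delta.

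Next I would choose free and dependent momenta as in the proof of Lemma~\ref{lem_factor}, with the extra prescription that $k^{(1)}_{l_p+1}$ and $k^{(1)}_{l_q+1}$ (equivalently, $u_p$ and $u_q$) be genuine integration variables and that $k^{(2)}_{l'_p+1}$ be dependent. Propagating the momentum along the self-contained second-line segment running from $k^{(2)}_{l'_q}$ to $k^{(2)}_{l'_p+1}$, whose internal contractions are nested, one should obtain
\begin{equation*}
k^{(2)}_{l'_p+1}=\pm u_p\pm u_q+K=\pm k^{(1)}_{l_p+1}\pm k^{(1)}_{l_q+1}+\tilde K,
\end{equation*}
with $K$, hence $\tilde K$, built only from momenta already fixed before the $u_p,u_q$-integration. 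Writing $p=k^{(1)}_{l_p+1}$ and $q=k^{(1)}_{l_q+1}$, the resolvents at $p$, at $q$ and at $k^{(2)}_{l'_p+1}=\pm p\pm q+\tilde K$ are then exactly of the shape in (\ref{three_res_int}) (the signs being immaterial since $e$ is even). The estimate would now proceed as in Lemma~\ref{lem_gen_crossing}: the trivial $L^\infty$ bound $\eps^{-1}$ for every dependent resolvent on both lines except the one at $k^{(2)}_{l'_p+1}$; the one-dimensional $L^1$ resolvent estimate for every free momentum except $p$ and $q$; the integral (\ref{three_res_int}) for the $p,q$-integration, which also absorbs the resolvent at $k^{(2)}_{l'_p+1}$; and the $\alpha_j,\beta_j$, $\xi^{(j)}$, $k^{(j)}_0$ and $k^{(j)}_{\oln+1}$ integrals treated as before, each contributing the same $|\log\eps|$, $\int\dd\xi\,\sup_p|\widehat J(\xi,p)|$ and $\Vert\phi_0\Vert_{\ell^2}^2$ factors. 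The only change relative to Corollary~\ref{cor_basic} is that this block of three resolvents contributes $\eps^{-4/5}|\log\eps|^3$ rather than $\eps^{-1}|\log\eps|^2$, which gives the claimed bound.

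The hard part will be the second and third steps, where the bookkeeping is heavier than in Lemma~\ref{lem_gen_crossing}: one must check that for \emph{every} $\pi$ with crossing transfer contractions a crossed pair can be selected whose enclosed segments are simultaneously self-contained on both one-particle lines, and that a globally consistent labelling of free and dependent momenta, together with an order of integration, exists which realizes $k^{(2)}_{l'_p+1}=\pm p\pm q+\tilde K$ with $\tilde K$ independent of $p,q$ while preserving the free status of the momenta still needed for the $L^1$ integrations on either line. I expect this to be handled by the same minimality/induction argument as in Lemma~\ref{lem_gen_crossing} (and \cite{luspo}), carried out on the two one-particle lines at once.
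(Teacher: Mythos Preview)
Your overall architecture is right --- isolate three resolvents coupled through two transfer momenta and apply (\ref{three_res_int}) --- but the key combinatorial step does not go through as stated. You claim that a \emph{minimal} crossed pair $(u_p,u_q)$ will have both enclosed segments (on line~1 between $l_p,l_q$, and on line~2 between $l'_q,l'_p$) disconnected from all further transfer endpoints. This is false. Take $m=3$ transfer contractions with line-1 endpoints $(1,2,3)$ and line-2 endpoints $(3,1,2)$; the crossed pairs are $(1,2)$ and $(1,3)$, both minimal in your sense, yet the first has transfer~3's endpoint sitting inside its line-2 segment and the second has transfer~2's endpoint inside its line-1 segment. So neither choice yields two self-contained intervals, and your identity $k^{(2)}_{l'_p+1}=\pm p\pm q+\tilde K$ with $\tilde K$ independent of $p,q$ is not guaranteed. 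A second, related omission: you only look for a pair witnessing ``not parallel'' ($l_p<l_q$, $l'_p>l'_q$), but once you apply the AM--GM split hidden in (\ref{estimate_opl}) on each line you get four summands, and for the two ``cross-over'' summands (with $|\widehat{\phi_0}(k^{(1)}_0)|^2|\widehat{\phi_0}(k^{(2)}_{\oln+1})|^2$, etc.) that choice does not isolate a usable crossing at all.

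The paper sidesteps both problems. It does \emph{not} factor via Lemma~\ref{lem_factorization}; it bounds the full $\ampl(\pi)$, applies AM--GM to the product of four $\widehat{\phi_0}$'s to obtain four summands, and then makes a \emph{different}, extremal (not minimal) choice of crossed pair for each summand. For the ``cis'' summands it exploits ``not parallel'': pick $l'_2$ \emph{smallest} (fixing $l_2$), then $l_1$ \emph{largest} below $l_2$; maximality of $l_1$ forces the line-1 interval $(l_1,l_2)$ to contain no transfer endpoint, so $k^{(1)}_{l_2-1}=k^{(1)}_{l_1}+\tilde\xi^{(1)}$, while minimality of $l'_2$ ensures that $k^{(2)}_{l'_2-1}$ depends only on momenta with small indices, which are still available as free variables. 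For the ``cross-over'' summands it exploits ``not anti-parallel'' with the analogous extremal choice. In each case only the line-1 interval needs to be self-contained, never both simultaneously --- which is exactly what your counterexample shows cannot always be arranged.
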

\begin{proof}
For this proof, it is insufficient to split up the amplitude into $\ampl_1$ and $\ampl_2$ by Lemma \ref{lem_factorization}, and one rather has to consider the complete estimate
\begin{equation}
\begin{split}
\left|\ampl(\pi)\right|\leq&\frac{e^{4\eps t}\lambda^{2\oln}}{(2\pi)^4}\int_{I}\left|\dd\alpha_1\right|\int_{\overline{I}}\left|\dd\beta_1\right|\int_{\overline{I}}\left|\dd\alpha_2\right|\int_{I}\left|\dd\beta_2\right|\\
&\int_{\left(\tthree\right)^{2\oln+4}}\dd k\upo\dd k\upt\delta_\pi\left(k\upo{},k\upt\right)\\
&\int_{\bbR^{3}}\dd\xi^{(1)}\int_{\bbR^{3}}\dd\xi^{(2)}\delta\left(k\upo_{n_1+1}-k\upo_{n_1}-\xi\upo\right)\delta\left(k\upt_{n_1+1}-k\upt_{n_1}-\xi\upt\right)\\
&\hspace{5mm}\left|\widehat{J_\eta}\left(\xi\upo,\left(k_{n_1}\upo+k_{n_1+1}\upo\right)/2\right)\overline{\widehat{J_\eta}\left(\xi\upt,\left(k_{n_1}\upt+k_{n_1+1}\upt\right)/2\right)}\right|\\
&\hspace{5mm}\left|K^{(1)}\left(k^{(1)},\alpha_1,\beta_1,\eps\right)K^{(2)}\left(k^{(2)},\alpha_2,\beta_2,\eps\right)\right|\\
&\hspace{5mm}\left(\frac{1}{2}\left|\widehat{\phi_0}\left(k\upo_0\right)\right|^2+\frac{1}{2}\left|\widehat{\phi_0}\left(k\upo_{\oln+1}\right)\right|^2\right)\left(\frac{1}{2}\left|{\widehat{\phi_0}\left(k\upt_0\right)}\right|^2+\frac{1}{2}\left|\widehat{\phi_0}\left(k\upt_{\oln+1}\right)\right|^2\right).
\end{split}
\end{equation}
We first focus on the summand containing the factor $\left|\widehat{\phi_0}\left(k\upo_0\right)\right|^2\left|{\widehat{\phi_0}\left(k\upt_0\right)}\right|^2$ and call the $k^{(j)}_l$, $l\notin\lbrace0,n_1+1\rbrace$ with the same convention as before, \emph{free} if there is an internal delta $\delta\left(k^{(j)}_l-k^{(j)}_{l-1}+k^{(j)}_i-k^{(j)}_{i-1}\right)$, with $j=1,2$ and $l<i$, or if $j=1$ and there is a transfer delta $\delta\left(k^{(1)}_l-k^{(1)}_{l-1}-k^{(2)}_i+k^{(2)}_{i-1}\right)$, while we call it \emph{dependent} in all other cases. Furthermore, as the transfer contractions are not parallel, there is a pair of transfer deltas $\delta\left(k^{(1)}_{l_1}-k^{(1)}_{l_1-1}-k^{(2)}_{l^{'}_1}+k^{(2)}_{l^{'}_1-1}\right)$ and $\delta\left(k^{(1)}_{l_2}-k^{(1)}_{l_2-1}-k^{(2)}_{l^{'}_2}+k^{(2)}_{l^{'}_2-1}\right)$ such that $l_1<l_2$ but $l^{'}_1>l^{'}_2$. If several such pairs exist, we choose a particular set of indices ${l}_1$, ${l}^{'}_1$, ${l}_2$ and ${l}^{'}_2$ as follows. First, we take ${l}^{'}_2$ as small as possible, and then, after having fixed ${l}^{'}_2$ and thus $l_2$, we select the largest $l_1<l_2$ that does the job.
For this choice we take the $L^\infty$ estimates of all resolvents belonging to dependent momenta except for $k^{(2)}_{l^{'}_2}$, and then integrate out all free $k\upo_l$ with $l$ from $\oln+1$ to $l_2+1$ and all free $k\upt_l$ with $l$ from $\oln+1$ to $l^{'}_2+1$ (note that by construction $k^{(2)}_{l^{'}_2}$ does not depend on any of them), using the usual $L^1$ estimates. 
During this procedure, or lateron, whenever all $k^{(j)}_{l}$ with $l>n_1+1$ are taken care of, we will take the $\beta_j$  ($j=1,2$) integrals with the usual estimates. All remaining free momenta can be integrated over in an arbitrary order, as long as one pays attention to the dependence of $k^{(2)}_{l^{'}_2}$. By construction,
\begin{equation}
k\upo_{l_2-1}=k\upo_{l_1}+\tilde{\xi}\upo
\end{equation}
with 
\begin{equation}
\tilde{\xi}\upo=\begin{cases}\xi\upo &\mbox{if } l_1\leq n_1<l_2-2\\ 0 &\mbox{otherwise.}\end{cases}
\end{equation}
Therefore
\begin{equation}
\begin{split}
k^{(2)}_{l^{'}_2}&=k^{(2)}_{l^{'}_2-1}+k\upo_{l_2}-k\upo_{l_2-1}\\
&=k\upo_{l_2}-k\upo_{l_1}+k^{(2)}_{l^{'}_2-1}-\tilde{\xi}\upo\\
&=k\upo_{l_2}-k\upo_{l_1}+q,
\end{split}
\end{equation}
with $q$ depending only on $\tilde{\xi}\upo$ and free momenta $k\upo_l$ with $l<l_1$ and $k\upt_l$ with $l<l^{'}_2$. We are now ready to estimate the $k\upo_{l_1}$, $k\upo_{l_2}$ and $k^{(2)}_{l^{'}_2}$ resolvents by
\begin{equation}
\label{ct_threeres}
\begin{split}
\sup_{\gamma_i\in I}&\sup_{q\in\tthree}\int_{\left(\tthree\right)^2}\dd k\upo_{l_1} \dd k\upo_{l_2} \frac{1}{\left|e\left(k\upo_{l_1}\right)-\gamma_1-i\eps\right|\left|e\left(k\upo_{l_2}\right)-\gamma_2-i\eps\right|}\\&\hspace{1cm}\times\frac{1}{\left|e\left( k\upo_{l_2}-k\upo_{l_1}+q\right)-\gamma_3+i\eps\right|}\\&\hspace{0.5cm}\leq\eps^{-4/5}\left|\log\eps\right|^3.
\end{split}
\end{equation}
Then we take $L^1$ estimates for all remaining free momenta, and finally treat the integrals over $\alpha_j$, $k^{(j)}_0$, and $\xi^{(j)}$ as before. For the summand containing $\left|\widehat{\phi_0}\left(k\upo_{\oln+1}\right)\right|^2\left|{\widehat{\phi_0}\left(k\upt_{\oln+1}\right)}\right|^2$, the argument works exactly the same, but with all definitions and integrations made "from left to right", now. The cross-over situation, with factors $\left|\widehat{\phi_0}\left(k\upo_0\right)\right|^2\left|{\widehat{\phi_0}\left(k\upt_{\oln+1}\right)}\right|^2$ or $\left|\widehat{\phi_0}\left(k\upo_{\oln+1}\right)\right|^2\left|{\widehat{\phi_0}\left(k\upt_0\right)}\right|^2$, is slightly different; we will concentrate on the summand with $\left|\widehat{\phi_0}\left(k\upo_0\right)\right|^2\left|{\widehat{\phi_0}\left(k\upt_{\oln+1}\right)}\right|^2$. In this case, a momentum $k^{(j)}_l$ is classified as \emph{free} either if $j=1$, $l\notin\lbrace0,n_1+1\rbrace$ and there is an internal delta $\delta\left(k^{(1)}_l-k^{(1)}_{l-1}+k^{(1)}_i-k^{(1)}_{i-1}\right)$, with $l<i$, or if $j=1$ and there is a transfer delta $\delta\left(k^{(1)}_l-k^{(1)}_{l-1}-k^{(2)}_i+k^{(2)}_{i-1}\right)$, or if $j=2$, $l\notin\lbrace n_1,\oln+1\rbrace$ and there is an internal delta $\delta\left(k^{(2)}_l-k^{(2)}_{l+1}+k^{(2)}_i-k^{(2)}_{i+1}\right)$, with $l>i$. All other $k^{(j)}_l$ with $l\notin\lbrace0,n_1+1\rbrace$ ($j=1$) or $l\notin\lbrace n_1,\oln+1\rbrace$ ($j=2$) are \emph{dependent}. In this cross-over case, we utilize the fact that the transfer contractions are not anti-parallel, either, so there exists a pair of transfer deltas $\delta\left(k^{(1)}_{l_1}-k^{(1)}_{l_1-1}-k^{(2)}_{l^{'}_1+1}+k^{(2)}_{l^{'}_1}\right)$ and $\delta\left(k^{(1)}_{l_2}-k^{(1)}_{l_2-1}-k^{(2)}_{l^{'}_2+1}+k^{(2)}_{l^{'}_2}\right)$ such that $l_1<l_2$ and $l^{'}_1<l^{'}_2$. (One can visualize that these two transfer contractions are "crossing" under the given circumstances by bringing the two $\left|\widehat{\phi_0}\right|^2$ factors to the same side of the graph by rotating the second one-particle line by $180^\circ$, compare Fig. \ref{fig_crossing}.) Given the choice of several such pairs, we first choose the maximal $l^{'}_2$, and then, with given $l^{'}_2$ and thus $l_2$, select the largest possible $l_1<l_2$.
For this choice we take the $L^\infty$ estimates of all resolvents belonging to dependent momenta except for $k^{(2)}_{l^{'}_2}$, and then integrate out all free $k\upo_l$ with $l$ from $\oln+1$ to $l_2+1$ and all free $k\upt_l$ with $l$ from $0$ to $l^{'}_2-1$ (note that by construction $k^{(2)}_{l^{'}_2}$ does not depend on any of them), using the usual $L^1$ estimates. 
During this procedure, or lateron, whenever all $k^{(1)}_{l}$ with $l>n_1+1$ and $k\upt_l$ with $l<n_1$ are taken care of, we will take the $\beta_1$ and $\alpha_2$  integrals with the usual estimate, $|\log\eps|$ each. All remaining free momenta can be integrated over in an arbitrary order, as long as one pays attention to the dependence of $k^{(2)}_{l^{'}_2}$. As before,
\begin{equation}
k\upo_{l_2-1}=k\upo_{l_1}+\tilde{\xi}\upo
\end{equation}
with 
\begin{equation}
\tilde{\xi}\upo=\begin{cases}\xi\upo &\mbox{if } l_1\leq n_1<l_2-2\\ 0 &\mbox{otherwise.}\end{cases}
\end{equation}
Therefore
\begin{equation}
\begin{split}
k^{(2)}_{l^{'}_2}&=k^{(2)}_{l^{'}_2+1}-k\upo_{l_2}+k\upo_{l_2-1}\\
&=k\upo_{l_1}-k\upo_{l_2}+k^{(2)}_{l^{'}_2+1}+\tilde{\xi}\upo\\
&=k\upo_{l_1}-k\upo_{l_2}+p,
\end{split}
\end{equation}
with $p$ depending only on $\tilde{\xi}\upo$ and free momenta $k\upo_l$ with $l<l_1$ and $k\upt_l$ with $l>l^{'}_2$. This time we have the integral of the three resolvents for $k\upo_{l_1}$, $k\upo_{l_2}$ and $k^{(2)}_{l^{'}_2}$
\begin{equation}
\begin{split}
\sup_{\gamma_i\in I}&\sup_{p\in\tthree}\int_{\left(\tthree\right)^2}\dd k\upo_{l_1} \dd k\upo_{l_2} \frac{1}{\left|e\left(k\upo_{l_1}\right)-\gamma_1-i\eps\right|\left|e\left(k\upo_{l_2}\right)-\gamma_2-i\eps\right|}\\&\hspace{1cm}\times\frac{1}{\left|e\left( k\upo_{l_1}-k\upo_{l_2}+p\right)-\gamma_3+i\eps\right|}\\&\hspace{0.5cm}\leq\eps^{-4/5}\left|\log\eps\right|^3.
\end{split}
\end{equation}
After $L^1$ estimates for all remaining free momenta one can treat the integrals over $\alpha_1$, $k^{(1)}_0$, and $\xi^{(1)}$ as well as $\beta_2$, $k^{(2)}_{\oln+1}$, and $\xi^{(2)}$ as before. Now we have understood how to treat all four different summands, and for each of them have obtained the same estimates as if we had bounded
\begin{equation}
\int_{(\tthree)^m}\dd u\left|\ampl_1\left(u,\pi_1\right)\right|\sup_{u^{'}}\left|\ampl_2\left(u^{'},\pi_2\right)\right|,
\end{equation}
with the standard estimates from Lemma \ref{lem_factor}, except for three exceptions each, namely the resolvents belonging to $k\upo_{l_1}$, $k\upo_{l_2}$ and $k^{(2)}_{l^{'}_2}$ (two free, one dependent momentum), which did not yield $\eps^{-1}|\log\eps|^2$ but $\eps^{-4/5}\left|\log\eps\right|^3$. So, again, we improve the basic estimate from Corollary \ref{cor_basic} by $\eps^{1/5}|\log\eps|$.
\end{proof}
\begin{figure}[htb] 
\centering 
\def\svgwidth{400pt} 
\input{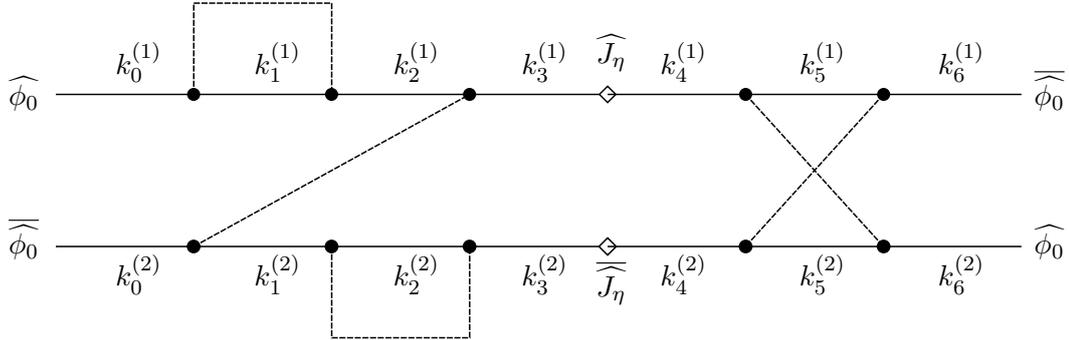} 
\caption{Crossing transfer contractions. The intersection of the two transfer contractions on the right will disappear if we rotate the second one-particle line by $180^\circ$, but then the transfer contraction on the left will produce new crossings.}
\label{fig_crossing} 
\end{figure} 
\label{crossing_transfer}
\subsection{Parallel and antiparallel transfer contractions}
\begin{lma}
\label{lem_para_anti}
Let $\pi\in\bigpi$ be a graph with no generalized crossings on either one-particle line, and with parallel or anti-parallel transfer contractions. Then there is a $c<\infty$ just depending on $J$ such that
\begin{equation}
\label{est_para}
\left|\ampl(\pi)\right|\leq e^{4\eps t}\lambda^{2\oln}\eps^{\frac{1}{5}-\oln}\left|c\log\eps\right|^{\oln+5}\left\Vert\phi_0\right\Vert_{\ell^2}^4
\end{equation}
for all $0<\eps\leq1/3$.
\end{lma}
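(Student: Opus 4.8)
The plan is to run the same argument as in the proofs of Lemmas \ref{lem_gen_crossing} and \ref{lem_crossing_transfer}, with the three-resolvent integral (\ref{three_res_int}) replaced by the two-resolvent estimate of Lemma \ref{lemma_2res}. By Definition \ref{graph_class} the pairings treated here are exactly the ones left over, and such a $\pi$ carries no crossing structure at all, so there is no triple of resolvents that can be isolated from the wave functions $\widehat{\phi_0}$; but one can still isolate a \emph{pair} of resolvents, one on each one-particle line, coupled through a common transfer momentum. As in Lemma \ref{lem_crossing_transfer}, I would not use the factorization of Lemma \ref{lem_factorization}, but estimate the full amplitude $|\ampl(\pi)|$, split into the four summands carrying $|\widehat{\phi_0}(k\upo_0)|^2|\widehat{\phi_0}(k\upt_0)|^2$, $|\widehat{\phi_0}(k\upo_{\oln+1})|^2|\widehat{\phi_0}(k\upt_{\oln+1})|^2$, and the two cross-over products, classifying the $k^{(j)}_l$ as \emph{free} or \emph{dependent} as there.

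For each summand the key point is the choice of the pair of resolvents. Because the transfer contractions form a monotone ladder, there is a transfer delta $\delta(k^{(1)}_{l_r+1}-k^{(1)}_{l_r}-k^{(2)}_{l^{'}_r+1}+k^{(2)}_{l^{'}_r})$ whose two endpoints lie closest, along the respective one-particle lines, to the two $\widehat{\phi_0}$-factors retained in that summand --- the outermost rung of the ladder in the parallel case; in the anti-parallel case it is the rung joining the smallest index on one line to the largest on the other, which is why anti-parallel contractions pair naturally with the cross-over $\widehat{\phi_0}$-summands. Promoting the transfer momentum $u_r$ to a genuine integration variable and ordering the free/dependent bookkeeping so that this rung is the last one reached, the two resolvents flanking it take the form
\[
\frac{1}{\bigl|e(p_1+u_r)-\gamma_1-i\eps\bigr|}\,\frac{1}{\bigl|e(p_2+u_r)-\gamma_2-i\eps\bigr|},
\]
with $p_1,p_2$ free momenta and $\gamma_1\in\{\alpha_1,\beta_1\}$, $\gamma_2\in\{\alpha_2,\beta_2\}$. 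All remaining free momenta, and the $\alpha_j,\beta_j,k^{(j)}_0,\xi^{(j)}$ integrals, are treated exactly as in the proofs of Lemmas \ref{lem_factor} and \ref{lem_crossing_transfer}; the only change is that the $u_r$-integral of the displayed product is bounded by Lemma \ref{lemma_2res}, yielding $\eps^{-4/5}$ times a fixed power of $|\log\eps|$, uniformly in $\gamma_1,\gamma_2$ and in the frozen momenta. In the counting behind Corollary \ref{cor_basic} this pair --- one resolvent freed by the $u_r$-integration, one dependent --- contributes $\eps^{-1}|\log\eps|$, so the basic estimate improves by $\eps^{1/5}|\log\eps|$, which gives exactly (\ref{est_para}), i.e. the same bound as in Lemmas \ref{lem_gen_crossing} and \ref{lem_crossing_transfer}.

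I expect the main obstacle to be the bookkeeping of free and dependent variables: one must verify, in each of the four $\widehat{\phi_0}$-summands and for both the parallel and anti-parallel cases, that the order of integration can be arranged so that $p_1$, $p_2$ and $u_r$ are still available when the chosen rung is reached (hence the two flanking resolvents can be brought into the form above), and for the cross-over summands one again needs the trick of rotating the second one-particle line by $180^\circ$ from Lemma \ref{lem_crossing_transfer}; the degenerate case $m=1$, where the unique transfer contraction is both parallel and anti-parallel, is handled by the same argument. The genuinely new analytic ingredient is Lemma \ref{lemma_2res} of Appendix \ref{app_2res}: whereas in the continuum model the level sets $\{e(k)=\mathrm{const}\}$ are spheres and such a two-resolvent bound is elementary, here it must be established uniformly over the more intricate lattice level surfaces, whose van~Hove degeneracies are what force the exponent $4/5$ rather than a pure logarithm.
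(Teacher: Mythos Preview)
Your sketch misses the crucial non-uniformity in Lemma \ref{lemma_2res}. That lemma does \emph{not} give a bound $C\eps^{-4/5}|\log\eps|^2$ uniformly in the frozen momenta; it gives
\[
C\Bigl(\eps^{-4/5}+\frac{\eps^{-2/5}}{\dist(p_1-p_2,\caE)}\Bigr)|\log\eps|^2,
\]
and the second term blows up like $\eps^{-1}$ when $p_1-p_2$ approaches $\caE$. In the parallel case with the outermost rung and no generalized crossings, the momenta you call $p_1,p_2$ are \emph{not} free: everything to the left of that rung is paired off by internal deltas, so $p_1=k^{(1)}_{l_1}=k^{(1)}_0+\tilde\xi^{(1)}$ and $p_2=k^{(2)}_{l'_1}=k^{(2)}_0+\tilde\xi^{(2)}$. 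Hence $p_1-p_2$ is essentially $k^{(1)}_0-k^{(2)}_0$, and the only remaining integration over this quantity is against $|\widehat{\phi_0}(k^{(1)}_0)|^2|\widehat{\phi_0}(k^{(2)}_0)|^2$. Since $\widehat{\phi_0}$ is merely in $L^2$, $|\widehat{\phi_0}|^2*|\widehat{\phi_0}|^2$ can concentrate arbitrarily close to $0\in\caE$, and the singular term is not controlled --- this is precisely the obstruction that forced Chen to impose the concentration-of-singularity assumptions (\ref{chens_assumptions1})--(\ref{chens_assumptions3}).

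The paper's proof circumvents this by splitting the $k$-integration into a small set $\caB=\{\dist(k^{(1)}_{l_1}-k^{(2)}_{l'_1},\caE)<\eps^{2/5}\}$ and its complement. On $\caB^c$ your argument works verbatim: the singular term is at most $\eps^{-4/5}$ and one gains $\eps^{1/5}|\log\eps|$ over Corollary \ref{cor_basic}. On $\caB$, however, one does \emph{not} apply the Cauchy--Schwarz step $|\widehat{\phi_0}(k^{(1)}_0)\widehat{\phi_0}(k^{(1)}_{\oln+1})|\leq\frac12(|\widehat{\phi_0}(k^{(1)}_0)|^2+|\widehat{\phi_0}(k^{(1)}_{\oln+1})|^2)$ at all; instead one keeps all four $\widehat{\phi_0}$ factors, declares the rightmost transfer momentum $u_m$ dependent so that the $u_m$-integral produces one $\|\phi_0\|_{\ell^2}^2$ from $\widehat{\phi_0}(k^{(1)}_{l_m}+u_m)\widehat{\phi_0}(k^{(2)}_{l'_m}+u_m)$, and then picks up the volume factor $|\caB|\lesssim\eps^{6/5}$ from the now flat $w$-integral. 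The anti-parallel case needs a second such set $\caC$, and for $m=1$ one has to intersect $\caB^c\cap\caC^c$. This geometric splitting is the missing idea in your proposal.
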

\begin{proof}
First, let $\pi$ be a graph with parallel transfer contractions, so there are $m$ transfer contractions $\delta\left(k^{(1)}_{l_1+1}-k^{(1)}_{l_1}-k^{(2)}_{l_1^{'}+1}+k^{(2)}_{l_1^{'}}\right)$ to $\delta\left(k^{(1)}_{l_m+1}-k^{(1)}_{l_m}-k^{(2)}_{l_m^{'}+1}+k^{(2)}_{l_m^{'}}\right)$ with $l_1<...<l_m$ and $l_1^{'}<...<l_m^{'}$. 
\begin{figure}[htb] 
\centering 
\def\svgwidth{400pt} 
\input{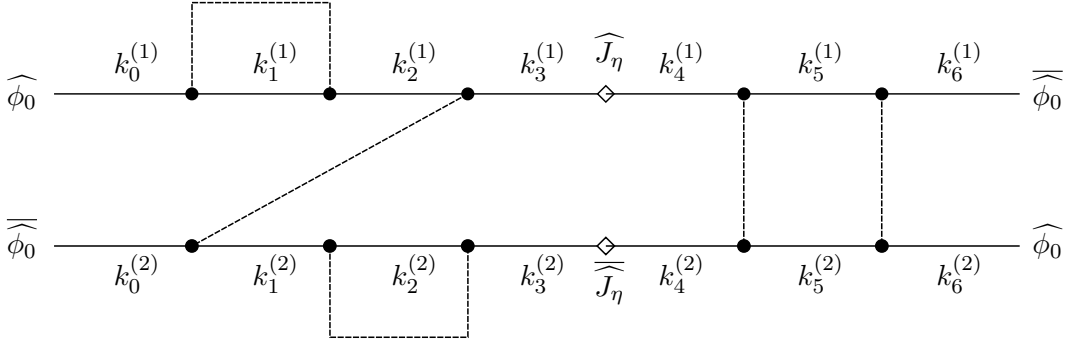} 
\caption{Three parallel transfer contractions, with no intersection at all.}
\label{fig_parallel} 
\end{figure} 
We arbitrarily single out the leftmost transfer delta and define the set
\begin{equation}
\label{defcab}
\caB=\left\lbrace\left(k\upo,k\upt\right)\in\left(\tthree\right)^{2\oln+4}:\dist\left(k^{(1)}_{l_1}-k^{(2)}_{l_1^{'}},\caE\right)<\eps^{2/5}\right\rbrace,
\end{equation}
with $\caE=\left((0,0,0),\left(\frac{1}{2},\frac{1}{2},\frac{1}{2}\right)\right)$, and split up 
\begin{equation}
\left|\ampl(\pi)\right|\leq\ampl_{\caB}(\pi)+\ampl_{\caB^c}(\pi),
\end{equation}
where we denoted
\begin{equation}
\label{amplb}
\begin{split}
\ampl_{\caB}(\pi)=&\frac{e^{4\eps t}\lambda^{2\oln}}{(2\pi)^4}\int_{I}\left|\dd\alpha_1\right|\int_{\overline{I}}\left|\dd\beta_1\right|\int_{\overline{I}}\left|\dd\alpha_2\right|\int_{I}\left|\dd\beta_2\right|\\
&\int_{\caB}\dd k\upo\dd k\upt\delta_\pi\left(k\upo{},k\upt\right)\\
&\int_{\bbR^{3}}\dd\xi^{(1)}\int_{\bbR^{3}}\dd\xi^{(2)}\delta\left(k\upo_{n_1+1}-k\upo_{n_1}-\xi\upo\right)\delta\left(k\upt_{n_1+1}-k\upt_{n_1}-\xi\upt\right)\\
&\hspace{1cm}\left|{\widehat{J_\eta}\left(\xi\upo,\left(k_{n_1}\upo+k_{n_1+1}\upo\right)/2\right)}\overline{\widehat{J_\eta}\left(\xi\upt,\left(k_{n_1}\upt+k_{n_1+1}\upt\right)/2\right)}\right|\\
&\hspace{1cm}\left|K^{(1)}\left(k^{(1)},\alpha_1,\beta_1,\eps\right)K^{(2)}\left(k^{(2)},\alpha_2,\beta_2,\eps\right)\right|\\
&\hspace{1cm}\left|\widehat{\phi_0}\left(k\upo_0\right)\right|\left|{\widehat{\phi_0}\left(k\upo_{\oln+1}\right)}\right|\left|{\widehat{\phi_0}\left(k\upt_0\right)}\right|\left|\widehat{\phi_0}\left(k\upt_{\oln+1}\right)\right|,\\
\end{split}
\end{equation}
and similarly $\ampl_{\caB^c}(\pi)$ with the integral taken over $\caB^c$ instead of $\caB$. We first study $\ampl_{\caB}(\pi)$. It is important that in this case, we will \emph{not} estimate  $\left|\widehat{\phi_0}\left(k\upo_0\right)\right|\left|{\widehat{\phi_0}\left(k\upo_{\oln+1}\right)}\right|$ and similar products by sums of squares. Notice that if we take
\begin{equation}
\tilde{\xi}\upo=\begin{cases}\xi\upo &\mbox{if } l_1>n_1\\ 0 &\mbox{otherwise}\end{cases}
\end{equation}
and
\begin{equation}
\tilde{\xi}\upt=\begin{cases}\xi\upt &\mbox{if } l^{'}_1>n_1\\ 0 &\mbox{otherwise,}\end{cases}
\end{equation}
then
\begin{equation}
\caB=\left\lbrace\left(k\upo,k\upt\right)\in\left(\tthree\right)^{2\oln+4}:\dist\left(k^{(1)}_{0}+\tilde{\xi}\upo-k^{(2)}_{0}-\tilde{\xi}\upt,\caE\right)<\eps^{2/5}\right\rbrace.
\end{equation}
Later, we will therefore use $k^{(1)}_{0}$ and $w$ as integration variables, with
\begin{equation}
\label{constraintw}
\dist(w,\caE)<\eps^{2/5}
\end{equation}
and have 
\begin{equation}
k^{(2)}_{0}=k^{(1)}_{0}+w+\tilde{\xi}\upo-\tilde{\xi}\upt.
\end{equation}
We also slightly modify the classification of $k^{(j)}_l$ (with $l\notin\lbrace0,n_1+1\rbrace$ into free and dependent momenta. We still call $k^{(j)}_l$ \emph{free} if there is an internal delta\\ $\delta\left(k^{(j)}_l-k^{(j)}_{l-1}+k^{(j)}_i-k^{(j)}_{i-1}\right)$, with $j=1,2$ and $l<i$, or if it is one of the $k^{(1)}_{l_j+1}$, $j=1,...,m-1$, but all other momenta, now including $k^{(1)}_{l_m+1}$, are considered \emph{dependent}. Thus, we have one more dependent momentum, and consequently freed up one new integration variable, which we take to be
\begin{equation}
u_m=k^{(1)}_{l_m+1}-k^{(1)}_{l_m}=k^{(2)}_{l_m^{'}+1}-k^{(2)}_{l_m^{'}},
\end{equation}
and we can rewrite the last line of (\ref{amplb}) as
\begin{equation}
\left|\widehat{\phi_0}\left(k\upo_0\right)\right|\left|{\widehat{\phi_0}\left(k^{(1)}_{l_m}+u_m\right)}\right|\left|{\widehat{\phi_0}\left(k^{(1)}_{0}+w+\tilde{\xi}\upo-\tilde{\xi}\upt\right)}\right|\left|\widehat{\phi_0}\left(k^{(2)}_{l_m^{'}}+u_m\right)\right|.
\end{equation}
After taking $L^\infty$ estimates of all dependent resolvents, and $L^1$ bounds for all free resolvents $k^{(j)}_l$, with $j=1$ and $l>l_m$, or $j=2$ and $l>l^{'}_m$, the only remaining integration variable right of the $m$-th transfer contraction is $u_m$; integrating over $u_m$ produces a factor $\left\Vert\phi_0\right\Vert_{\ell^2}^2$. (During this process, or later, depending on when we reach $n_1+1$ we will also take the $\beta_j$ ($j=1,2$) integrals with the usual $|\log\eps|$ estimates.) Then we integrate out all remaining free resolvents with $L^1$ estimates, the $\alpha_j$ ($j=1,2$) integrals with a bound $|\log\eps|$, each. Now take the $k\upo_0$ integral for another $\left\Vert\phi_0\right\Vert_{\ell^2}^2$ factor. The $w$ integral is completely "flat", as all wave functions and resolvents have been taken care of, so by (\ref{constraintw}), it will yield a factor $\eps^{6/5}$. Then the $\xi\upo$ and $\xi\upt$ integral will contribute a $J$-dependent constant. All together, for the estimate of $\ampl_{\caB}(\pi)$ we have one $L^\infty$ factor more, one $L^1$ factor less (as there is one dependent resolvent more), and a geometrical factor $\eps^{6/5}$, so we improve the standard estimate by $\eps^{1/5}|\log\eps|^{-1}$,
\begin{equation}
\ampl_{\caB}(\pi)\leq e^{4\eps t}\lambda^{2\oln}\eps^{\frac{1}{5}-\oln}\left|c\log\eps\right|^{\oln+3}\left\Vert\phi_0\right\Vert_{\ell^2}^4.
\end{equation}
For the estimate of $\ampl_{\caB^c}(\pi)$, we now bound
\begin{equation}
\label{prodtosquare}
\begin{split}
&\left|\widehat{\phi_0}\left(k\upo_0\right)\right|\left|{\widehat{\phi_0}\left(k\upo_{\oln+1}\right)}\right|\left|{\widehat{\phi_0}\left(k\upt_0\right)}\right|\left|\widehat{\phi_0}\left(k\upt_{\oln+1}\right)\right|\\
&\hspace{5mm}\leq\left(\frac{1}{2}\left|\widehat{\phi_0}\left(k\upo_0\right)\right|^2+\frac{1}{2}\left|\widehat{\phi_0}\left(k\upo_{\oln+1}\right)\right|^2\right)\left(\frac{1}{2}\left|{\widehat{\phi_0}\left(k\upt_0\right)}\right|^2+\frac{1}{2}\left|\widehat{\phi_0}\left(k\upt_{\oln+1}\right)\right|^2\right)
\end{split}
\end{equation}
again. We first consider the summand containing $\left|\widehat{\phi_0}\left(k\upo_0\right)\right|^2\left|{\widehat{\phi_0}\left(k\upt_0\right)}\right|^2$. Our free integration variables will be almost as above, a momentum $k^{(j)}_l$ is free if there is an internal delta $\delta\left(k^{(j)}_l-k^{(j)}_{l-1}+k^{(j)}_i-k^{(j)}_{i-1}\right)$, with $j=1,2$ and $l<i$, or if it is one of the $k^{(1)}_{l_j+1}$, $j=2,...,m$, but we will write $k^{(1)}_{l_1+1}=k^{(1)}_{l_1}+u_1$ and $k^{(2)}_{l^{'}_1+1}=k^{(2)}_{l^{'}_1}+u_1$ as functions of a new integration variable $u_1$, and finally all other $k^{(j)}_l$ with $l\notin\lbrace0,n_1+1\rbrace$ are dependent. We take $L^\infty$ estimates of all dependent momenta resolvents and integrate out all free $k^{(1)}_l$ with $l>l_1+1$ and $k^{(2)}_l$ with $l>l^{'}_1+1$. Next, for the integral over the $k^{(1)}_{l_1+1}$ and $k^{(2)}_{l^{'}_1+1}$ resolvents
\begin{equation}
\begin{split}
\sup_{\alpha,\beta\in I}&\int_{\tthree} \frac{\dd u_1}{\left|e\left(k^{(1)}_{l_1}+u_1\right)-\alpha-i\eps\right|\left|e\left(k^{(2)}_{l^{'}_1}+u_1\right)-\beta-i\eps\right|}\\&\leq C\left( \eps^{-4/5}+\frac{\eps^{-2/5}}{\dist(k^{(1)}_{l_1}-k^{(2)}_{l^{'}_1},\caE)}\right)\left|\log\eps\right|^2\\
&\leq C\eps^{-4/5}|\log\eps|^2
\end{split}
\end{equation}
by Lemma \ref{lemma_2res} and the definition (\ref{defcab}) of $\caB^c$. Then we integrate out all remaining free momenta, having taken the $\beta_j$ integrals at the appropriate time. Then, the $\alpha_j$, $k^{(j)}_0$ and $\xi^{(j)}$ integrals will be estimated as usual. Alltogether, in contrast to the standard estimate, we have one $L^\infty$ and one $L^1$ estimate less, and a factor $\eps^{-4/5}|\log\eps|^2$ instead, thus improving Corollary \ref{cor_basic} by $\eps^{1/5}|\log\eps|$. The case with $\left|\widehat{\phi_0}\left(k\upo_{\oln+1}\right)\right|^2\left|{\widehat{\phi_{0}}\left(k\upt_{\oln+1}\right)}\right|^2$ can be treated completely analogously by using Lemma \ref{lemma_2res}. For the cross-over cases, i.e. a factor $\left|\widehat{\phi_0}\left(k\upo_{0}\right)\right|^2\left|{\widehat{\phi_{0}}\left(k\upt_{\oln+1}\right)}\right|^2$ or $\left|\widehat{\phi_0}\left(k\upo_{\oln+1}\right)\right|^2\left|{\widehat{\phi_{0}}\left(k\upt_0\right)}\right|^2$, we assume for a moment $m>1$ (several transfer contractions). Then just bounding $1_{\caB^c}\leq1$ immediately produces an improvement $\eps^{1/5}|\log\eps|$ by the methods of section \ref{crossing_transfer}. (Several parallel transfer lines are ``seen" as crossing by ``cross-over" $\left|\widehat{\phi_0}\right|^2$).
\begin{figure}[htb] 
\centering 
\def\svgwidth{400pt} 
\input{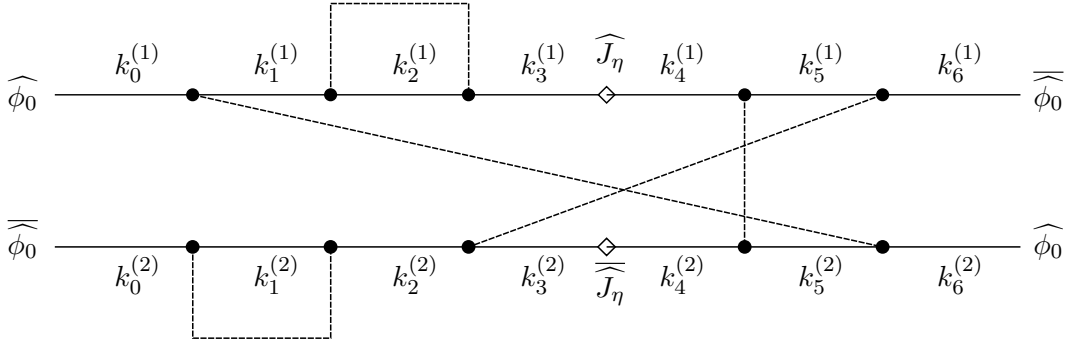} 
\caption{Three anti-parallel transfer contractions. Their intersection will completely disappear by rotating the second one-particle line by $180^\circ$.}
\label{fig_anti} 
\end{figure}

Before we study the special case $m=1$, note that almost the same methods apply to anti-parallel transfer lines. If $\pi\in\bigpi$ with anti-parallel transfer lines, there are $m$ transfer contractions $\delta\left(k^{(1)}_{l_1+1}-k^{(1)}_{l_1}-k^{(2)}_{l_1^{'}+1}+k^{(2)}_{l_1^{'}}\right)$ to $\delta\left(k^{(1)}_{l_m+1}-k^{(1)}_{l_m}-k^{(2)}_{l_m^{'}+1}+k^{(2)}_{l_m^{'}}\right)$ with $l_1<...<l_m$ but $l_1^{'}>...>l_m^{'}$. We just repeat the above reasoning "with the second one-particle line rotated" and define
\begin{equation}
\label{defcac}
\caC=\left\lbrace\left(k\upo,k\upt\right)\in\left(\tthree\right)^{2\oln+4}:\dist\left(k^{(1)}_{l_1}+k^{(2)}_{l_1^{'}+1},\caE\right)<\eps^{2/5}\right\rbrace.
\end{equation}
The bound
\begin{equation}
\ampl_{\caC}(\pi)\leq e^{4\eps t}\lambda^{2\oln}\eps^{\frac{1}{5}-\oln}\left|c\log\eps\right|^{\oln+3}\left\Vert\phi_0\right\Vert_{\ell^2}^4.
\end{equation}
is obtained for all $m\geq1$ as before, while for $\ampl_{\caC^c}$, after taking the estimate (\ref{prodtosquare}) , the summands with the factors $\left|\widehat{\phi_0}\left(k\upo_{0}\right)\right|^2\left|{\widehat{\phi_{0}}\left(k\upt_{0}\right)}\right|^2$ or $\left|\widehat{\phi_0}\left(k\upo_{\oln+1}\right)\right|^2\left|{\widehat{\phi_{0}}\left(k\upt_{\oln+1}\right)}\right|^2$ can be bounded by  
\begin{equation}
 e^{4\eps t}\lambda^{2\oln}\eps^{\frac{1}{5}-\oln}\left|c\log\eps\right|^{\oln+5}\left\Vert\phi_0\right\Vert_{\ell^2}^4.
\end{equation}
by the methods of section \ref{crossing_transfer}, if we assume $m>1$. (Several anti-parallel transfer contractions are seen as crossing by ``cis" $\left|\widehat{\phi_0}\right|^2$). On the other hand, for all $m\geq1$, the cross-over cases $\left|\widehat{\phi_0}\left(k\upo_{0}\right)\right|^2\left|{\widehat{\phi_{0}}\left(k\upt_{\oln+1}\right)}\right|^2$ or $\left|\widehat{\phi_0}\left(k\upo_{\oln+1}\right)\right|^2\left|{\widehat{\phi_{0}}\left(k\upt_0\right)}\right|^2$ are now estimated using Lemma \ref{lemma_2res} to improve the standard estimate by $\eps^{1/5}|\log\eps|$.

Now the only cases not completely understood yet are $\pi\in\bigpi$ with $m=1$, thus only one transfer contraction, which qualifies both as parallel and antiparallel. With $\caB$ and $\caC$ defined as above, one easily derives
\begin{equation}
\begin{split}
\left|\ampl(\pi)\right|&\leq\ampl_{\caB}(\pi)+\ampl_{\caC}(\pi)+\ampl_{\caB^{c}\cap\caC^{c}}(\pi)\\
&\leq e^{4\eps t}\lambda^{2\oln}\eps^{\frac{1}{5}-\oln}\left|c\log\eps\right|^{\oln+5}\left\Vert\phi_0\right\Vert_{\ell^2}^4
\end{split}
\end{equation}
by the above methods. In this case, all four (two cross-over, two cis) contributions to $\ampl_{\caB^{c}\cap\caC^{c}}(\pi)$ are estimated by the two-resolvent integral from Lemma \ref{lemma_2res}.

Collecting all estimates and possibly redefining $c$, we arrive at (\ref{est_para}) for all parallel or anti-parallel $\pi\in\bigpi$.
\end{proof}

\label{para_anti}
\section{Proof of the main theorem}
For the time being, consider initial states $\phi^{(\eta)}_0$ which in addition to the assumption (\ref{boundedprob}) satisfy $\widehat{\phi^{(\eta)}_0}\in C^{\infty}\left(\bbT^3\right)$ for all $\eta>0$. In that case, all estimates derived in sections \ref{graph_exp} and \ref{graph_estimates} are applicable if we choose, for $t=T/\eta$, 
\begin{equation}
\label{fix_eps}
\begin{split}
\eps&=\frac{1}{3+t}\\
N&=\left\lfloor\frac{a|\log\eps|}{|\log|\log\eps||}\right\rfloor\\
\kappa&=\left\lceil |\log\eps|^b\right\rceil,
\end{split}
\end{equation}
with $a,b>0$. Furthermore, we assume $\lambda\leq\frac{1}{2}$.

With these settings, for arbitrarily small $\delta$, there is a constant $c$ only depending on $a,b,\delta$, such that
\begin{equation}
\begin{split}
c^{-1}\eps^{-a+\delta}&\leq N!\leq N^N\leq c\eps^{-a}\\
|\log\eps|^N&\leq\eps^{-a}\\
\kappa^{-N}&\leq c\eps^{ba-\delta}.
\end{split}
\end{equation}
Thus, with (\ref{est_remainder}), there is a $c$ depending only on $a,b,T$ and an arbitrarily small $\delta>0$, such that
\begin{equation}
\bbE\left[\left\Vert R_{N,t}\right\Vert_{\ell^2}^2\right]\leq c\eps^{-\delta}\left(\eps^{a/2}+\eps^{1/5-8a}+\eps^{-2+ba-8a}+\eps^{1-24a}\right).
\end{equation}
On the other hand, we note that
\begin{equation}
\sharp\Pi^{conn}_{n_1,n_2}\leq2^{\overline{n}}\overline{n}!.
\end{equation}
with $\oln=n_1+n_2$ and can plug the estimates from Lemmas \ref{lem_gen_crossing}, \ref{lem_crossing_transfer}, \ref{lem_para_anti} into (\ref{sum_covari}) to see that
\begin{equation}
\begin{split}
&\hspace{-1cm}\vari{\bra J,W^{\eta}\left[\phi\upm_{t}\right]\ket}\\&\leq (N+1)^2C_{J,T}^N\sum_{n_1,n_2=0}^N\left(\sharp\Pi^{conn}_{n_1,n_2}\right)\eps^{1/5-2a}\\
&\leq C(J,T,a,\delta)\eps^{-\delta}\eps^{1/5-4a}
\end{split}
\end{equation}
with arbitrarily small $\delta>0$.

We now first want to verify Theorem \ref{theorem_main} for $r=1$, and apply Lemma \ref{est_bilin} to estimate
\begin{equation}
\begin{split}
\bbE&\left[\left|\bra J,W^{\eta}\left[\phi^{(\eta)}_{T/\eta}\right]\ket-\bbE\bra J,W^{\eta}\left[\phi^{(\eta)}_{T/\eta}\right]\ket\right|\right]\\
&\leq\bbE\left[\left|\bra J,W^{\eta}\left[R_{N,t}\right]\ket-\bbE\bra J,W^{\eta}\left[R_{N,t}\right]\ket\right|\right]\\
&\hspace{1cm}+\bbE\left[\left|\bra J,W^{\eta}\left[R_{N,t},\phi\upm_t\right]\ket-\bbE\bra J,W^{\eta}\left[R_{N,t},\phi\upm_t\right]\ket\right|\right]\\
&\hspace{1cm}+\bbE\left[\left|\bra J,W^{\eta}\left[\phi\upm_t,R_{N,t}\right]\ket-\bbE\bra J,W^{\eta}\left[\phi\upm_t,R_{N,t}\right]\ket\right|\right]\\
&\hspace{1cm}+\bbE\left[\left|\bra J,W^{\eta}\left[\phi\upm_t\right]\ket-\bbE\bra J,W^{\eta}\left[\phi\upm_t\right]\ket\right|\right]\\
&\leq C_J\left(2\bbE\left[\Vert R_{N,t}\Vert_{\ell^2}^2\right]+4\left(\bbE\left[\Vert \phi\upm_t\Vert_{\ell^2}^2\right]\bbE\left[\Vert R_{N,t}\Vert_{\ell^2}^2\right]\right)^{1/2}\right),\\
&\hspace{1cm}+\vari{\bra J,W^{\eta}\left[\phi\upm_{t}\right]\ket}^{1/2},
\end{split}
\end{equation}
with $C_J<\infty$ just depending on our choice of the test function $J$. Noticing that
\begin{equation}
\Vert \phi\upm_t\Vert_{\ell^2}^2\leq\left(1+\Vert R_{N,t}\Vert_{\ell^2}\right)^2
\end{equation}
and choosing $a=\frac{2}{85}$, $b=100$ and $\delta$ small enough, we find that there is a $\tilde{C}(J,T)<\infty$ such that
\begin{equation}
\bbE\left[\left|\bra J,W^{\eta}\left[\phi^{(\eta)}_{T/\eta}\right]\ket-\bbE\bra J,W^{\eta}\left[\phi^{(\eta)}_{T/\eta}\right]\ket\right|\right]\leq \tilde{C}(J,T)\lambda^{\frac{1}{90}}.
\end{equation}
For general $r\geq1$, we apply Lemma \ref{est_bilin} and the unitarity of $\mathrm{e}^{-iH_\omega t}$ to see
\begin{equation}
\bbE\left[\left|\bra J,W^{\eta}\left[\phi^{(\eta)}_{T/\eta}\right]\ket-\bbE\bra J,W^{\eta}\left[\phi^{(\eta)}_{T/\eta}\right]\ket\right|^r\right]\leq \left(2C_J\right)^{r-1}\tilde{C}(J,T)\lambda^{\frac{1}{90}}.
\end{equation}
Thus we have shown (\ref{fluct_tozero}) for $c=\frac{1}{90}$, initial states with smooth $\widehat{\phi}^{(\eta)}_0$ and
\begin{equation}
C(J,T)=\max\left\lbrace\tilde{C}(J,T),2C_J\right\rbrace.
\end{equation}

Now let a sequence of general initial states $\phi^{(\eta)}_0$ fulfilling (\ref{boundedprob}) be given and approximate them by $\tilde{\phi}^{(\eta)}_0$ with $\widehat{\tilde{\phi}^{(\eta)}_0}\in C^{\infty}\left(\bbT^3\right)$ such that 
\begin{equation}
\left\Vert\tilde{\phi}^{(\eta)}_0-{\phi}^{(\eta)}_0\right\Vert^2_{\ell^2\left(\bbZ^3\right)}\leq\eta,
\end{equation}
with (\ref{boundedprob}) holding for $\tilde{\phi}^{(\eta)}_0$ as well. For the smooth initial states we have
\begin{equation}
\left(\bbE\left[\left|\bra J,W^{\eta}\left[\tilde{\phi}^{(\eta)}_{T/\eta}\right]\ket-\bbE\bra J,W^{\eta}\left[\tilde{\phi}^{(\eta)}_{T/\eta}\right]\ket\right|^r\right]\right)^{1/r}\leq C(J,T) \lambda^{\frac{1}{cr}},
\end{equation}
while by unitarity of the time evolution and Lemma \ref{est_bilin},
\begin{equation}
\left|\bra J,W^{\eta}\left[\tilde{\phi}^{(\eta)}_{T/\eta}\right]\ket-\bra J,W^{\eta}\left[{\phi}^{(\eta)}_{T/\eta}\right]\ket\right|\leq 2 C_J\sqrt{\eta}=2C_J\lambda
\end{equation}
almost surely. This proves (\ref{fluct_tozero}), for general initial states, possibly after redefining the constant $C(J,T)$, but still with $c=1/90$.

\label{proof_main}
\section{Almost sure convergence}
\label{almostsure}
To prove Theorem \ref{thm_almostsure}, fix any $T>0$ and $J\in\caS\left(\bbR^3\times\bbT^3\right)$, and observe
\begin{equation}
\begin{split}
&\sup_{\tau\in[0,T]}\left|\bra J,W^\eta\left[\mathrm{e}^{-iH_\omega\tau/\eta}\phi_0^{(\eta)}\right]\ket-\bra J,\mu_\tau\ket\right|\\
&\hspace{1cm}\leq\sup_{\tau\in[0,T]}\left|\bra J,W^\eta\left[\mathrm{e}^{-iH_\omega\tau/\eta}\phi_0^{(\eta)}\right]\ket-\bbE\bra J,W^\eta\left[\mathrm{e}^{-iH_\omega\tau/\eta}\phi_0^{(\eta)}\right]\ket\right|\\
&\hspace{2cm}+\sup_{\tau\in[0,T]}\left|\bbE\bra J,W^\eta\left[\mathrm{e}^{-iH_\omega\tau/\eta}\phi_0^{(\eta)}\right]\ket-\bra J,\mu_\tau\ket\right|.\\
\end{split}
\end{equation}
The last line is deterministic and vanishes as $\eta$ goes to zero by Theorem \ref{theorem_convergence_exp} (for the $\sup_\tau$, note that all error estimates in the proof of Theorem \ref{theorem_convergence_exp} are monotonous in $\tau$, \cite{chen_rs}). To understand the second last line, discretize $\eta\in(0,1]$ by $\eta_{n}=n^{-\beta}$, $n\in\bbN$ for some $\beta>0$ yet to be determined. For any given $h\in\caS\left(\bbR^3\right)$ and real $S\in\caS\left(\bbR^3\right)$, we have for initial states $\phi_0^{(\eta)}$ of the form (\ref{chen_init})
\begin{equation}
\left\Vert\frac{\dd}{\dd \eta}\phi_0^{(\eta)}\right\Vert_{\ell^2}\leq\frac{C_{h,S}}{\eta^2},
\end{equation}
and thus
\begin{equation}
\label{phi_cts}
\sup_{\eta\in\left[\eta_{n+1},\eta_{n}\right]}\left\Vert\phi_0^{(\eta)}-\phi_0^{(\eta_{n})}\right\Vert_{\ell^2}\leq C_{h,S}\left(\frac{1}{\eta_{n+1}}-\frac{1}{\eta_{n}}\right).
\end{equation}
Next, for any $\psi\in\ell^2\left(\bbZ^3\right)$, and any $J\in\caS\left(\bbR^3\times\bbT^3\right)$, we note for
\begin{equation}
\bra J,W^{\eta}\left[\psi\right]\ket=\eta^{-3}\int_{\bbR^3\times\tthree}\dd\xi\dd v\overline{\widehat{J}(\xi/\eta,v)}\overline{\widehat{\psi}(v-\xi/2)}\widehat{\psi}(v+\xi/2)
\end{equation}
that
\begin{equation}
\left|\frac{\dd}{\dd \eta}\bra J,W^{\eta}\left[\psi\right]\ket\right|\leq\frac{C_J}{\eta}\left\Vert\psi\right\Vert_{\ell^2}^2.
\end{equation}
Altogether we have
\begin{equation}
\begin{split}
&\sup_{\eta\in\left[\eta_{n+1},\eta_{n}\right]}\sup_{\tau\in[0,T]}\left|\bra J,W^\eta\left[\mathrm{e}^{-iH_\omega\tau/\eta}\phi_0^{(\eta)}\right]\ket-\bbE\bra J,W^\eta\left[\mathrm{e}^{-iH_\omega\tau/\eta}\phi_0^{(\eta)}\right]\ket\right|\\
&\hspace{0.5cm}\leq\sup_{\eta\in\left[\eta_{n+1},\eta_{n}\right]}\sup_{\tau\in[0,T]}\left|\bra J,W^{\eta_n}\left[\mathrm{e}^{-iH_\omega\tau/\eta}\phi_0^{(\eta_n)}\right]\ket-\bbE\bra J,W^{\eta_n}\left[\mathrm{e}^{-iH_\omega\tau/\eta}\phi_0^{(\eta_n)}\right]\ket\right|\\
&\hspace{1cm}+C_{h,S}C_J\left(\left(\frac{1}{\eta_{n+1}}-\frac{1}{\eta_{n}}\right)+\log\eta_n-\log\eta_{n+1}\right),
\end{split}
\end{equation}
with constants only depending on $h$, $S$ and $J$.  The last line deterministically vanishes as $n\rightarrow\infty$ whenever $\beta\in(0,1)$. In the second last line, the continuous parameter $\eta$ appears still in two places, in the scaling of the time $\tau/\eta$, and implicitely in the definition of the Hamiltonian
\begin{equation}
\label{hamiltonianl}
H_\omega=H^{\lambda}_\omega=-\frac{1}{2}\Delta+\lambda V_\omega.
\end{equation}
with $\lambda=\sqrt{\eta}$. Define macroscopic times $\tau_0^{(n)}=0<...<\tau_{m_n}^{(n)}=T$ with a spacing $\tau_{j+1}^{(n)}-\tau_j^{(n)}=\delta_n\leq\eta_nT$ and $m_n=\left\lceil n^\beta\right\rceil$. For $\eta\in\left[\eta_{n+1},\eta_{n}\right]$ and $\tau\in\left[\tau_j^{(n)},\tau_{j+1}^{(n)}\right]$, one has
\begin{equation}
\frac{\tau^{(n)}_{j}}{\eta_n}\leq\frac{\tau}{\eta}\leq\frac{\tau^{(n)}_{j+1}}{\eta_{n+1}}\leq\frac{\tau^{(n)}_{j}}{\eta_n}+T\left(\frac{1}{\eta_{n+1}}-\frac{1}{\eta_{n}}\right) +\frac{\delta_n}{\eta_{n+1}}=\frac{\tau^{(n)}_{j}}{\eta_n}+\nu_n.
\end{equation}

\noindent{}First, only consider
\begin{equation}
\left|\bra J,W^{\eta_n}\left[\mathrm{e}^{-iH^\lambda_\omega \tau^{(n)}_{j}/{\eta_n}}\phi_0^{(\eta_n)}\right]\ket-\bbE\bra J,W^{\eta_n}\left[\mathrm{e}^{-iH^\lambda_\omega \tau^{(n)}_{j}/{\eta_n}}\phi_0^{(\eta_n)}\right]\ket\right|
\end{equation}
with $\lambda\in\left[0,\sqrt{\eta_n}\right]$. In keeping with equations (\ref{defphin}-\ref{defphimain}), denote
\begin{equation}
D_{\omega,l}^\lambda\left(\frac{\tau^{(n)}_{j}}{\eta_n}\right)\phi_0^{(\eta_n)}=(-i\lambda)^l\int_{\bbR_+^{l+1}}\dd s_0...\dd s_l\delta\left(\sum_{j=0}^ns_j-\frac{\tau^{(n)}_{j}}{\eta_n}\right)\prod_{j=0}^{l-1}\left(\mathrm{e}^{-is_jH_0}V_\omega\right)\mathrm{e}^{-is_lH_0}\phi^{(\eta_n)}_0
\end{equation}
for $l=0,...,N$ and the remainder term
\begin{equation}
R_{\omega,N}^\lambda\left(\frac{\tau^{(n)}_{j}}{\eta_n}\right)\phi_0^{(\eta_n)}=\mathrm{e}^{-iH^\lambda_\omega \tau^{(n)}_{j}/{\eta_n}}\phi_0^{(\eta_n)}-\sum_{l=0}^{N}D_{\omega,l}^\lambda\left(\frac{\tau^{(n)}_{j}}{\eta_n}\right)\phi_0^{(\eta_n)}.
\end{equation}

Focussing on the $D_{\omega,l}^\lambda$ terms, we note that
\begin{equation}
\label{lambdahom}
D_{\omega,l}^\lambda\left(\frac{\tau^{(n)}_{j}}{\eta_n}\right)\phi_0^{(\eta_n)}=\left(\frac{\lambda}{\sqrt{\eta_n}}\right)^lD_{\omega,l}^{\sqrt{\eta_n}}\left(\frac{\tau^{(n)}_{j}}{\eta_n}\right)\phi_0^{(\eta_n)}
\end{equation}
holds determinstically, so it suffices to set the disorder parameter to the maximal value $\lambda=\sqrt{\eta_n}$. 

If we choose $N$ as in (\ref{fix_eps}) (with $\eps=\eta_n$), then there is a $c>0$ (one may take $c=1/6$) and $C_{J,T}<\infty$ such that
\begin{equation}
\begin{split}
\label{estsquare}
&\bbE\left[\left|\bra J,W^{\eta_n}\left[D_{\omega,l_1}^{\sqrt{\eta_n}}\left(\frac{\tau^{(n)}_{j}}{\eta_n}\right)\phi_0^{(\eta_n)},D_{\omega,l_2}^{\sqrt{\eta_n}}\left(\frac{\tau^{(n)}_{j}}{\eta_n}\right)\phi_0^{(\eta_n)}\right]\ket\right.^{\vphantom{2}}\right.\\&\hspace{1cm}\left.\left.-\bbE\bra J,W^{\eta_n}\left[D_{\omega,l_1}^{\sqrt{\eta_n}}\left(\frac{\tau^{(n)}_{j}}{\eta_n}\right)\phi_0^{(\eta_n)},D_{\omega,l_2}^{\sqrt{\eta_n}}\left(\frac{\tau^{(n)}_{j}}{\eta_n}\right)\phi_0^{(\eta_n)}\right]\ket\right|^2\right]\leq C_{J,T}\eta_n^{c}
\end{split}
\end{equation}
for all $n\in\bbN$, $0\leq l_1,l_2\leq N$ and $0\leq j\leq m_n$. At the same time,
the random variable
\begin{equation}
\begin{split}
X(\omega)&=\bra J,W^{\eta_n}\left[D_{\omega,l_1}^{\sqrt{\eta_n}}\left(\frac{\tau^{(n)}_{j}}{\eta_n}\right)\phi_0^{(\eta_n)},D_{\omega,l_2}^{\sqrt{\eta_n}}\left(\frac{\tau^{(n)}_{j}}{\eta_n}\right)\phi_0^{(\eta_n)}\right]\ket\\&\hspace{1cm}-\bbE\bra J,W^{\eta_n}\left[D_{\omega,l_1}^{\sqrt{\eta_n}}\left(\frac{\tau^{(n)}_{j}}{\eta_n}\right)\phi_0^{(\eta_n)},D_{\omega,l_2}^{\sqrt{\eta_n}}\left(\frac{\tau^{(n)}_{j}}{\eta_n}\right)\phi_0^{(\eta_n)}\right]\ket
\end{split}
\end{equation}
is contained in $\overline{\caP_{l_1+l_2}}$, as defined in Definition 2.1 of \cite{janson}, that is, $X$ lies in the $L^2(\Omega,\caF,\bbP)$ closure of $\caP_{l_1+l_2}$, with $(\Omega,\caF,\bbP)$ the probability space over which the random potential $V_\omega$ is realized, and $\caP_k$, $k\in\bbN_0$ given as the space of polynomials in single-site values of $V_\omega$ of degree smaller or equal $k$:
\begin{equation}
\caP_k=\mathrm{span}\left\lbrace \prod_{j=1}^{p}V(x_j):x_1,...,x_p\in\bbZ^3, 0\leq p\leq k\right\rbrace.
\end{equation}
To see this, first define a variable $X^L\in\caP_{l_1+l_2}$ by employing a cut-off version $V^L_\omega$ of the random potential as in (\ref{vcutoff}), and then note that $X^L\rightarrow X$ in $L^2(\Omega,\caF,\bbP)$ as $L\rightarrow\infty$. 

By Theorem 5.10 in \cite{janson}, one therefore has 
\begin{equation}
\bbE\left[|X|^q\right]^{1/q}\leq(q-1)^{(l_1+l_2)/2}\bbE\left[|X|^2\right]^{1/2},
\end{equation}
for all $q\geq2$, and by (\ref{estsquare}) and a Markov estimate
\begin{equation}
\label{largdevx}
\bbP\left(X\geq\eta_n^\gamma\right)\leq{C}_{J,T}^q(q-1)^{(l_1+l_2)q/2}\eta_n^{(c/2-\gamma)q}
\end{equation}
for any $0<\gamma<c/2$.

Thus, for the main term, by (\ref{lambdahom}) and (\ref{largdevx}),
\begin{equation}
\label{mainbound}
\begin{split}
&\sup_{j\in\lbrace0,...,m_n\rbrace}\sup_{\lambda\in\left[0,\sqrt{\eta_n}\right]}\left|\bra J,W^{\eta_n}\left[\sum_{l=0}^ND_{\omega,l}^{\lambda}\left(\frac{\tau^{(n)}_{j}}{\eta_n}\right)\phi_0^{(\eta_n)}\right]\ket\right.\\&\hspace{2cm}-\bbE\left.\bra J,W^{\eta_n}\left[\sum_{l=0}^ND_{\omega,l}^{\lambda}\left(\frac{\tau^{(n)}_{j}}{\eta_n}\right)\phi_0^{(\eta_n)}\right]\ket\right|\\
&\hspace{5mm}\leq\sup_{j\in\lbrace0,...,m_n\rbrace}\sum_{l_1,l_2=0}^N\left|\bra J,W^{\eta_n}\left[D_{\omega,l_1}^{\sqrt{\eta_n}}\left(\frac{\tau^{(n)}_{j}}{\eta_n}\right)\phi_0^{(\eta_n)},D_{\omega,l_2}^{\sqrt{\eta_n}}\left(\frac{\tau^{(n)}_{j}}{\eta_n}\right)\phi_0^{(\eta_n)}\right]\ket\right.\\&\hspace{4cm}-\left.\bbE\bra J,W^{\eta_n}\left[D_{\omega,l_1}^{\sqrt{\eta_n}}\left(\frac{\tau^{(n)}_{j}}{\eta_n}\right)\phi_0^{(\eta_n)},D_{\omega,l_2}^{\sqrt{\eta_n}}\left(\frac{\tau^{(n)}_{j}}{\eta_n}\right)\phi_0^{(\eta_n)}\right]\ket\right|\\
&\hspace{5mm}\leq(N+1)^2\eta_n^\gamma,
\end{split}
\end{equation}
with the last inequality holding with a probability of at least 
\begin{equation}
1-m_n(N+1)^2{C}_{J,T}^q(q-1)^{Nq}\eta_n^{(c/2-\gamma)q}.
\end{equation}
Analogously, following the proof of Lemma 3.14 in \cite{chen_rs}, we find a random variable $Y(\omega)=Y(\omega;N,n,j)$ that does not depend on $\lambda$ such that
\begin{equation}
\left\Vert R_{\omega,N}^\lambda\left(\frac{\tau^{(n)}_{j}}{\eta_n}\right)\phi_0^{\eta_n}\right\Vert^2_{\ell^2}\leq Y(\omega)
\end{equation}
for all $\lambda\in\left[0,\sqrt{\eta_n}\right]$, with $Y\in\overline{\caP_{8N}}$ and
\begin{equation}
\bbE\left[Y\right]\leq C_{T}\eta^c.
\end{equation}
By Theorem 5.10 and Remark 5.13 of \cite{janson},
\begin{equation}
\bbE\left[Y^q\right]^{1/q}\leq e^{8N}(q-1)^{4N}\bbE[Y]
\end{equation}
for all $q\geq2$.
By another Markov estimate, for any $\gamma>0$,
\begin{equation}
\label{remainderbound}
\begin{split}
&\bbP\left(\sup_{j\in\lbrace0,...,m_n\rbrace}\sup_{\lambda\in\left[0,\sqrt{\eta_n}\right]}\left\Vert R_{\omega,N}^\lambda\left(\frac{\tau^{(n)}_{j}}{\eta_n}\right)\phi_0^{\eta_n}\right\Vert_{\ell^2}\geq\eta_{n}^{\gamma}\right)\\
&\hspace{1cm}\leq m_ne^{8Nq}C_T^q(q-1)^{4Nq}\eta_{n}^{(c-2\gamma)q}.
\end{split}
\end{equation}
By (\ref{mainbound}) and (\ref{remainderbound}), if we choose $\gamma_-<\gamma<c/2$, (say, $\gamma_-=1/15$) $N$, and $q>\left(\frac{2}{\beta}+1\right)/\left(\frac{c}{2}-\gamma\right)$, we see that there is a finite constant $C_{\gamma_-,J,T}$ such that
\begin{equation}
\label{bc1}
\begin{split}
\bbP&\left(\sup_{j\in\lbrace0,...,m_n\rbrace}\sup_{\lambda\in\left[0,\sqrt{\eta_n}\right]}\left|\bra J,W^{\eta_n}\left[\mathrm{e}^{-iH^\lambda_\omega \tau^{(n)}_{j}/{\eta_n}}\phi_0^{(\eta_n)}\right]\ket\right.\right.\\ &\hspace{2cm}-\vphantom{\sup_{\lambda\in\left[0,\sqrt{\eta_n}\right]}}\left.\vphantom{\sup_{\lambda\in\left[0,\sqrt{\eta_n}\right]}}\left.\bbE\bra J,W^{\eta_n}\left[\mathrm{e}^{-iH^\lambda_\omega \tau^{(n)}_{j}/{\eta_n}}\phi_0^{(\eta_n)}\right]\ket\right|\geq\eta_{n}^{\gamma_-}\right)\leq C_{\gamma_-,J,T}n^{-3/2}.
\end{split}
\end{equation}
As we now can control the convergence of the Wigner function at selected times $\tau^{(n)}_{j}/{\eta_n}$, we turn to general $\tau/\eta\in\left[\tau^{(n)}_{j}/{\eta_n};\tau^{(n)}_{j}/{\eta_n}+\nu_n\right]$. To that end, we check that $V$ acts almost like a bounded operator. Let a small $\rho>0$ be given and choose an $L\geq1/\rho$ such that $h(y)\leq y^{-2}$ for all $y\in\bbR^3$, $|y|\geq L$. Take $R=L+T$ and define $I(x,k)$ as a test function $I:\bbR^3\times\bbT^3\rightarrow[0,1]$ to be smooth, constant in the $k$ variable with 
\begin{equation}
I(x,k)=\begin{cases}&0 \mbox{ if }|x|\geq R+1\\&1 \mbox{ if }|x|\leq R\end{cases}.
\end{equation}
All arguments so far apply to $I$ as well as $J$ and we can write
\begin{equation}
\label{propagation_bound}
\begin{split}
\sup_{j\in\lbrace0,...,m_n\rbrace}&\sup_{\lambda\in\left[\sqrt{\eta_{n+1}},\sqrt{\eta_n}\right]}\sum_{|x|\geq{(R+1)/\eta_n}}\left|\mathrm{e}^{-iH^\lambda_\omega \tau^{(n)}_{j}/{\eta_n}}\phi_0^{(\eta_n)}(x)\right|^2\\
&\leq\sup_{j\in\lbrace0,...,m_n\rbrace}\sup_{\lambda\in\left[\sqrt{\eta_{n+1}},\sqrt{\eta_n}\right]}\left(\left\Vert\phi_0^{(\eta_n)}\right\Vert_{\ell^2}^2-\bra I,W^{\eta_n}\left[\mathrm{e}^{-iH^\lambda_\omega \tau^{(n)}_{j}/{\eta_n}}\phi_0^{(\eta_n)}\right]\ket\right)\\
&\leq\sup_{\tau\in[0,T]}\bra(1-I),\mu_\tau\ket\\
&\hspace{1cm}+\sup_{\tau\in[0,T]}\sup_{\lambda\in\left[\sqrt{\eta_{n+1}},\sqrt{\eta_n}\right]}\left|\bra I,\mu_\tau\ket-\bbE\bra I,W^{\eta_n}\left[\mathrm{e}^{-iH^\lambda_\omega \tau/{\eta_n}}\phi_0^{(\eta_n)}\right]\ket\right|\\
&\hspace{1cm}+\sup_{j\in\lbrace0,...,m_n\rbrace}\sup_{\lambda\in\left[0,\sqrt{\eta_n}\right]}\left|\bra I,W^{\eta_n}\left[\mathrm{e}^{-iH^\lambda_\omega \tau^{(n)}_{j}/{\eta_n}}\phi_0^{(\eta_n)}\right]\ket\right.\\ &\hspace{5.5cm}-\left.\bbE\bra I,W^{\eta_n}\left[\mathrm{e}^{-iH^\lambda_\omega \tau^{(n)}_{j}/{\eta_n}}\phi_0^{(\eta_n)}\right]\ket\right|\\
&\leq C\rho+g_{\rho}(n)+Z_\rho(n;\omega)
\end{split}
\end{equation}
with $C$ a finite constant, a deterministic function $g_\rho$ with $g_\rho(n)\rightarrow0$ for $n\rightarrow\infty$ for any given $\rho>0$, and a random variable $Z_\rho(n)$ with 
\begin{equation}
\label{bc2}
\bbP\left(Z_\rho(n)\geq\eta_n^{\gamma_-}\right)\leq C_{\gamma_-,\rho,T}n^{-3/2}
\end{equation}
for any $\rho>0$ and $0<\gamma_-<c/2$. A similar, deterministic estimate involving only the first two summands in the last line of (\ref{propagation_bound}) is readily available for the unperturbed propagation by $H_0$.  Furthermore, there is a random variable $B$, with $\bbE\left[e^{B}\right]<\infty$, such that
\begin{equation}
\max_{|x|\leq(R+1)/\eta_n}\left|V_\omega(x)\right|\leq C_\rho B(\omega)\left|\log{\eta_n}\right|,
\end{equation}
with a $\rho$-dependent constant $C_\rho$ because of the $\rho$-dependence of $R$. Therefore, using a cutoff function $\varphi(x)=1\left\lbrace|x|\leq(R+1)/\eta_n\right\rbrace$,
\begin{equation}
\begin{split}
&\sup_{j\in\lbrace0,...,m_n-1\rbrace}\sup_{0\leq \lambda\leq\sqrt{\eta_n}}\sup_{0\leq t\leq\nu_n}\left\Vert\left(\mathrm{e}^{-iH^{\lambda}_\omega t}-\mathrm{e}^{-iH_0t}\right)\mathrm{e}^{-iH^\lambda_\omega \tau^{(n)}_{j}/{\eta_n}}\phi_0^{(\eta_n)}\right\Vert\\
&\hspace{1cm}\leq\sqrt{C\rho+g_{\rho}(n)+Z_\rho(n;\omega)}\\
&\hspace{1.5cm}+\sup_{j\in\lbrace0,...,m_n-1\rbrace}\sup_{0\leq \lambda\leq\sqrt{\eta_n}}\sup_{0\leq t\leq\nu_n}\left\Vert\lambda\int_0^t\mathrm{e}^{-iH^{\lambda}_\omega(t-s)}V\varphi\mathrm{e}^{-iH_0s}\mathrm{e}^{-iH^\lambda_\omega \tau^{(n)}_{j}/{\eta_n}}\phi_0^{(\eta_n)}\dd s\right\Vert_{\ell^2}\\
&\hspace{1cm}\leq\sqrt{C\rho+g_{\rho}(n)+Z_\rho(n;\omega)}+\sqrt{\eta_n}\nu_n|\log\eta_n|C_\rho B(\omega).
\end{split}
\end{equation}
Therefore we can interpolate between the times $\tau_j^{(n)}/\eta_n$ by replacing the full dynamics with the free one
\begin{equation}
\begin{split}
&\sup_{\eta\in\left[\eta_{n+1},\eta_{n}\right]}\sup_{\tau\in[0,T]}\left|\bra J,W^{\eta_n}\left[\mathrm{e}^{-iH_\omega\tau/\eta}\phi_0^{(\eta_n)}\right]\ket-\bbE\bra J,W^{\eta_n}\left[\mathrm{e}^{-iH_\omega\tau/\eta}\phi_0^{(\eta_n)}\right]\ket\right|\\
&\hspace{5mm}\leq\sup_{j\in\lbrace{0,...,m_n-1}\rbrace}\sup_{t\in[0,\nu_n]}\sup_{\lambda\in\left[\sqrt{\eta_{n+1}},\sqrt{\eta_n}\right]}\left|\bra J,W^{\eta_n}\left[\mathrm{e}^{-iH_0 t}\mathrm{e}^{-iH^\lambda_\omega \tau^{(n)}_{j}/{\eta_n}}\phi_0^{(\eta_n)}\right]\ket\right.\\
&\hspace{6cm}-\left.\bbE\bra J,W^{\eta_n}\left[\mathrm{e}^{-iH_0 t}\mathrm{e}^{-iH^\lambda_\omega \tau^{(n)}_{j}/{\eta_n}}\phi_0^{(\eta_n)}\right]\ket\right|\\
&\hspace{2cm}+C_J\left(\sqrt{C\rho+g_{\rho}(n)+Z_\rho(n;\omega)}+\sqrt{\eta_n}\nu_n|\log\eta_n|C_\rho B(\omega)\right).
\end{split}
\end{equation}
Directly from (\ref{defjw}), we observe
\begin{equation}
\sup_{t\in[0,\nu_n]}\left|\bra J,W^{\eta_n}\left[\mathrm{e}^{-iH_0 t}\psi\right]\ket-\bra J,W^{\eta_n}\left[\psi\right]\ket\right|\leq C_J\eta_n\nu_n\Vert\psi\Vert_{\ell^2}^2
\end{equation}
for all $\psi\in\ell^2\left(\bbZ^3\right)$, $J\in\caS\left(\bbR^3\times\bbT^3\right)$.
Up to a redefinition of constants, all estimates made so far imply
\begin{equation}
\begin{split}
&\sup_{\eta\in\left[\eta_{n+1},\eta_n\right]}\sup_{\tau\in[0,T]}\left|\bra J,W^\eta\left[\mathrm{e}^{-iH_\omega\tau/\eta}\phi_0^{(\eta)}\right]\ket-\bra J,\mu_\tau\ket\right|\\
&\hspace{1cm}\leq\sup_{\eta\in\left[\eta_{n+1},\eta_n\right]}\sup_{\tau\in[0,T]}\left|\bbE\bra J,W^\eta\left[\mathrm{e}^{-iH_\omega\tau/\eta}\phi_0^{(\eta)}\right]\ket-\bra J,\mu_\tau\ket\right|\\
&\hspace{2cm}+\sup_{j\in\lbrace{0,...,m_n}\rbrace}\sup_{\lambda\in\left[\sqrt{\eta_{n+1}},\sqrt{\eta_n}\right]}\left|\bra J,W^{\eta_n}\left[\mathrm{e}^{-iH^\lambda_\omega \tau^{(n)}_{j}/{\eta_n}}\phi_0^{(\eta_n)}\right]\ket\right.\\
&\hspace{6cm}-\left.\bbE\bra J,W^{\eta_n}\left[\mathrm{e}^{-iH^\lambda_\omega \tau^{(n)}_{j}/{\eta_n}}\phi_0^{(\eta_n)}\right]\ket\right|\\
&\hspace{2cm}+C_{h,S}C_J\left(\left(\frac{1}{\eta_{n+1}}-\frac{1}{\eta_{n}}\right)+\log\eta_n-\log\eta_{n+1}\right)\\
&\hspace{2cm}+C_J\left(\sqrt{C\rho+g_{\rho}(n)+Z_\rho(n;\omega)}+\sqrt{\eta_n}\nu_n|\log\eta_n|C_\rho B(\omega)\right)\\
&\hspace{2cm}+C_J\eta_n\nu_n.
\end{split}
\end{equation}
If we choose $\eta_n=n^{-\beta}$ with $\beta\in(0,1)$ and note that $\nu_n=\caO(1)$ by this choice, a standard Borel-Cantelli argument based on (\ref{bc1}) and (\ref{bc2}) yields $\bbP$-almost surely
\begin{equation}
\limsup_{\eta\rightarrow0}\sup_{\tau\in[0,T]}\left|\bra J,W^\eta\left[\mathrm{e}^{-iH_\omega\tau/\eta}\phi_0^{(\eta)}\right]\ket-\bra J,\mu_\tau\ket\right|\leq C_J\sqrt{C\rho}
\end{equation}
for arbitrary $\rho>0$, which together with the separability of $\caS\left(\bbR^3\times\bbT^3\right)$ proves the claim.

\appendix
\section{Basic estimates}
The following is a small extension and correction of Lemma 3.4 in \cite{chen_rs}
\begin{lma}
With the definitions made as above,  the estimates
\begin{equation}
\label{intest2d}
\sup_{\alpha\in I}\sup_{p_3\in[0,1)}\int_{\ttwo}\dd\und{p}\frac{1}{\left|e_{2D}(\und{p})-\alpha(p_3)-i\eps\right|}\leq C|\log\eps|^2
\end{equation}
and
\begin{equation}
\label{intest3d}
\sup_{\alpha\in I}\int_{\tthree}\dd{p}\frac{1}{\left|e({p})-\alpha-i\eps\right|}\leq C|\log\eps|
\end{equation}
hold for all $0<\eps\leq1/3$.
\end{lma}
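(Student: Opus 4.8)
\noindent\emph{Proof strategy.} Both estimates will follow the same three-step scheme: first reduce the contour $\alpha\in I$ to its one dangerous edge, then use the coarea formula to rewrite the torus integral as a one-dimensional integral against the density of states of $e$ (resp.\ $e_{2D}$), and finally bound that scalar integral by hand. The plan is to split $I$ into its top edge $\{\alpha=\nu:\nu\in[-1,7]\}$ and the other three edges. On the vertical edges one has $\Re\alpha\in\{-1,7\}$, and on the bottom edge $\Im\alpha=-1$; since $e(p)\in[0,6]$ and $\eps\le 1/3$, in all these cases $|e(p)-\alpha-i\eps|\ge 2/3$, so the contribution is trivially $\le 3/2$. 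Thus \eqref{intest3d} reduces to a bound for $\int_{\tthree}\dd p\,\big((e(p)-\nu)^2+\eps^2\big)^{-1/2}$ uniform in $\nu\in\bbR$. For \eqref{intest2d}, writing $e(p)=e_{2D}(p_1,p_2)+(1-\cos 2\pi p_3)$ and $\alpha(p_3)=\alpha-(1-\cos 2\pi p_3)$, the same split works: on the vertical edges $\Re\alpha(p_3)$ stays outside a fixed neighbourhood of $e_{2D}(\ttwo)=[0,4]$ uniformly in $p_3$, and on the bottom edge $|\Im(\alpha(p_3)+i\eps)|\ge 2/3$, so only the top edge is nontrivial, where $\alpha(p_3)$ is real uniformly in $p_3$; \eqref{intest2d} then reduces to a bound for $\int_{\ttwo}\dd\und p\,\big((e_{2D}(\und p)-\mu)^2+\eps^2\big)^{-1/2}$ uniform in $\mu\in\bbR$.

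Since $e$ and $e_{2D}$ are real-analytic with critical sets of measure zero, the coarea formula gives
\[
\int_{\tthree}\frac{\dd p}{\sqrt{(e(p)-\nu)^2+\eps^2}}=\int_0^6\frac{M(\lambda)\,\dd\lambda}{\sqrt{(\lambda-\nu)^2+\eps^2}},\qquad M(\lambda)=\int_{e^{-1}(\lambda)}\frac{\dd S}{|\nabla e|},
\]
and analogously $\int_{\ttwo}(\cdots)^{-1/2}\dd\und p=\int_0^4 M_{2D}(\lambda)\big((\lambda-\mu)^2+\eps^2\big)^{-1/2}\dd\lambda$ with $M_{2D}(\lambda)=\int_{e_{2D}^{-1}(\lambda)}\dd\ell/|\nabla e_{2D}|$. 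Now $e$ and $e_{2D}$ are Morse functions whose critical points have coordinates in $\{0,\tfrac12\}$ with non-degenerate Hessians, the critical values being $\{0,2,4,6\}$ and $\{0,2,4\}$. Away from these values the level sets are compact and smooth with $|\nabla e|$ bounded below, so $M,M_{2D}$ are locally bounded; near a critical value one passes to the Morse normal form and estimates $\int\dd S/|x|$ over the resulting quadric through the origin inside a small ball. In $d=3$ this is uniformly bounded (vanishing like $\sqrt{|\lambda-c|}$ at the extrema, bounded through the cone at the saddle values), so $\|M\|_{L^\infty[0,6]}=:C_1<\infty$. In $d=2$ the extrema $0,4$ again contribute boundedly, but at the saddle value $2$ the planar quadric degenerates into two crossing lines and the arc-length integral produces a \emph{logarithmic van Hove singularity}; the precise statement to establish is $M_{2D}(\lambda)\le C_2\big(1+|\log|\lambda-2|\,|\big)$ for $\lambda\in(0,4)$, with $M_{2D}\equiv 0$ off $[0,4]$. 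This is the point where Lemma~3.4 of \cite{chen_rs} is corrected.

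With $M\le C_1$, an elementary bound for $\int_0^6\dd\lambda\,\big((\lambda-\nu)^2+\eps^2\big)^{-1/2}$ (which is maximal near $\nu=3$ and of size $\sim|\log\eps|$ for $\eps\le1/3$, uniformly in $\nu$) gives \eqref{intest3d}. For \eqref{intest2d} I would split the $\lambda$-integral at $|\lambda-2|=\eps$: on $\{|\lambda-2|\ge\eps\}$ one has $|\log|\lambda-2|\,|\le|\log\eps|+\log4$, so that piece is $\le C|\log\eps|\int_0^4\dd\lambda\,\big((\lambda-\mu)^2+\eps^2\big)^{-1/2}\le C|\log\eps|^2$; on $\{|\lambda-2|<\eps\}$ one uses $\big((\lambda-\mu)^2+\eps^2\big)^{-1/2}\le\eps^{-1}$ together with $\int_{|\lambda-2|<\eps}|\log|\lambda-2|\,|\,\dd\lambda\le C\eps|\log\eps|$, giving $\le C|\log\eps|$; adding these and the $O(|\log\eps|)$ contribution of the ``$1$'' in the bound for $M_{2D}$ yields $\le C|\log\eps|^2$, uniformly in $\mu$, hence in $p_3$ and $\alpha$. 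Combined with the reduction above, this proves both displays.

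The only non-routine step is the uniform control of the density of states, and within it the sharp rate $M_{2D}(\lambda)\lesssim|\log|\lambda-2|\,|$ rather than a worse (or merely almost-everywhere-finite) estimate. The hard part is therefore the careful localisation around the two saddle points of $e_{2D}$, the change of variables bringing $e_{2D}-2$ to $c_1x_1^2-c_2x_2^2$, and the explicit computation of the arc-length integral over the resulting hyperbolae (including uniformity as $\lambda\to 2$); everything after that is the coarea formula plus one-variable calculus.
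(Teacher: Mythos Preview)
Your approach is correct and essentially the same as the paper's: reduce to real $\alpha$, prove the density-of-states bounds $\int_{\bbT^2}\delta(e_{2D}-r)\le C(1+|\log|r||)$ and $\int_{\bbT^3}\delta(e-r)\le C$, then feed these into the resolvent integral; the paper does the last step via a dyadic level-set decomposition $A_n=\{2^n\eps\le|e_{2D}-\rho|\le 2^{n+1}\eps\}$ rather than the direct coarea integral you use, but these are equivalent. One notational slip: with the paper's convention $e_{2D}(\und p)=-\cos 2\pi p_1-\cos 2\pi p_2$ (so $e(p)=e_{2D}(\und p)+3-\cos 2\pi p_3$ and $\alpha(p_3)=\alpha-3+\cos 2\pi p_3$), the range is $[-2,2]$ and the van Hove saddle sits at $0$, not at $2$ as you write---adjust your formulae accordingly.
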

\begin{proof}
We first verify (\ref{intest2d}). By definition, for any $p_3$, $\alpha(p_3)\in[\alpha-4,\alpha-2]$ and $\alpha\in I$. Now for small $\eps>0$, $\left|e_{2D}(\und{p})-\alpha(p_3)-i\eps\right|>1-\eps$ whenever $\alpha\in I$ is not real, so that case is trivial. We thus concentrate on real $\alpha$, in which case $\rho=\alpha(p_3)\in[-5,5]$. For fixed $r\in\bbR$, we define the level set $s_r$ as 
\begin{equation}
s_r=\left\lbrace\und{p}\in\ttwo:e_{2D}(\und{p})=r\right\rbrace
\end{equation}
and then consider the quantity
\begin{equation}
\int_{\ttwo}\dd \und{p}\delta\left({e_{2D}(\und{p})-r}\right).
\end{equation}
This integral is not, as claimed in \cite{chen_rs}, bounded in $r$. Note that it is not equal to the arc length of $s_r$ (which is bounded), but rather the integral of $\left|\nabla e_{2D}\right|^{-1}$ along $s_r$, which diverges for the square-shaped $s_0$. Still, by controlling how close $s_r$ gets to the zeroes of $\left|\nabla e_{2D}\right|$ for given $r$, the estimate
\begin{equation}
\label{deltaint2d}
\int_{\ttwo}\dd \und{p}\delta\left({e_{2D}(\und{p})-r}\right)\leq\begin{cases}C\left(1+|\log|r||\right) &\mbox{if } r\in[-2,2]\\0 &\mbox{if } r\notin[-2,2]\end{cases}
\end{equation}
is immediate. Now, similar to \cite{chen_rs} let for real $\rho\in[-5,5]$ and all $n\in\bbN_0$ with $2^n\eps\leq 5$
\begin{equation}
\label{defan}
A_n(\eps,\rho)=\left\lbrace\und{p}\in\ttwo:2^n\eps\leq\left|e_{2D}(\und{p})-\rho\right|\leq2^{n+1}\eps\right\rbrace
\end{equation}
with two-dimensional Lebesgue measure
\begin{equation}
\begin{split}
\left|A_n(\eps,\rho)\right|&\leq\left(\int_{-2^{n+1}\eps}^{-2^n\eps}\dd \rho^{'}+\int_{2^{n}\eps}^{2^{n+1}\eps}\dd \rho^{'}\right)\int_{\ttwo}\dd \und{p}\delta\left({e_{2D}(\und{p})-\rho-\rho^{'}}\right)\\
&\leq C|\log\eps|2^n\eps,
\end{split}
\end{equation}
and by the same reasoning we have the estimate for the two-dimensional Lebesgue measure
\begin{equation}
\left|\left\lbrace\und{p}\in\ttwo:\left|e_{2D}(\und{p})-\rho\right|\leq\eps\right\rbrace\right|\leq C\eps|\log\eps|,
\end{equation}
with the constant $C$ here, as always, not depending on any of the parameters. Therefore (constantly redefining $C$), we have
\begin{equation}
\label{proofintest2d}
\begin{split}
\int_{\ttwo}\dd\und{p}\frac{1}{\left|e_{2D}(\und{p})-\rho-i\eps\right|}&\leq C\frac{\eps|\log\eps|}{\eps}+\sum_{n=0}^{C|\log\eps|}\frac{\left|A_n(\eps,\rho)\right|}{2^n\eps}\\
&\leq C|\log\eps|^2,
\end{split}
\end{equation}
as desired. For the three-dimensional case, we again only show the case of real $\alpha$, for which by (\ref{deltaint2d})
\begin{equation}
\label{deltaint3d}
\begin{split}
\int_{\tthree}\dd{p}\delta\left({e({p})-\alpha}\right)&=\int_0^1\dd p_3\int_{\ttwo}\dd\und{p}\delta\left(e_{2D}(\und{p})+3-\cos{2\pi p_3}-\alpha\right)\\
&\leq \tilde{C}\int_{-1}^1\dd k(1-k^2)^{-1/2}\left(1+\left|\log\left|3-k-\alpha\right|\right|\right)\\
&\leq C
\end{split}
\end{equation}
with $C$ independent of $\alpha$. Now repeating steps (\ref{defan}) through (\ref{proofintest2d}) yields (\ref{intest3d}).
\end{proof}
\begin{defin}
\label{bilinear}
To compactify notation, we define a bilinear extension of the $\eta$-scaled Wigner transform
\begin{equation}
W^{\eta}\left[\phi,\psi\right](X,V)=\sum_{\substack{y,z\in\bbZ^3\\y+z=2X/\eta}}\overline{\phi(y)}\psi(x)e^{2\pi i V\cdot(y-z)}
\end{equation}
for all $\phi,\psi\in\ell^2\left(\bbZ^3\right)$.
\end{defin}
\begin{lma}
\label{est_bilin}
For all $\eta>0$,
\begin{equation}
\left|\bra J,W^{\eta}\left[\phi,\psi\right]\ket\right|\leq\int_{\bbR^3}\dd\xi\sup_{v\in\tthree}\left|\hat{J}(\xi,v)\right|\Vert\phi\Vert_{\ell^2}\Vert\psi\Vert_{\ell^2}=C_J\Vert\phi\Vert_{\ell^2}\Vert\psi\Vert_{\ell^2}.
\end{equation}
\end{lma}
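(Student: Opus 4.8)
The plan is to pass to momentum space and reduce the statement to the Plancherel identity together with translation invariance of Lebesgue measure. \emph{First}, I would record the momentum-space representation of the bilinear pairing, that is, the two-state analogue of the second line of (\ref{defjw}),
\begin{equation*}
\bra J,W^{\eta}\left[\phi,\psi\right]\ket=\int_{\bbR^3\times\tthree}\dd\xi\,\dd v\,\overline{\widehat{J_\eta}(\xi,v)}\,\overline{\widehat{\phi}\left(v-\xi/2\right)}\,\widehat{\psi}\left(v+\xi/2\right),
\end{equation*}
where $\widehat{\phi}$ and $\widehat{\psi}$ are extended $\bbZ^3$-periodically to $\bbR^3$. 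This is obtained exactly as in the derivation of (\ref{defjw}): one inserts $\widehat{\phi}(k)=\sum_y\phi(y)e^{-2\pi i k\cdot y}$ and $\widehat{\psi}(k)=\sum_z\psi(z)e^{-2\pi i k\cdot z}$ into Definition \ref{bilinear} and carries out the sums over $y,z$ against the plane waves, the only change being that the product $\overline{\phi(y)}\phi(z)$ of the quadratic case becomes $\overline{\phi(y)}\psi(z)$.

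\emph{Second}, I would undo the rescaling by substituting $\xi=\eta\xi'$; using $\widehat{J_\eta}(\xi,v)=\eta^{-3}\widehat{J}(\xi/\eta,v)$ the prefactor $\eta^{-3}$ is absorbed by $\dd\xi=\eta^{3}\dd\xi'$, leaving
\begin{equation*}
\bra J,W^{\eta}\left[\phi,\psi\right]\ket=\int_{\bbR^3}\dd\xi'\int_{\tthree}\dd v\,\overline{\widehat{J}(\xi',v)}\,\overline{\widehat{\phi}\left(v-\tfrac{\eta\xi'}{2}\right)}\,\widehat{\psi}\left(v+\tfrac{\eta\xi'}{2}\right).
\end{equation*}
\emph{Third}, I would take absolute values under the integral sign, estimate $\left|\widehat{J}(\xi',v)\right|\leq\sup_{v\in\tthree}\left|\widehat{J}(\xi',v)\right|$, apply the Cauchy--Schwarz inequality to the $\dd v$ integral, and then exploit the $\bbZ^3$-periodicity of $\widehat{\phi}$ and $\widehat{\psi}$: the integral of $\left|\widehat{\phi}(\cdot-\eta\xi'/2)\right|^2$ over the fundamental cell $\tthree$ equals the integral over any translate of it, hence equals $\Vert\widehat{\phi}\Vert_{L^2(\tthree)}^2=\Vert\phi\Vert_{\ell^2}^2$ by Plancherel, with the analogous identity for $\psi$; crucially this value is independent of $\eta$ and $\xi'$. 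Pulling the resulting constants out of the $\xi'$-integral gives
\begin{equation*}
\left|\bra J,W^{\eta}\left[\phi,\psi\right]\ket\right|\leq\left(\int_{\bbR^3}\sup_{v\in\tthree}\left|\widehat{J}(\xi',v)\right|\dd\xi'\right)\Vert\phi\Vert_{\ell^2}\Vert\psi\Vert_{\ell^2},
\end{equation*}
which is the claimed bound, the $\xi'$-integral being finite because $J$ is a Schwartz function.

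No step here is genuinely difficult; the only points requiring a little care are the bookkeeping in establishing the momentum-space representation for the bilinear transform and the observation that the velocity shifts $v\mapsto v\mp\eta\xi'/2$ cost nothing thanks to periodicity, so that Plancherel applies on the unit cell with no $\eta$- or $\xi'$-dependent loss. This uniformity in $\eta$ is exactly what is needed for the repeated use of the lemma throughout Sections \ref{proof_main} and \ref{almostsure}.
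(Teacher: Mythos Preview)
Your proof is correct and follows exactly the same approach as the paper's: the paper also appeals directly to the Fourier representation of the bilinear pairing and then (implicitly) uses Cauchy--Schwarz in $v$ together with Plancherel and periodicity. You have simply written out the details of the change of variables $\xi=\eta\xi'$ and the shift-invariance argument that the paper leaves to the reader.
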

\begin{proof}
Directly from the Fourier representation
\begin{equation}
\bra J,W^{\eta}\left[\phi,\psi\right]\ket=\int_{\bbR^3\times\tthree}\dd\xi\dd v\widehat{J_\eta}(\xi,v)\overline{\widehat{\phi}(v-\xi/2)}\widehat{\psi}(v+\xi/2).
\end{equation}
\end{proof}
\label{app_basic}
\section{Two-resolvent integral}
\label{app_2res}
\begin{lma}
\label{lemma_2res}
There is a finite set $\caE=\left((0,0,0),\left(\frac{1}{2},\frac{1}{2},\frac{1}{2}\right)\right)\subset\tthree$ and a constant $C<\infty$ such that
\begin{equation}
\begin{split}
\sup_{\alpha,\beta\in I}&\int_{\tthree}\dd u \frac{1}{\left|e(u+p)-\alpha-i\eps\right|\left|e(u)-\beta-i\eps\right|}\\&\leq C\left( \eps^{-4/5}+\frac{\eps^{-2/5}}{\dist(p,\caE)}\right)\left|\log\eps\right|^2
\end{split}
\end{equation}
for all $p\in\bbT^3$ and $0<\eps\leq1/3$.
\end{lma}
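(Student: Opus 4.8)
The plan is to reduce the estimate — after discarding the non-real part of the contour — to a geometric analysis of how tangent the level surface $\{e=\beta\}$ can become to the shifted level surface $\{e(\cdot+p)=\alpha\}$, with the loss of transversality controlled by $\dist(p,\caE)$; the first term in the bound then comes from a ball of radius $\eps^{2/5}$ around the degenerate configurations, estimated trivially, and the second from the ``resolved'' complement.

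First I would discard the non-real part of $I$. On the three edges other than the top one, $\Re\alpha\in\{-1,7\}$ or $\Im\alpha=-1$, so $|e(u)-\alpha-i\eps|\ge 1$ on all of $\tthree$ since $e$ takes values in $[0,6]$; the same holds for $\beta$, and for the whole integral when $\alpha$ or $\beta$ lies more than $O(\eps)$ outside $[0,6]$. In each such case one resolvent is bounded below and the claim follows from the single-resolvent estimate \eqref{intest3d}. So assume $\alpha,\beta$ real and in range. Next decompose $\tthree$ into (a) the union of $O(1)$ balls of radius $\rho=\eps^{2/5}$ centred at the finitely many critical points of $e$, the finitely many critical points of $e(\cdot+p)$, and the tangency points $u_\ast$ (those $u$ with $e(u)=\beta$, $e(u+p)=\alpha$ and $\nabla e(u)\parallel\nabla e(u+p)$), and (b) the complement. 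On each ball in (a) bound both resolvents trivially by $\eps^{-1}$, for a contribution $\lesssim\eps^{-2}\rho^{3}=\eps^{-4/5}$; this accounts for the first term, and it dominates whenever $\dist(p,\caE)\lesssim\eps^{2/5}$.

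For region (b) — and to improve on $\eps^{-4/5}$ when $p$ is away from $\caE$ — I would foliate $\tthree$ by the level sets of $e$ via the co-area formula: the integral becomes $\int\frac{dr}{|r-\beta-i\eps|}\,\big(\int_{\{e=r\}}\frac{d\sigma_r(u)}{|\nabla e(u)|\,|e(u+p)-\alpha-i\eps|}\big)$, the $r$-integral over $[0,6]$ giving a factor $|\log\eps|$ and the microcanonical weight $d\sigma_r/|\nabla e|$ having total mass $\le C$ (this is precisely \eqref{deltaint3d}). There remains a one-resolvent integral over a $2$-surface, singular along the critical set of $u\mapsto e(u+p)$ restricted to $\{e=r\}$ — i.e. the tangency points. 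Away from them (and from the critical points of $e$) the tangential gradient of $e(\cdot+p)$ is bounded below, and foliating the surface by the level sets of that function gives an $O(|\log\eps|^2)$ contribution. Near a tangency point $u_\ast$, writing both surfaces as graphs over their common tangent plane, the relevant quantity is the defect $\psi$ (the difference of the two graphs), which vanishes with its gradient at $u_\ast$; its Hessian is the difference of the two second fundamental forms, and this degenerates precisely as $p\to\caE$: for $p\to(0,0,0)$ the surfaces coincide and $\psi\equiv0$, and for $p\to(\tfrac12,\tfrac12,\tfrac12)$ the same via $e(k+(\tfrac12,\tfrac12,\tfrac12))=6-e(k)$, while away from $\caE$ one has $|\psi(x)|\gtrsim\dist(p,\caE)\,|x|^2+|x|^{K}$ for a finite tangency order $K$ determined by the lattice geometry. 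Restricting to the exterior of the $\rho$-ball (where $|\psi|\gtrsim\dist(p,\caE)\,\rho^2$ or $|\psi|\gtrsim|x|^{K}$) and integrating $1/\max(\eps,|\psi|)$ then yields the second term $\eps^{-2/5}|\log\eps|^2/\dist(p,\caE)$, the $\eps^{-2/5}$ comfortably absorbing the possible $K>2$ degeneracy and the residual logarithms. The one extra point is the cone-point estimate needed when a tangency point sits on a saddle of $e$: there $d\sigma_r/|\nabla e|$ is still bounded but the ensuing two-resolvent integral must be shown to be $\lesssim\eps^{-1/2}|\log\eps|^2$, which is again absorbed by $\eps^{-4/5}|\log\eps|^2$.

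I expect the defect estimate — the uniform lower bound on $\psi$ — to be the main obstacle, and it is exactly the point alluded to in the introduction as the ``more intricate geometry of the level surfaces''. For the continuum dispersion $k^2$ there is nothing to prove, since level surfaces are round spheres, tangent to a translate only when concentric; but the lattice surfaces $\{e=r\}$ have Gaussian curvature vanishing along curves and develop conical points over critical values, so nontrivial finite-order tangencies between a surface and its $p$-translate occur for $p\notin\caE$. One must verify that the two shifts in $\caE$ are the only ones for which the tangency order is unbounded, and obtain the lower bound on $\psi$ uniformly in $r$, $\alpha$ and $\beta$. Everything else is the dyadic/co-area bookkeeping and the single-resolvent input already established in Appendix~\ref{app_basic}.
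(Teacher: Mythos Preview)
Your strategy is a plausible geometric route, but it diverges substantially from the paper's argument and leaves exactly the hardest step unproved. You reduce everything to the ``defect estimate'' $|\psi(x)|\gtrsim\dist(p,\caE)\,|x|^2+|x|^K$ near a tangency point, and then say that ``one must verify'' that the tangency order is bounded away from $\caE$ and that the bound is uniform in $r,\alpha,\beta$. That is precisely the content of the lemma; a sketch that ends at ``one must verify'' the key inequality is not yet a proof. There is also a structural issue hidden in your step (a): the set where $\nabla e(u)\parallel\nabla e(u+p)$ is \emph{not} a finite collection of points in $\tthree$ --- it is a codimension-two set (curves) for generic $p$, so ``finitely many balls of radius $\eps^{2/5}$'' does not cover it. Only after intersecting with the two level-set constraints $e(u)=\beta$, $e(u+p)=\alpha$ do you get isolated points, and then you need the number and the Morse data to be uniform in $(\alpha,\beta,p)$, which is again the unproved defect estimate.

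The paper sidesteps all of this by an explicit, essentially algebraic computation. It symmetrises via $2k=p$, freezes one coordinate $u_3$, and on $\ttwo$ changes variables from $\underline{u}$ to the pair of level-set values $(\rho,\tilde\rho)=(e_{2D}(\underline{u}+\underline{k}),e_{2D}(\underline{u}-\underline{k}))$. The Jacobian of this map factors exactly as
\[
J(\underline{u})=4\pi^2\begin{pmatrix}\sin 2\pi(u_1+u_2)\\\sin 2\pi(u_1-u_2)\end{pmatrix}\cdot\begin{pmatrix}\sin 2\pi(k_1-k_2)\\-\sin 2\pi(k_1+k_2)\end{pmatrix},
\]
so the transversality defect is a product of an explicit $\underline{u}$-dependent vector and a $\underline{k}$-dependent vector of length $\gtrsim\dist(\underline{k},\tilde\caE)$. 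The bad set $A_\delta=\{|J|<\delta\theta\}$ then has Lebesgue measure $\lesssim\delta|\log\delta|$ \emph{independently of $u_3,k_3$}, and is handled by Cauchy--Schwarz plus a one-resolvent $L^1$/$L^\infty$ split in the $u_3$-direction; on $A_\delta^c$ the $(\rho,\tilde\rho)$ integral is a product of two $|\log\eps|$'s. Finally the paper cycles the distinguished coordinate $u_3$ through all three directions, which is what shrinks the exceptional set in $\underline{k}$ down to $\widehat\caE=\{2k\in\caE\}$ and produces the stated bound after optimising $\delta=h^2=\eps^{2/5}$. No curvature or tangency-order analysis is needed: the special form of the cosine dispersion does all the work. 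Your co-area approach would be the right one for a generic dispersion, but here it trades an explicit identity for a hard uniform Morse lemma that you have not supplied.
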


\begin{proof}
To have a more symmetric integrand, we choose $k\in\tthree$ such that $2k=p$\\
$(\mod\tthree)$ and shift the integrand by $k$, so we want to estimate
\begin{equation}
\begin{split}
\sup_{\alpha,\beta\in I}&\int_{\tthree}\dd u \frac{1}{\left|e(u+k)-\alpha-i\eps\right|\left|e(u-k)-\beta-i\eps\right|}
\end{split}
\end{equation}
Decomposing three-dimensional vectors like $p=\left(\und{p},p_3\right)=\left(p_1,p_2,p_3\right)$ we have
\begin{equation}
\begin{split}
&\int_{\tthree}\dd u \frac{1}{\left|e(u+k)-\alpha-i\eps\right|\left|e(u-k)-\beta-i\eps\right|}\\
&=\int_0^1\dd u_3\int_{\ttwo}\dd\und{u}\frac{1}{\left|e_{2D}(\und{u}+\und{k})-\alpha(u_3+k_3)-i\eps\right|\left|e_{2D}(\und{u}-\und{k})-\beta(u_3-k_3)-i\eps\right|}
\end{split}
\end{equation}
with the shorthand
\begin{equation}
e_{2D}(\und{p})=-\sum_{j=1}^2\cos{2\pi p_j}
\end{equation}
and
\begin{equation}
\begin{split}
\alpha(p_3)&=\alpha-3+\cos{2\pi p_3}\\
\beta(p_3)&=\beta-3+\cos{2\pi p_3}.
\end{split}
\end{equation}
Next, we want to transform the integral over $\ttwo$ by writing $\und{u}$ as the intersection of two level sets $e_{2D}(\und{u}+\und{k})=\rho$ and $e_{2D}(\und{u}-\und{k})=\tilde{\rho}$. By elementary algebra, for given $\rho,\tilde{\rho}\in[-2,2]$ we will at most find four such intersections $\und{u}$. The only exception are the degenerate cases in which any of the following holds
\begin{align}
\label{degenerate_intersection1}
2\und{k}&=0\mod{\ttwo}& &\mbox{and} & \rho&=\tilde{\rho},\\
2\und{k}&=\left(\frac{1}{2},\frac{1}{2}\right)\mod{\ttwo}& &\mbox{and} & \rho&=-\tilde{\rho},\\
2k_1&=2k_2\mod[0,1)& &\mbox{and} & \rho&=\tilde{\rho}=0\hspace{5mm}\mbox{ or }\\
\label{degenerate_intersection4}
2k_1&=-2k_2\mod[0,1)& &\mbox{and} & \rho&=\tilde{\rho}=0.
\end{align}
The transformation has the Jacobian determinant
\begin{equation}
\begin{split}
J(\und{u})&=4\pi^2\det\begin{pmatrix} \sin(2\pi(u_1+k_1)) & \sin(2\pi(u_1-k_1))\\ \sin(2\pi(u_2+k_2)) & \sin(2\pi(u_2-k_2))\end{pmatrix}\\
&=8\pi^2\sin(2\pi k_1)\cos(2\pi k_2)\cos(2\pi u_1)\sin(2\pi u_2)\\&\hspace{1cm}-8\pi^2\sin(2\pi k_2)\cos(2\pi k_1)\cos(2\pi u_2)\sin(2\pi u_1)\\
&=4\pi^2\begin{pmatrix}\sin(2\pi(u_1+u_2))\\\sin(2\pi(u_1-u_2))\end{pmatrix}\cdot\begin{pmatrix}\sin(2\pi(k_1-k_2))\\-\sin(2\pi(k_1+k_2))\end{pmatrix}.
\end{split}
\end{equation}
We now have to control the cases where $|J|$ is small. Note that this will automatically exclude the degeneracy from (\ref{degenerate_intersection1}-\ref{degenerate_intersection4}) as $J(\und{u})=0$ there. With 
\begin{equation}
\tilde{\caE}=\left\lbrace\und{k}\in\ttwo: k_1\in\left\lbrace0,\frac{1}{4},\frac{1}{2},\frac{3}{4}\right\rbrace; k_2\in\left\lbrace k_1,k_1+\frac{1}{2}\right\rbrace\right\rbrace
\end{equation}
the set of zeroes of the second vector in the scalar product, there is a normalized vector $\und{n}=(n_1,n_2)$ depending on $\und{k}$ such that
\begin{equation}
\begin{pmatrix}\sin(2\pi(k_1-k_2))\\-\sin(2\pi(k_1+k_2))\end{pmatrix}=\theta\und{n}
\end{equation}
with
\begin{equation}
\theta\geq c\,\dist\left(\und{k},\tilde{\caE}\right)
\end{equation}
for some positive constant $c$. For a $0<\delta\ll1$ to be optimized later the set
\begin{equation}
A_\delta=\left\lbrace\und{u}\in\ttwo:\left|\begin{pmatrix}\sin(2\pi(u_1+u_2))\\\sin(2\pi(u_1-u_2))\end{pmatrix}\cdot\und{n}\right|<\delta\right\rbrace
\end{equation}
has a two-dimensional Lebesgue measure $\left|A_\delta\right|<C\delta|\log\delta|$ for some finite constant $C$. Keeping in mind that the location of $A_\delta$ depends on $k_1$ and $k_2$, but not on $u_3$ and $k_3$, we first estimate the integral
\begin{equation}
\label{cauchyfactor}
\int_{\tthree}\dd u \frac{1_{A_\delta}(\und{u})}{\left|e(u+k)-\alpha-i\eps\right|^2}
\end{equation}
and a similar one for $-k$ and $\beta$ instead of $k$ and $\alpha$. For $0<{h}\ll1$, one only has $\left|\frac{\partial}{\partial u_3}\alpha(u_3+k_3)\right|<{h}$ for $u_3\in K_{h}$ with one-dimensional Lebesgue measure $\left| K_{h} \right|<C{h}$, and thus
\begin{equation}
\begin{split}
(\ref{cauchyfactor})&\leq\int_{K_{h}}\dd u_3\int_{\ttwo}\dd\und{u}\frac{1}{\left|e_{2D}(\und{u}+\und{k})-\alpha(u_3+k_3)-i\eps\right|^2}\\&\hspace{2cm}+\int_{A_\delta}\dd\und{u}\int_{K_{h}^c}\dd u_3\frac{1}{\left|e_{2D}(\und{u}+\und{k})-\alpha(u_3+k_3)-i\eps\right|^2}\\
&\leq C\left(\frac{{h}}{\eps}|\log{\eps}|^2+\frac{\delta|\log\delta|}{\eps{h}}\right),
\end{split}
\end{equation}
where we used the $L^1$ estimate (\ref{intest2d}) and the $L^\infty$ estimate $1/\eps$ for the resolvents in the first summand. By Cauchy-Schwarz, 
\begin{equation}
\label{adeltaint}
\int_{\tthree}\dd u \frac{1_{A_\delta}(\und{u})}{\left|e(u+k)-\alpha-i\eps\right|\left|e(u-k)-\beta-i\eps\right|}\leq C\left(\frac{{h}}{\eps}|\log{\eps}|^2+\frac{\delta|\log\delta|}{\eps{h}}\right)
\end{equation}
as well.

Now we are ready for the integral transform. Let a $\delta>0$ and $\theta>0$ be given (our estimate will yield infinity for $\und{k}\in\tilde{\caE}$). Then for $\und{u}\in A_\delta^c=\ttwo\setminus A_\delta$, $|J(\und{u})|>c\theta\delta$ for some constant $c>0$, and we can locally transform the $\dd\und{u}$ integral to a $\dd\rho\dd\tilde{\rho}$ integral. As $A_\delta^c$ is compact, we can cover it by finitely many local coordinate patches, and as remarked above, the same coordinates $(\rho,\tilde{\rho})$ will be asigned to at most four different points $\und{u}$ in this procedure. Thus
\begin{equation}
\begin{split}
\int_{A_\delta^c}&\dd\und{u}\frac{1}{\left|e_{2D}(\und{u}+\und{k})-\alpha(u_3+k_3)-i\eps\right|\left|e_{2D}(\und{u}-\und{k})-\beta(u_3-k_3)-i\eps\right|}\\
&\leq\frac{\tilde{C}}{\delta\theta}\int_{-2}^2\dd\rho\int_{-2}^2\dd\tilde{\rho}\frac{1}{\left|\rho-\alpha(u_3+k_3)-i\eps\right|\left|\tilde{\rho}-\beta(u_3-k_3)-i\eps\right|}\\
&\leq\frac{C}{\delta\theta}\left|\log\eps\right|^2.
\end{split}
\end{equation}

Integrating over $u_3$ and collecting all estimates, we arrive at
\begin{equation}
\begin{split}
\sup_{\alpha,\beta\in I}&\int_{\tthree}\dd u \frac{1}{\left|e(u+k)-\alpha-i\eps\right|\left|e(u-k)-\beta-i\eps\right|}\\&\leq C\left(\frac{{h}}{\eps}|\log{\eps}|^2+\frac{\delta|\log\delta|}{\eps{h}}+\frac{1}{\delta\dist(\und{k},\tilde{\caE})}\left|\log\eps\right|^2\right).
\end{split}
\end{equation}

Singling out the $k_3$ component was an arbitrary choice, so we even obtain
\begin{equation}
\label{est_hdeltaeps}
\begin{split}
\sup_{\alpha,\beta\in I}&\int_{\tthree}\dd u \frac{1}{\left|e(u+k)-\alpha-i\eps\right|\left|e(u-k)-\beta-i\eps\right|}\\&\leq C\left(\frac{{h}}{\eps}|\log{\eps}|^2+\frac{\delta|\log\delta|}{\eps{h}}+\frac{1}{\delta\dist({k},\widehat{\caE})}\left|\log\eps\right|^2\right).
\end{split}
\end{equation}
with 
\begin{equation}
\widehat{\caE}=\left\lbrace k\in\tthree : \left(k_i,k_j\right)\in\tilde{\caE}\forall i,j\right\rbrace,
\end{equation}
and note that 
\begin{equation}
\widehat{\caE}=\left\lbrace k\in\tthree : 2k\in\caE\right\rbrace.
\end{equation}
Redefining $C$ and optimizing for later application, we set $\delta={h}^2=\eps^{2/5}$ and arrive at
\begin{equation}
\begin{split}
\sup_{\alpha,\beta\in I}&\int_{\tthree}\dd u \frac{1}{\left|e(u+k)-\alpha-i\eps\right|\left|e(u-k)-\beta-i\eps\right|}\\&\leq C\left( \eps^{-4/5}+\frac{\eps^{-2/5}}{\dist({2k},{\caE})}\right)\left|\log\eps\right|^2.
\end{split}
\end{equation}
\end{proof}
This two-resolvent estimate also allows for a simpler proof of a slightly sharper estimate for the three-resolvent-integral (\ref{three_res_int}). We still continue to use the original estimate from \cite{chen_rs} in the proof of the main theorem as any estimate of the kind $\eps^{-1+\delta}$ suffices.
\begin{cor}
There is a constant $C<\infty$ such that for all $0<\eps\leq\frac{1}{3}$,
\begin{equation}
\label{three_res_improved}
\sup_{\gamma_i\in I}\sup_{k\in\tthree}\int_{\left(\tthree\right)^2}\frac{\dd p\dd q}{\left|e(p)-\gamma_1-i\eps\right|\left|e(q)-\gamma_2-i\eps\right|\left|e(p\pm q+k)-\gamma_3-i\eps\right|}\leq C\eps^{-7/9}\left|\log\eps\right|^4.
\end{equation}
\end{cor}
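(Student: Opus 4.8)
The plan is to not use Lemma~\ref{lemma_2res} as a black box — doing so only reproduces the $\eps^{-4/5}$ rate — but to re-run its proof with the parameters $h,\delta$ kept free and the third resolvent integrated against the resulting bound. Fix $\gamma_1,\gamma_2,\gamma_3\in I$ and $k\in\tthree$, and write $T$ for the left-hand side of (\ref{three_res_improved}). Exactly as in the proofs in Appendix~\ref{app_2res} and Appendix~\ref{app_basic}, the only nontrivial case is that all $\gamma_i$ are real, and then each resolvent satisfies the $L^\infty$ bound $1/\eps$ and the $L^1$ bound $C|\log\eps|$ from (\ref{intest3d}); so assume this. Integrating over $q$ first and performing a linear substitution in $q$ to symmetrise (using that $e$ is even), the inner integral
\[
Q(p)=\int_{\tthree}\dd q\,\frac{1}{\left|e(q)-\gamma_2-i\eps\right|\,\left|e(p\pm q+k)-\gamma_3-i\eps\right|}
\]
is precisely of the type controlled inside the proof of Lemma~\ref{lemma_2res}. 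The bound (\ref{est_hdeltaeps}), valid for every choice of $0<h,\delta\ll1$ before one specialises to $\delta=h^2=\eps^{2/5}$, gives
\[
Q(p)\le C\left(\frac{h}{\eps}\left|\log\eps\right|^2+\frac{\delta\left|\log\delta\right|}{\eps h}+\frac{\left|\log\eps\right|^2}{\delta\,\dist(p+k,\caE)}\right),
\]
because the first two contributions (the $K_h$ and $A_\delta$ pieces) are independent of the shift vector, while the shift vector here is an affine function of $p$ whose singular locus is, up to the dilation $\widehat\caE=\{k:2k\in\caE\}$ recorded in the proof, a translate of $\caE$.

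Next I integrate this estimate against $f(p)=\left|e(p)-\gamma_1-i\eps\right|^{-1}$. For the two shift-independent terms I simply use $\|f\|_{L^1(\tthree)}\le C|\log\eps|$. For the third term I need a bound on $\int_{\tthree}\dd p\,f(p)/\dist(p+k,\caE)$, which I obtain by splitting at a scale $\sigma$: on $\{\dist(p+k,\caE)\ge\sigma\}$ the integral is at most $\sigma^{-1}\|f\|_{L^1}\le C\sigma^{-1}|\log\eps|$, and on $\{\dist(p+k,\caE)<\sigma\}$ it is at most $\|f\|_{L^\infty}\int_{\dist<\sigma}\dd p/\dist\le C\sigma^{2}/\eps$, since $\caE$ is finite and $1/|x|$ is locally integrable in three dimensions. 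Choosing $\sigma=(\eps|\log\eps|)^{1/3}$ yields $\int_{\tthree}\dd p\,f(p)/\dist(p+k,\caE)\le C\eps^{-1/3}|\log\eps|$, uniformly in $k$ (translation invariance of Lebesgue measure on $\tthree$) and in $\gamma_1$. Hence, absorbing $|\log\delta|$ into $|\log\eps|$ once $\delta$ is a power of $\eps$,
\[
T\le C\left|\log\eps\right|^{3}\left(\frac{h}{\eps}+\frac{\delta}{\eps h}+\frac{\eps^{-1/3}}{\delta}\right).
\]
Optimising, $h=\sqrt{\delta}$ balances the first two summands (both become $\sqrt{\delta}/\eps$), and then $\delta=\eps^{4/9}$ balances these against the third, giving $T\le C\eps^{-7/9}|\log\eps|^{3}\le C\eps^{-7/9}|\log\eps|^{4}$, uniformly in $\gamma_i\in I$ and $k\in\tthree$, which is (\ref{three_res_improved}).

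The only point needing care is the bookkeeping in the middle step: one must verify that the singular set in the refined bound for $Q(p)$ really is a $k$-translate of a fixed finite set (so that the $p$-integral of $1/\dist(p+k,\caE)$ is independent of $k$), and that the crude bound $\|f\|_{L^\infty}\le1/\eps$ enters only on the small ball $\{\dist<\sigma\}$, where it cannot be improved, so that no extra logarithmic factors are lost. Apart from that, the argument is a routine repetition of the estimates already carried out in Appendices~\ref{app_basic} and~\ref{app_2res}.
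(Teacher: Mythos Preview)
Your argument is correct and follows the paper's strategy closely: integrate out $q$ first via the two-resolvent estimate (\ref{est_hdeltaeps}) with the eventual choice $\delta=h^2=\eps^{4/9}$, then handle the remaining $p$-integral. The one genuine difference is in the $p$-step. The paper combines the bound $Q(p)\le C(\eps^{-7/9}+\eps^{-4/9}/\dist(p+k,\caE))|\log\eps|^2$ with the trivial cap $Q(p)\le C|\log\eps|/\eps$ and applies a H\"older inequality with exponents $(3,3/2)$, which requires the auxiliary $L^{3/2}$ resolvent estimate $\bigl(\int_{\tthree}|e(p)-\gamma_1-i\eps|^{-3/2}\dd p\bigr)^{2/3}\le C\eps^{-1/3}|\log\eps|$. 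You instead split at a threshold $\sigma$ in $\dist(p+k,\caE)$ and use only the $L^1$ and $L^\infty$ bounds on the first resolvent; since $\caE$ is finite and $|x|^{-1}$ is locally integrable in three dimensions, this yields the same $\eps^{-1/3}$ factor without invoking H\"older or the $L^{3/2}$ estimate. Both routes land on $\eps^{-7/9}|\log\eps|^3$ (hence $\le C\eps^{-7/9}|\log\eps|^4$); yours is slightly more elementary, the paper's slightly more compact.
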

\begin{proof}
Without loss of generality, choose the sign $p+q+k$ on the left side of (\ref{three_res_improved}) and first take the $q$ integral. We have the estimates
\begin{equation}
\int_{\tthree}\frac{\dd q}{\left|e(q)-\gamma_2-i\eps\right|\left|e(p+ q+k)-\gamma_3-i\eps\right|}\leq\begin{cases}C&\left(\eps^{-7/9}+\frac{\eps^{-4/9}}{\dist(p+k,\caE)}\right)|\log\eps|^2 \\C&\frac{|\log\eps|}{\eps}, \end{cases}
\end{equation}
with the first line resulting from plugging $\delta=h^2=\eps^{4/9}$ into (\ref{est_hdeltaeps}), and the second line being the trivial estimate. To apply a H\"{o}lder inequality, note that
\begin{equation}
\left\Vert\min\left\lbrace\frac{|\log\eps|}{\eps},\frac{\eps^{-4/9}|\log\eps|^2}{\dist(\cdot+k,\caE)}\right\rbrace\right\Vert_{L^3\left(\tthree\right)}\leq C\eps^{-4/9}|\log\eps|^3,
\end{equation}
while
\begin{equation}
\left(\int_{\tthree}\frac{\dd p}{\left|e(p)-\gamma_1-i\eps\right|^{3/2}}\right)^{\frac{2}{3}}\leq C\eps^{-1/3}|\log\eps|.
\end{equation}
Collecting all estimates, we obtain
\begin{equation}
\begin{split}
\sup_{\gamma_i\in I}\sup_{k\in\tthree}&\int_{\left(\tthree\right)^2}\frac{\dd p\dd q}{\left|e(p)-\gamma_1-i\eps\right|\left|e(q)-\gamma_2-i\eps\right|\left|e(p\pm q+k)-\gamma_3-i\eps\right|}\\&\leq C\left(\eps^{-7/9}\left|\log\eps\right|^3+\eps^{-7/9}\left|\log\eps\right|^4\right),
\end{split}
\end{equation}
and redefining $C$ proves the corollary.
\end{proof}

\bibliography{references_variance}{}
\bibliographystyle{plain}

\end{document}